\begin{document}

\frontmatter
\begin{titlepage}

\begin{center}

\Huge
\textbf{Analysis of blood flow in one dimensional elastic artery using Navier-Stokes conservation laws} \\

\vspace*{2cm}
\begin{figure}[H]
\begin{center}
\includegraphics[width=4cm]{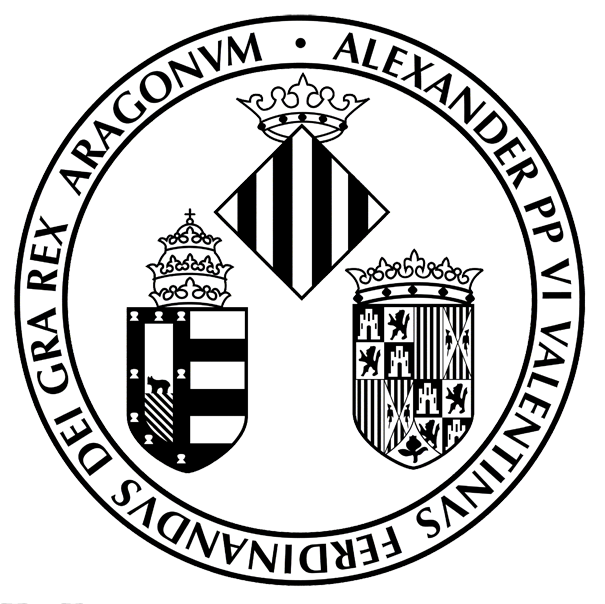}\qquad\qquad\qquad
\includegraphics[width=4cm]{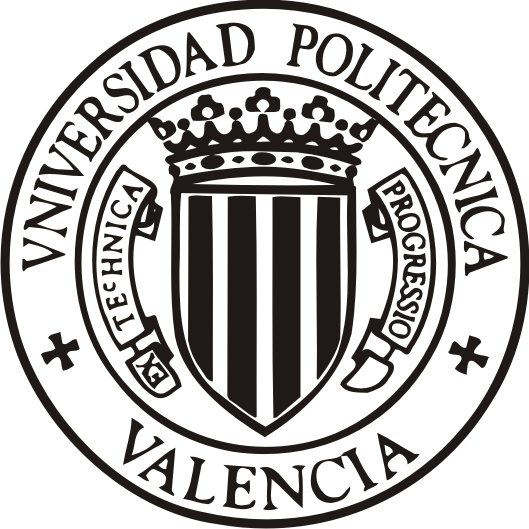}
  \end{center}
\end{figure}
\vspace*{1cm}
\begin{figure}[H]
    \centering
\includegraphics[width=\linewidth]{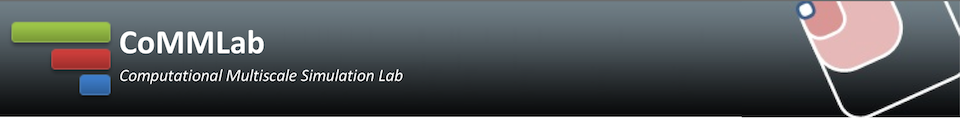}
\end{figure}
\vspace*{1cm}

\begin{Large}
Student: \textbf{Crist\'{o}bal Rodero G\'{o}mez}\textsuperscript{1,3}\\

\vspace*{0.3cm}
\hspace*{-1.6cm} Tutors: \textbf{J. Alberto Conejero}\textsuperscript{2}\\
\qquad\qquad \textbf{Ignacio Garc\'ia-Fern\'andez}\textsuperscript{3}

\end{Large}

\vfill
\end{center}

\large
\raggedright
\textsuperscript{1}Facultat de Matemàtiques, Universitat de València\\
\textsuperscript{2}IUMPA, Universitat Politècnica de València\\
\textsuperscript{3}CoMMLab, Departament d'Informàtica, Universitat de València
\end{titlepage}

\null\vfill
\textit{``Remember that all models are wrong; the practical question is how wrong do they have to be to not be useful.''}

\begin{flushright}
George Box, statistician.
\end{flushright}

\vfill\vfill\vfill\vfill\vfill\vfill\null
\textit{``Applied math pattern: In principle you could just ... but here's why that won't work in practice, and what you need to do instead.''}

\begin{flushright}
@AnalysisFact, tweet from Aug. 14th, 2017.
\end{flushright}

\vfill\vfill\vfill\vfill\vfill\vfill\null

\clearpage

\chapter*{Abstract}
\addtocontents{toc}{\vspace{1em}}

In the last years, medical computer simulation has seen a great growth in several scientific branches, from modelling to numerical methods, going through computer science. The main goals of this incipient discipline are testing hypotheses before an intervention, or see what effect could have a drug in the system before actually taking it, among others. \\
In this work we deduce from the most basic physical principles a one dimensional model for the simulation of blood flow in elastic arteries. We will provide some historical background, as well as a brief state of the art of these models. We will also study from a calculus point of view the equations of the model obtained, achieving an original result for the formation of shock waves in compliant vessels. Afterwards we will make some numerical simulations using Galerkin Discontinuous Finite Element Method. Since this is actually a family of methods, we will motivate and detail the elections and the implementation strategies.

\clearpage  

\tableofcontents
\listoffigures
\nomenclature{\(x\)}{Spatial coordinate.}
\nomenclature{\(t\)}{Temporal coordinate.}
\nomenclature{\(A\)}{Amplitude of the artery.}
\nomenclature{\(u\)}{Velocity of the blood flow.}
\nomenclature{\(p\)}{Pressure of the flow.}
\nomenclature{\(Q\)}{Flow flux.}
\nomenclature{\(\rho\)}{Constant blood density.}
\nomenclature{\(\mu\)}{Constant blood viscosity.}
\nomenclature{\(f\)}{Friction of the blood with the artery walls.}
\nomenclature{\(\alpha\)}{Coriolis coefficient.}
\nomenclature{\(p_0,p_{\textrm{ext}},p_{\textrm{ref}}\)}{Initial, external and reference pressure.}
\nomenclature{\(\sigma\)}{Stress tensor.}
\nomenclature{\(\varepsilon\)}{Strain tensor.}
\nomenclature{\(\nu\)}{Poisson's ratio.}
\nomenclature{\(E\)}{Young's modulus.}
\nomenclature{\(R(\cdot,\cdot)\)}{Radius of the artery.}
\nomenclature{\(R_0(\cdot,\cdot)\)}{Rest radius of the artery.}
\nomenclature{\(h_0\)}{Arterial wall thickness.}
\nomenclature{\(\beta\)}{Stiffness coefficient.}

\printnomenclature\chaptermark{}
\mainmatter

\chapter{Introduction and brief historical background}\chaptermark{Introduction}\label{chap:intro}
In the last years, medical computer simulation has seen a great growth in several scientific branches, from modelling to numerical methods, going through computer science. The main goals of this incipient discipline are testing hypotheses before an intervention, or see what effect could have a drug in the system before actually taking it, among others. In this chapter we will name some of the most important contributors to this discipline. We will focus on the physics' (specially mechanics') point of view, and for historical reasons, up to \(20\)\textsuperscript{th} century. In subsequent chapters the most recent computational scientific progress will be presented.

\section{Historical review}

For obvious reasons, one of the most important systems to be simulated is the cardiovascular system. Here, medicine works together with physics: electrophysiology (the nerve impulses that stimulates the heart), elasticity theory (in the movement of the heart, the arteries\ldots) or fluid dynamics (the blood's behaviour). The first models are often simplified versions of reality, neglecting some effects or reducing dimensionality. Taking this into account, the modelling of human arterial system can be traced back to Euler in \(1775\), who submitted an essay as an entry in a prize competition set by the Academy of Sciences in Dijon~\cite{euler75}. He derived a one-dimensional simplification using partial differential equations, arriving to the equations of conservation of mass and momentum --- they will be explained later --- in a distensible tube. Euler posited some rather unrealistic constitutive laws (tube laws) for arteries and unsuccessfully tried to solve the equations. He tried to solve the problem as he had done for rigid tubes: by reducing them to a single equation that could be solved by integration. Quoting, ``\textit{In motu igitur sanguinis explicando easdem offendimus insuperabiles difficultates[\ldots]}'', loosely translated as ``On the explanation of the blood motion we stumble upon the same insuperable difficulties''. This letter was lost over a century, being discovered and published by the Euler Opera postuma project in \(1862\).

Now, according to~\cite{parker09}, the next major event in the history of quantitative haemodynamics is the lecture delivered to the Royal Society in \(1808\) by Young~\cite{young09}. In the lecture, he stated the correct formula for the wave speed in an artery but gave no derivation of it. In an associated paper, he does give a derivation based on an analogy to Newton's derivation of the speed of sound in a compressible gas, altogether with some numerical guesses~\cite{young08}.

The development by Poiseuille (\(1799\)–-\(1869\)) of his law of flow in tubes is the next landmark in arterial mechanics. Because of its simplicity, this law has become the benchmark against which all other flows in tubes are
compared, although in arteries is quite difficult to observe it. Despite its shortcomings, it is cited by many medical and physiological textbooks as the law that governs flow in the whole of vasculature. Poiseuille, who trained as a physician, conducted a very thorough investigation of flow in capillary tubes motivated by his studies of the mesenteric microcirculation of the frog. All his experiments and conclusions were finally approved for publication in \(1846\)~\cite{poiseuille44}. It seems that Stokes also derived Poiseuille's law from the Navier-Stokes equation as early as \(1845\) but did not publish the work because he was unsure about the validity of the no-slip condition at the tube walls~\cite{stokes45}.

The question of the speed of travel of waves in elastic tubes was studied theoretically and experimentally by Weber and Weber and published in 1866~\cite{weber25,weber66}.

Riemann (\(1826\)--\(1866\)) did not work on arterial mechanics or waves in elastic tubes, but he did make an important contribution to the subject when he published a general solution for hyperbolic systems of partial differential equations in \(1853\)~\cite{riemann53}. Briefly, his work provides a general solution for a whole class of linear and nonlinear partial differential equations by observing that along directions defined by the eigenvalues of the matrix of coefficients of the differential terms, the partial differential equations reduce to ordinary differential equations. This method (the method of characteristics) will be explained in a more detailed way in subsequent chapters. The first application of the theory to arterial flows is probably the work of Lambert, who applied the theory to arteries using experimental measurements of the radius of the artery as a function of pressure~\cite{lambert58}. The approach was developed by Skalak~\cite{skalak72} and most completely by Anliker and his colleagues who mounted a systematic study of the different elements of the vascular system with the goal of synthesising a complete description of the arterial system using the method of characteristics~\cite{anliker71,hisland73}~\cite{stettler81,stettler81b}.

In \(1877\)--\(1878\), two more important works on the wave speed in elastic tubes were published. Moens (\(1846\)--\(1891\))~\cite{moens77} published a very careful experimental paper on wave speed in arteries and Korteweg (\(1848\)--\(1941\))~\cite{korteweg78} published a theoretical study of the wave speed. Korteweg’s analysis showed that the wave speed was determined both by the elasticity of the tube wall and the compressibility of the fluid.

Frank (\(1865\)--\(1944\)) was another important contributor in quantitative physiology. One of his major contributions is a series of three papers. In~\cite{frank91}, he introduces the theory of waves in arteries. In~\cite{frank20}, he correctly derives the wave speed in terms of the elasticity. Finally, in~\cite{frank26}, he considers the effect of viscosity, the motion of the wall and the energy of the pulse wave before turning to a number of examples of special cases. These examples include the use of Fourier analysis and probably the first treatment of the reflections of the pulse wave, including the reflection and transmission coefficients due to a bifurcation.

Many clinical cardiologists in the early twentieth century contributed to our understanding of the form and function of the cardiovascular system, but relatively few contributed significantly to our understanding of arterial mechanics (see~\cite{parker09} and the references therein).

\section{Objectives and structure of the work}

Now that a brief historical background has been introduced we notice that we are dealing with a non trivial and relevant problem. The main aim of this work is, therefor, to achieve an efficient simulation of blood flow in compliant arteries. In order to do that, the physical model will be presented for a further mathematical analysis. Once the main features of this analysis have been extracted, via an appropriate numerical method, some simulations will be performed. A review of some of the most important models, both numerical and theoretical has been made along the work.

Hence, this document is structured as follow. In chapter~\ref{chap:model} we present the main equations for the modelling of arterial flow. We will present the widely accepted conservation equations in section~\ref{section:equations}. In the same section, the main problem of no scientific agreement will be presented, along with a brief state of the art. When the model is stated, we will try to use some theoretical results before numerical implementation. This will be done in chapter~\ref{chap:theo}. In this chapter we present a useful theoretical analysis, based on the method of characteristics (detailed in section~\ref{sec:charac}). The reader can find in this chapter a theorem regarding shock waves in compliant arteries, genuine of this work. This result was presented in \(2017\) in the \textit{4º Congreso de jóvenes investigadores} (IV Conference for young researchers) (see~\cite{rodero17}). Afterwards, we will motivate and present the numerical scheme used for the simulations: Galerkin Discontinuous Finite Element Method. All the necessary details for the understanding of this method will be presented in chapter~\ref{chap:galerkin}. Thereafter, we can find in chapter~\ref{chap:results} and subsequents some numerical results validating the model and studying its stability, convergence and sensitivity to the parameters. In the appendix~\ref{appendix:bio}, the reader can find a list of reasonable values for the parameters of the model. Some of these values are used through this work, but for completeness all the possible parameters have been listed and referenced. 

\chapter{One-dimensional blood flow modelling with non-rigid artery}\label{chap:model}
\chaptermark{1D modelling}
\begin{figure}[htb]
    \centering
    \includegraphics[trim={0 20cm 0 0},clip,width=.7\linewidth]{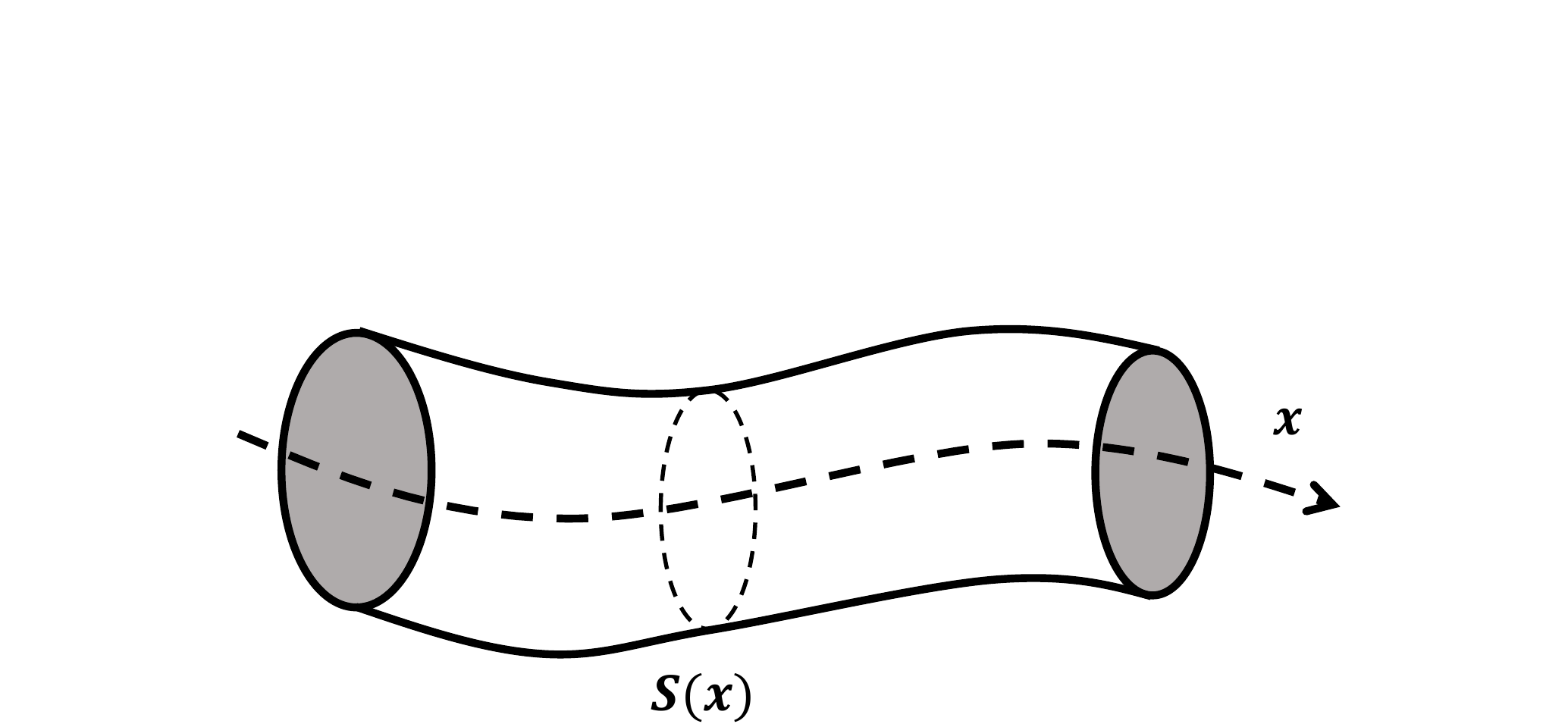}
    \caption{Artery as a compliant tube.}
    \label{fig:compliant}
\end{figure}

The object to model is a simplification of an artery as a compliant cylinder, illustrated in figure~\ref{fig:compliant}. We will start from a 3D reasoning and we will be making assumptions and simplifications until we arrive to the one-dimensional version of the equations. This way of modelling non-rigid arteries was presented in \(2003\) by Sherwin \textit{et al.} in~\cite{sherwin03} and~\cite{sherwin03b}.

The rest of the chapter will consist on motivating and detailing the main assumptions of the model. As we will see, a system of partial differential equations will appear, and each one of the equations will be detailed in different subsections. Before presenting them, we start with the notation and the main variables.
\section{Governing equations}\label{section:equations}
 The first simplification will be to assume that the local curvature of the artery is small enough so we can indeed reduce the problem to one dimension. If we denote by \(S(x)\) a cross section (we can think of it as a slice of the artery) we define
\begin{equation}
    A(x,t)=\int_{S(x)}\dd\sigma,
\end{equation}

as the area of the cross section \(S\);
\begin{equation}
    u(x,t)=\frac{1}{A(x,t)}\int_{S(x)}\hat{u}(x,t)\dd\sigma,
\end{equation}

as the average velocity over the cross section where \(\hat{u}(x,t)\) denote the value of velocity within a constant \(x\)-section; and
\begin{equation}
    p(x,t)=\frac{1}{A(x,t)}\int_{S(x)}\hat{p}(x,t)\dd\sigma,
\end{equation}

as the internal pressure over the cross section where \(\hat{p}(x,t)\) denote the value of pressure within a constant \(x\)-section. As it is usual in the literature, \(x\) will denote the spatial coordinate and \(t\) the temporal one. We remark that in order to avoid a cumbersome notation, if no confusion is added, some arguments of the functions could be avoided. Hence, in some places of this work we could write \(A,u\) and \(p\) for the functions previously presented. This will allow us to treat them as variables. We will also assume that the blood is an incompressible and Newtonian fluid and so the density \(\rho\) and dynamic viscosity \(\mu\) are constant. 

Finally, for the derivation of the dynamics equations we introduce the dependent variable
\begin{equation}
Q(x,t) = A(x,t)u(x,t),
\end{equation}

that will represent the volume flux at a given section. Therefor, since we have three variables, \(A,u\) and \(p\), or equivalently \(A,Q\) and \(p\), we need three equations to relate them. The first two will be conservative equations, \textit{i.e.}, equations that express the conservation of some quantity. The third equation will be the responsible for modelling the artery as an elastic material.

\subsection{Continuity equation}

Here we use the fluid dynamics' continuum hypothesis what is an idealisation of continuum mechanics under which fluids can be treated as continuous, even though, on a microscopic scale, they are composed of molecules. Under the continuum assumption, macroscopic properties such as density, pressure, and velocity are taken to be well-defined at infinitesimal volume elements --- small in comparison to the characteristic length scale of the system, but large in comparison to molecular length scale. Fluid properties can vary continuously from one volume element to another and they are averaged values of the molecular properties. 

With this, if we take one portion of the artery as our control volume, conservation of mass requires that, if there are nor sources neither sinks, nothing disappears spontaneously, \textit{i.e.}, the rate of change of mass within the control volume is only due to what comes into the artery portion minus what comes out of this portion (assuming impermeable walls too). If this control volume has length \(l\), we can write it as
\begin{equation}
V(t)=\int_0^lA(x,t)\dd x,
\end{equation}

and hence, due to the reasoning made before, we can write the rate of change of mass (or volume, actually) as
\begin{equation}
\rho\frac{\partial V(t)}{\partial t}=\rho(Q(0,t)-Q(l,t)).
\end{equation}

We have corrected the equation with the density of the blood \(\rho\) for completeness, although in our case it will have no effect. If \(\rho\) depended on time, for example, this could not be simplified (we would be dealing with a compressible fluid). Now, we can rewrite this expression using the definition of \(V(t)\) and the fundamental theorem of Calculus as
\begin{equation}
\rho\frac{\partial}{\partial t}\int_0^lA(x,t)\dd x+\rho\int_0^l\frac{\partial Q(x,t)}{\partial x}\dd x = 0.
\end{equation}

We have no issues in doing this since all the functions we are considering (amplitude \(A\), velocity \(u\) and flux \(Q\)) are smooth enough due to its physical meaning to integrate them and take derivatives. If we assume \(l\) does not depend on time, we can take the derivative with respect to \(t\), inside the integral to arrive to
\begin{equation}
\int_0^l\frac{\partial A}{\partial t}+\frac{\partial Q}{\partial x}\dd x = 0.
\end{equation}

We have no problems in deriving inside the integral symbol due to the smoothness of the functions involved. Since the control volume is arbitrary, the integrand of the above equation must be zero. We therefor obtain the differential one-dimensional mass conservation equation
\begin{equation}\label{eq:mass-conservation}
    \frac{\partial A}{\partial t}+\frac{\partial Q}{\partial x}\equiv \frac{\partial A}{\partial t}+\frac{\partial (uA)}{\partial x}=0.
\end{equation}

\subsection{Momentum equation}

The second equation comes from the concept of momentum of Newtonian dynamics. Analogously as we have done with the mass-conservation equation (or continuity equation), the momentum equation states that the rate of change of momentum within the control volume plus the net flux of momentum out the control volume is equal to the applied forces on the control volume. Again, we have to weight with the blood density since the flux is involved. This is,
\begin{equation}
\frac{\partial}{\partial t}\int_0^l\rho Q\dd x+\rho\left(Q(l,t)u(l,t)-Q(0,t)u(0,t)\right) = F,
\end{equation}

where \(F\) is the set of forces acting in the control volume. In this term we have to take into account the friction with the walls, the pressure of the flow against the walls and the force at the inlet minus the force at the outlet. Since the pressure is force per unit area, we can write this term as
\begin{equation}
F(t)=\underbrace{p(0,t)A(0,t)}_{\textrm{Force at the inlet}}-\underbrace{p(l,t)A(l,t)}_{\textrm{Force at the outlet}}+\underbrace{\int_0^l\int_{\partial S}\hat{p}n_x\dd s\dd x}_{\textrm{Pressure against the walls}}+\underbrace{\int_0^l f\dd x}_{\textrm{Friction}},
\end{equation}

where \(\partial S\) is the boundary of section \(S\), \(n_x\) is the \(x\)-component of the surface normal and \(f\) represents the friction force per unit length. The side wall pressure force given by the double integral can be simplified if we assume constant sectional pressure and we treat the tube as axisymmetric:
\begin{equation}
\int_0^l\int_{\partial S}\hat{p}n_x\dd s\dd x=\int_0^lp\frac{\partial A}{\partial x}\dd x.
\end{equation}

Putting everything altogether we arrive to the control-volume statement of momentum conservation
\begin{align}
\begin{split}
&\frac{\partial}{\partial t}\int_0^l\rho Q\dd x+\rho(Q(l,t)u(l,t)-Q(0,t)u(0,t))\\
=&p(0,t)A(0,t)-p(l,t)A(l,t)+\int_0^l\left(p\frac{\partial A}{\partial x}+f\right)\dd x.
\end{split}
\end{align}

As we have done before, using the fundamental theorem of Calculus and assuming \(l\) is independent of time and \(\rho\) is constant, we obtain
\begin{equation}
\int_0^l\left[\rho\left(\frac{\partial Q}{\partial t}+\frac{\partial(Qu)}{\partial x}\right)+\frac{\partial(pA)}{\partial x}-p\frac{\partial A}{\partial x}-f\right]\dd x=0.
\end{equation}

Since this equation is satisfied for an arbitrary control volume, the integrand must be zero, so the equation has the form
\begin{equation}\label{eq:premomentum-equation}
    \frac{\partial Q}{\partial t}+\frac{\partial (Qu)}{\partial x}=-\frac{A}{\rho}\frac{\partial p}{\partial x}+\frac{f}{\rho}.
\end{equation}

Finally, due to the appearance of \(u^2\) in~\eqref{eq:premomentum-equation} (\(Qu=Au^2\)), it is convenient to introduce the \emph{Coriolis coefficient} \(\alpha\) as a correction factor for the non-linearity of the momentum, so as to satisfy
\begin{equation}
    \alpha(x,t)=\frac{\int_S\hat{u}\dd\sigma}{Au^2},
\end{equation}

where \(\hat{u}\) stands for velocity within a constant \(x\)-section. Although in some papers such as~\cite{sherwin03} this coefficient is simplified to \(\alpha=1\), which means a flat profile, some others as~\cite{toro16} set them to \(\alpha=4/3\) for a parabolic profile. With this, the final conservation of momentum equation reads
\begin{equation}\label{eq:momentum-equation}
    \frac{\partial Q}{\partial t}+\frac{\partial (\alpha Qu)}{\partial x}=-\frac{A}{\rho}\frac{\partial p}{\partial x}+\frac{f}{\rho}.
\end{equation}

\subsection{The tube laws}\label{subsec:tube_laws}
We recall that we have three variables, \(A,u\) and \(p\) (or the combinations using the volume flux \(Q\)) and up to this moment only two equations: mass and momentum conservation. Consequently, to close the system given by equations~\eqref{eq:mass-conservation} and~\eqref{eq:momentum-equation} either we need one equation more or we have to remove one variable. This is commonly done defining a relationship, either differential or algebraic, between pressure and amplitude (known as the \emph{local tube law}). One could expect one well-established equation, as is the case of the conservation equations but, to the best of our knowledge there is no scientific agreement at this point. The reason of this, is because they are generally simplifications of the physical reality, so depends on the author which assumptions to make. Other ones leave free parameters in order to fit them with experimental data and other group of authors uses expressions purely mathematical. In this part we will make a brief review of some of the most important ones. 

\medskip

Before exposing them one clarification must be done. Depending on the reference, we will be using an initial pressure \(p_0(x)=p(x,0)\), a reference pressure \(p_{\textrm{ref}}(x,t)\), an external pressure \(p_{\textrm{ext}}(x,t)\) or none of them. These usually appear as a difference with the current pressure so the idea is that we are in a non-equilibrium situation. Some authors assume that at the beginning of the simulation we are in equilibrium so they use \(p_0(x)\); other models embrace other pressures such as atmospheric pressure (to distinguish if the artery is in vertical position or not) or they treat it as a different term (\(p_{\textrm{ext}}(x,t)\)) and most of them assume that the initial/reference pressure is zero in order to simplify. This is not a great issue since the reasoning is the same and for further simulations we will simplify these terms.

\begin{enumerate}

\item[M1)] Historically, one of the first approaches has been to assume that the cross-sectional area is a linear function of pressure and also that changes in area are relatively small. That is,
\begin{equation}
A(x,t) = A_0(x)+(p(x,t)-p_0(x))C(x,t)
\end{equation}

with
\begin{equation}
|(p(x,t)-p_0(x))C(x,t)|\ll A_0(x)
\end{equation}

where  \(C(x,t)=\left.\frac{\partial A(x,t)}{\partial p(x,t)}\right|_{A_0(x)}\) is the vessel compliance per unit length, \\\(p_0(x)=p(x,0)\) and \(A_0(x) = A(x,0)\). These equations, together with experimental values of \(C(x,t)\) (assumed constant) can be found in~\cite{raines74}.

\item[M2)] We can think of the previous expression as a Taylor expansion neglecting terms higher than first order. The natural question is if it is possible to get a higher order and, indeed, this has been studied back in \(1986\) in~\cite{gporenta86}. They presented the equation
\begin{equation}
A(x,t) = A_0(x)\left[1+C_0(p(x,t)-p_0(x))+C_1(p(x,t)-p_0(x))^2\right],
\end{equation}

which is particularly convenient for numerical manipulation. 

\item[M3)] In~\cite{sherwin03} and~\cite{sherwin03b}, Sherwin \textit{et al.} assumed a thin wall tube where each section is independent of the others. This model is based on linear elasticity where, using Hooke's law (first formulated in~\cite{hooke74}) for continuous media we have that
\begin{equation}
\sigma=E\varepsilon
\end{equation}

being \(\sigma\) the stress, \(\varepsilon\) the strain and \(E\) the Young's modulus. We recall that, despite they are actually tensors, the stress can be defined as a physical quantity that expresses the internal forces that neighbouring particles of a continuous material exert on each other, while strain is the measure of the deformation of the material. Young's modulus characterise the stiffness of an elastic material.

Let us denote the radius of the artery by \(R(x,t)\) and \(R_0(x)=R(x,0)\). Here \(h_0(x)\) will be used to denote the vessel-wall thickness and sectional area at the equilibrium state \((p,u)=(p_{\textrm{ref}},0)\), where \(p_{\textrm{ref}}\) is the reference pressure. We assume a cross section of a vessel with a thin wall (\(h\ll R\)), an isotropic, homogeneous, incompressible arterial wall that it deforms axisymmetrically with each circular cross-section independently of the others. Making these assumptions, we can express the strain as
\begin{equation}\label{eq:strain}
    \varepsilon=\frac{R-R_0}{(1-\nu^2)R_0},
\end{equation}

where \(\nu(x)\) is the other elasticity parameter, Poisson's ratio. It is defined as the ratio of transverse contraction strain to longitudinal extension strain in the direction of stretching force, so along with Young's modulus one can univocally determine the properties of a (linear) elastic material. By Young-Laplace's law\footnote{We are nor explaining neither deducing this law since it would take us apart from the objective of this work. We refer the interested reader to~\cite{young05} and~\cite{laplace05}.}, assuming there is not external pressure we can relate the pressure with the stress as
\begin{equation}\label{eq:stress}
    p=\frac{h_0\sigma}{\pi R}.
\end{equation}

Combining the previous equations we arrive to the tube law
\begin{equation}\label{eq:pressure}
    p(x,t)=p_{\textrm{ext}}+\beta(x)\left(\sqrt{A(x,t)}-\sqrt{A_0(x)}\right),
\end{equation}

where
\begin{equation}
\beta(x) = \frac{\sqrt{\pi}h_0(x)E(x)}{\left(1-\nu(x)^2\right)A_0(x)}
\end{equation}

is the parameter embracing the material properties and \(p_{\textrm{ext}}\) is the external pressure. There are no problems with the denominator since in most of the materials \(\nu(x)\in[0,0.5]\) and, although in capillaries the amplitude \(A_0\) is very small, the wall thickness \(h_0\) is too. Nevertheless, if this is the case, we must pay attention because of computational issues.

\item[M4)]Another approach, similar to the previous model, is to assume a linear pressure-area constitutive relation, as in~\cite{sochi13}, and hence the pressure is proportional to arterial amplitude difference, that is
\begin{equation}
p(x,t)=\gamma(x,t)(A(x,t)-A_0(x))
\end{equation}

where \(\gamma\) is a proportionality coefficient. To the best of our knowledge this coefficient has no explicit physical meaning.

\item[M5)]All of these models can be found in a more general way in a recent review (see~\cite{toro16}). Here, Toro wrote the tube law as
\begin{equation}
p(x,t)=\psi(A;K,A_0),
\end{equation}

where
\begin{equation}
\psi(A;K,A_0)=K(x)\phi(A,A_0)
\end{equation}

and
\begin{equation}
\phi(A,A_0)=\left[\left(\frac{A(x,t)}{A_0(x)}\right)^m-\left(\frac{A(x,t)}{A_0(x)}\right)^n\right],
\end{equation}

where \(m,n\) and \(K(x)\) are free parameters and a function. We can recover equation~\eqref{eq:pressure} by setting \(m=1/2\), \(n=0\) and
\begin{equation}
K(x)=\sqrt{\frac{\pi}{A_0(x)}}\left(\frac{h_0(x)E(x)}{1-\nu^2(x)}\right).
\end{equation}

In this case, we would have that \(\beta(x)=K(x)/\sqrt{A_0(x)}\). If it is not specified, we can always replace \(p(x,t)\) by \(p(x,t)-p_{\textrm{ref}}(x,t)\) because in most of the models the reference pressure is set to \(0\) for simplicity.

\bigskip

Up to this point, the tube laws presented have been based on a purely elastic behaviour of the artery wall. Another parallel line of work has been to consider the artery wall  as a viscoelastic material. For the arterial wall (or viscoelastic solids in general), when a fixed stress is loaded, the wall keeps extending gradually (creeping) after an instantaneous extension. The main issue for the simulation of viscoelastic materials is the significant increase of theoretical and computational complexity (both in running time as in code development) but, for completeness, we present some of the most used models. We will do it in a more enumerating way, since the deduction of all the equations is out of the scope of this work.

\item[M6)]We can find a complex formulation of viscoelasticity for blood vessels back in \(1970\) in~\cite{westerhof70}. The following tube law of the generalised viscoelastic model can be derived using knowledge of solid mechanics:
\begin{equation}
p(x,t) = \frac{1}{C(x,t)}\left[A(x,t)-A_0(x)+\int_0^t\sum_{i=1}^nf_ie^{-(t-u)/\tau_i}\frac{\partial A(x,u)}{\partial t}\dd u\right]
\end{equation}

where \(C(x,t)=\left.\frac{\partial A(x,t)}{\partial p(x,t)}\right|_{A_0(x)}\). The viscoelastic property of the tube wall is reflected in the dynamic viscoelasticity parameters \(f_i\) and relaxation time parameter \(\tau_i\). It is possible to determine the term number \(n\) from the viscoelastic characteristics of the material.

\item[M7)]A widely used viscoelastic model is Kelvin-Voigt model which writes
\begin{equation}
\sigma(t)=E\varepsilon(t)+\phi\frac{\partial\varepsilon(t)}{\partial t},
\end{equation}

being \(\phi\) a coefficient for the viscosity of the material. We note that if \(\phi=0\) we have a linear relationship and hence we recover the elastic model. If we suddenly apply some constant stress \(\sigma _{0}\) to a Kelvin–Voigt material, then the deformations would approach the deformation for the pure elastic material \(\sigma_0/E\) with the difference decaying exponentially:
\begin{equation}
\varepsilon(t)=\frac{\sigma_0}{E}\left(1-e^{-\lambda t}\right),
\end{equation}

where \(\lambda\) can be interpreted as the rate of relaxation 
\begin{equation}
\lambda=\frac{E}{\phi}.
\end{equation}

Hence, it is the description of a elastic material but with some delay. This is the reason why it is the description of a viscoelastic material.

Using the same reasoning as in the third model we get the tube law
\begin{equation}
p=\beta\left(\sqrt{A}-\sqrt{A_0}\right)+\nu_s\frac{\partial A}{\partial t},
\end{equation}

with stiffness coefficient
\begin{equation}
\beta=\frac{\sqrt{\pi}Eh}{(1-\nu^2)A_0}
\end{equation}

and viscosity coefficient
\begin{equation}
\nu_s=\frac{\sqrt{\pi}\phi h}{2(1-\nu^2)\sqrt{A_0}A}.
\end{equation}

This model was first used in by Čanić \textit{et al.} in~\cite{canic06} (although in a more general way).

\item[M8)]Finally, one of the most novel approaches has been to use the concept of fractional derivative. This branch of calculus consider not only the first, second... derivatives, but ``intermediate'' ones. One of the most used definitions (specially oriented to computational purposes) starts with the basic definition of derivative:
\begin{align}
\begin{split}
    f'(x)&=\lim_{h\to0}\frac{f(x+h)-f(x)}{h},\\
    f''(x)&=\lim_{h\to0}\frac{f'(x+h)-f'(x)}{h}=\lim_{h\to0}\frac{f(x+2h)-2f(x+h)+f(x)}{h^2},\\
    &\vdots\\
    f^{(n)}(x)&=\lim_{h\to0}\frac{\sum_{0\leq m\leq n}(-1)^m\binom{n}{m}f(x+(m-n)h)}{h^n}.
\end{split}
\end{align}

In order to achieve real (even complex) values of the derivative's degree we replace the factorial with the Euler's gamma function (which returns the factorial for integer values). With some algebra we arrive to:
\begin{align}
\begin{split}
    \DD^\alpha f(x)&=\lim_{h\to0}h^{-\alpha}\sum_{m=0}^{\frac{x-a}{h}}\frac{(-1)^m\Gamma(\alpha+1)}{\Gamma(m+1)\Gamma(\alpha-m+1)}f(x-mh)\\
    &=\lim_{n\to\infty}\left(\frac{n}{x-a}\right)^\alpha\sum_{m=0}^n\frac{(-1)^m\Gamma(\alpha
    +1)}{\Gamma(m+1)\Gamma(\alpha-m+1)}f\left(x-m\left(\frac{x-a}{n}\right)\right),
\end{split}
\end{align}

where now \(\alpha\in\CC\) and \(a<x\) is the point from where the derivative corresponding to \(x\) will be calculated. This definition is called the Grünwald-Letnikov formula, first used in~\cite{letnikov68}. It is not the unique definition, since there are more than thirty (see~\cite{capelasdeoliveira14}) and some of them are not even equivalent. Nevertheless, what all of them have in common is that they have some ``memory'', in the sense that previous values of the derivative's degree, affect the current value. Using this idea, Perdikaris \textit{et al.} modelled arterial viscoelasticity with fractional calculus and they run its interaction with blood flow in~\cite{perdikaris14}. For this, they use the Grünwald-Letnikov formula in a recursive way:
\begin{gather}\label{eq:grunwald-letnikov}
    D_t^\alpha f(t)=\lim_{\Delta t\to0}\Delta t^{-\alpha}\sum_{k=0}^\infty GL_k^\alpha f(t-k\Delta t),\\
GL_k^\alpha=\frac{k-\alpha-1}{k}GL_{k-1}^\alpha
\end{gather}

with \(GL_0^\alpha=1\) and \(t-k\Delta t\) must be in the domain of \(f\). The fractional stress-strain relation reads as
\begin{equation}\label{eq:visco_frac}
\sigma(t)+\tau_\sigma^\alpha D_t^\alpha\sigma(t)=E\left[\varepsilon(t)+\tau_\varepsilon^\alpha D_t^\alpha\varepsilon(t)\right].
\end{equation}

And, as before
\begin{equation}\label{eq:stress-strain}
\sigma=\frac{R(p-p_{\textrm{ext}})}{h_0}\qquad\textrm{and}\qquad\varepsilon=\frac{R-R_0}{(1-\nu^2)R_0}.
\end{equation}

So, replacing the equations of~\eqref{eq:stress-strain} and the expression of fractional derivative of~\eqref{eq:grunwald-letnikov} in the stress-strain relationship~\eqref{eq:visco_frac} we arrive to
\begin{gather}
\begin{split}
p(x,t)=\frac{1+\tau_\varepsilon^\alpha\Delta t^{-\alpha}}{1+\tau_\sigma^\alpha\Delta t^{-\alpha}}p^E(x,t)+\frac{\Delta t^{-\alpha}}{1+\tau_\sigma^\alpha\Delta t^{-\alpha}}\\
\times\sum_{k=0}^\infty GL_k^\alpha\left[\tau_\varepsilon^\alpha p^E(t-k\Delta t)-\tau_\sigma^\alpha p(t-k\Delta t)\right]
\end{split}
\end{gather}

where \(p^E\) correspond to the elastic pressure contribution
\begin{equation}
p^E(x,t)=\left(\frac{\tau_\varepsilon}{\tau_\sigma}\right)^\alpha\beta\left(\sqrt{A}-\sqrt{A_0}\right).
\end{equation}

The parameters \(\tau_\varepsilon,\tau_\sigma\) and the fractional exponent \(\alpha\) have to be properly chosen each case. We notice that for the case \(\tau_\sigma^\alpha=0\), \(\tau_\varepsilon^\alpha=1/E\) and \(\alpha=1\) we recover the Kelvin-Voigt model. 
\end{enumerate}

We have provided a briefly overview of several ways of modelling the elastic or viscoelastic behaviour of the arterial wall, since the seventies up to the most recent models. We can visually check the appearing of these ideas in figure~\ref{fig:chronology}.
\begin{figure}[htb]
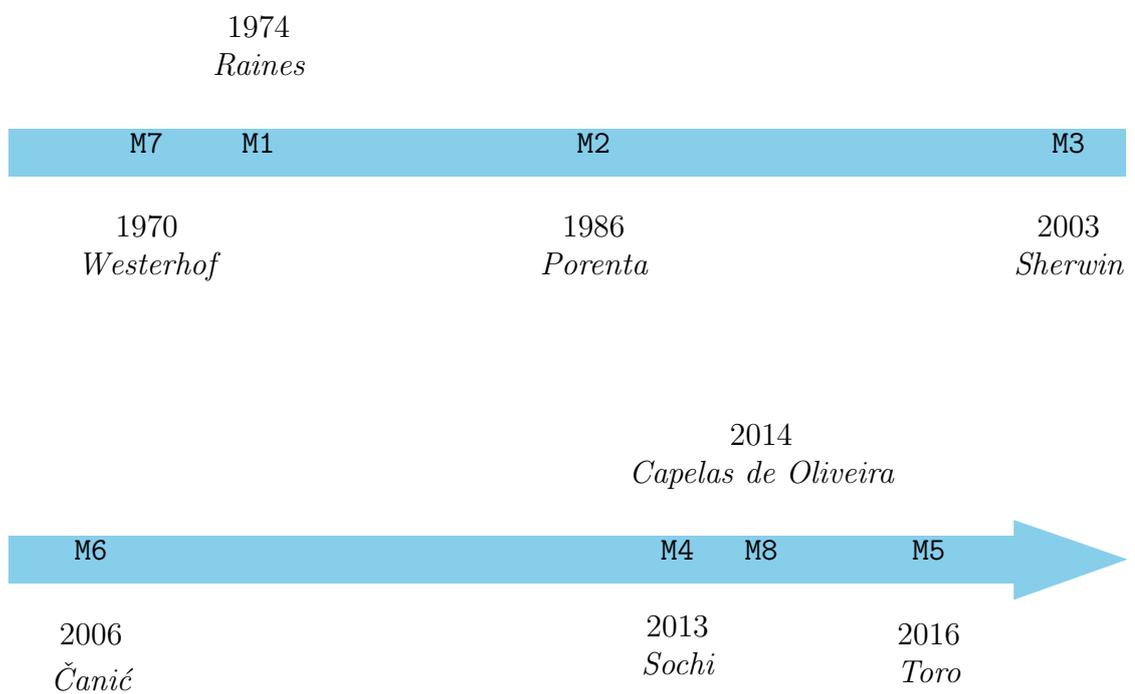

\startchronology[startyear=1965,stopyear=2005,color=SkyBlue,width=\hsize,height=1.5pc,dateselevation=45pt,dates=false,arrow=false]
\chronoevent[icon=\texttt{M7},barre=false,textstyle=\it,mark=false]{1970}{Westerhof}
\chronoevent[icon=\texttt{M1},barre=false,textstyle=\it,ifcolorbox=false,mark=false,markdepth=-40pt]{1974}{Raines}
\chronoevent[icon=\texttt{M2},barre=false,textstyle=\it,mark=false]{1986}{Porenta}
\chronoevent[icon=\texttt{M3},barre=false,textstyle=\it,mark=false]{2003}{Sherwin}
\stopchronology
\startchronology[startyear=2005,stopyear=2017,color=SkyBlue,width=\hsize,height=1.5pc,dateselevation=40pt,dates=false,arrowheight=2.5pc]
\chronoevent[icon=\texttt{M5},barre=false,textstyle=\it,mark=false]{2016}{Toro}
\chronoevent[icon=\texttt{M6},barre=false,textstyle=\it,mark=false]{2006}{Čanić}
\chronoevent[icon=\texttt{M8},barre=false,textstyle=\it,mark=false,markdepth=-40pt]{2014}{Capelas de Oliveira}
\chronoevent[icon=\texttt{M4},barre=false,textstyle=\it,ifcolorbox=false,mark=false]{2013}{Sochi}
\stopchronology
\caption{Chronology of the models presented on the tube law subsection.}
\label{fig:chronology}
\end{figure}

With this, we have shown in this chapter the statement of our problem with some of the most important options for the tube law. Although we will remark it in the following sections, due to its usefulness and because it is not purely phenomenological, we will use the third model, \textit{i.e.}, the purely elastic wall model where the proportionality constant has a clear physical sense. With this, in the next chapter we will use the aforementioned Navier-Stokes problem and with theory of partial differential equations (specifically, hyperbolic problems) we will prove some statements that will be useful for practical cases.

\chapter{Theoretical analysis applied to the Navier-Stokes problem}\label{chap:theo}
\chaptermark{Theoretical analysis}
In this chapter we will use some mathematical tools to analyse the system of partial differential equations presented in the previous chapter. This analysis will be mainly based on a widely known (specially in engineering areas) method, the so-called method of characteristics. Although it is sometimes used as a numerical method, we will use the theoretical background of it (explained in section~\ref{sec:charac}) to prove some useful theorems in the section~\ref{sec:PDE}.
\section{Method of characteristics}\label{sec:charac}
Since the rest of the chapter is based on the method of characteristics, we will use this section to explain the main ideas of this method and apply it to our problem.

The method of Riemann characteristics has been used for more than a century to describe linear and nonlinear waves propagating in a medium~\cite{riemann60}. The main useful properties of this method, as are discussed in~\cite{paynter88}, are that it may be applied to both linear and nonlinear systems; that it may be applied equally well to solids, liquids, and gases; and that it provides a clear picture of the causal relations governing system behaviour. The method of characteristics was first developed by Riemann in an article published in \(1860\). Riemann limits himself to consideration of gases and begins with a discussion of the ideal gas law. He then progresses to a development of the method of characteristics and a discussion of applications of the method.

Riemann's work is built upon previous work on finite amplitude waves in air. An important contribution was made by Poisson~\cite{poisson08} whose \(1808\) article showed that the wave velocity is the sum of the sound speed and the mean flow velocity. Properties of waves of finite amplitude were discussed in \(1860\) by Earnshaw~\cite{earnshaw60}, but only for progressive waves.

Very roughly, a characteristic is a propagation path: a path followed by some entity, such a geometrical form or a physical disturbance when this entity is propagated. Thus, a ``gridiron'' of roads could be considered as propagation paths, as is shown in figure~\ref{fig:char_rectas}. With this very intuitive notion of characteristic (extracted from~\cite{abbott66}), the two families of lines correspond to two families of characteristics, usually named \emph{forward} characteristic and \emph{backward} characteristic.

\begin{figure}
    \centering
    \begin{subfigure}[t]{0.3\textwidth}
         \includegraphics[trim={3cm 18cm 7cm 2cm},clip,width=\linewidth]{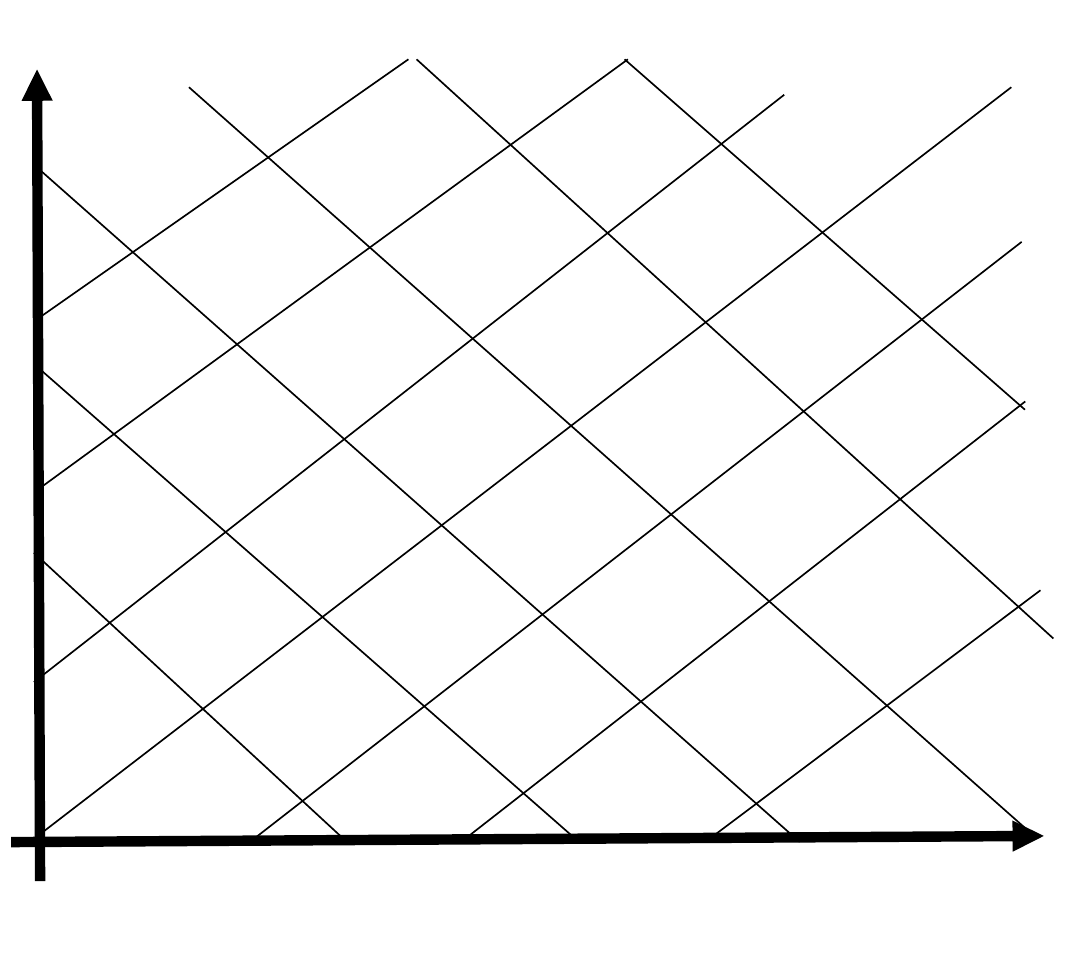}
        \caption{Characteristic curves with constant slope.}
        \label{fig:char_rectas}
    \end{subfigure}
    ~
    \begin{subfigure}[t]{0.3\textwidth}
         \includegraphics[trim={3cm 18cm 7cm 2cm},clip,width=\linewidth]{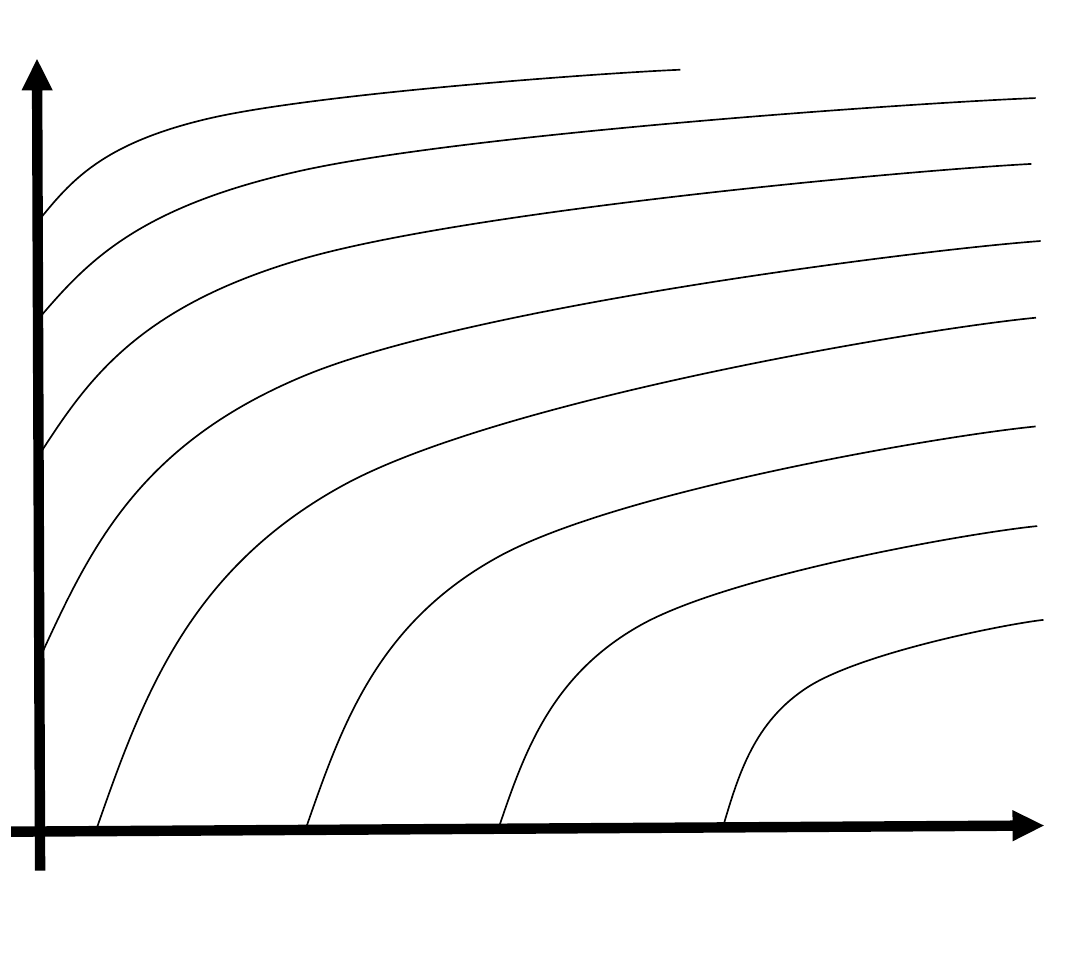}
        \caption{(Forward) characteristic curves with variable slope.}
        \label{fig:char_curvas}
    \end{subfigure}
    ~
    \begin{subfigure}[t]{0.3\textwidth}
        \includegraphics[trim={3cm 18cm 7cm 2cm},clip,width=\linewidth]{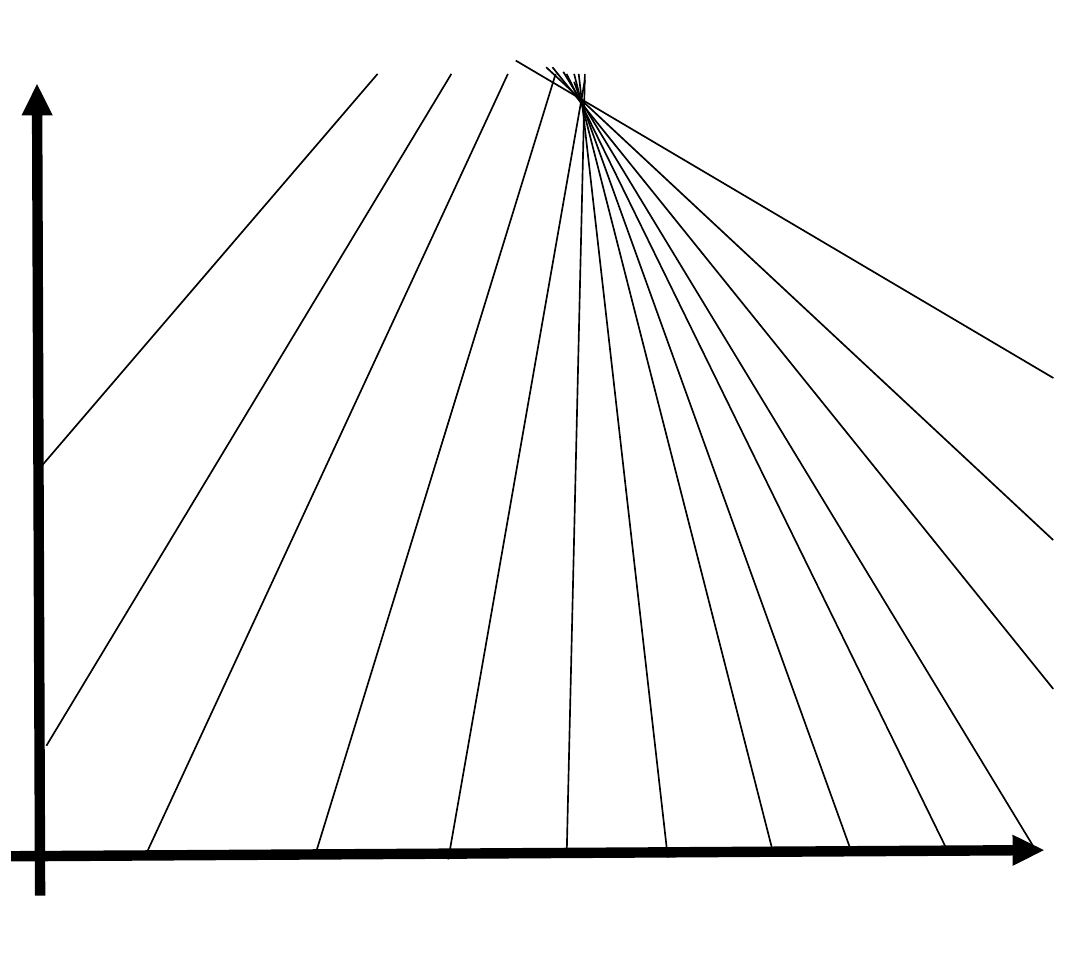}
        \caption{Formation of a shock wave.}
        \label{fig:char_shock}
    \end{subfigure}
    \caption{Scheme of different types of characteristic curves.}\label{fig:charac}
\end{figure}

With this method we want to transform a partial differential equation (or a system of equations) into an ordinary differential equation (or a system of equations). The usual way of achieving this is making linear combinations of certain terms choosing properly the multipliers. In the following subsection we will explain the general procedure for the case of a quasi-linear system with two variables. The problem with this derivation of the method is that it is based on calculus of variations where infinitesimal quantities are treated as manipulable entities. This approach is very common in engineering and physics but, regarding its rigour, it is not well appreciated in some mathematical fields. Nevertheless, for completeness and for historical reasons we will show this procedure.

\subsection{Derivation of the method via calculus of variations}
In this subsection we will follow the notation and discussion of the \(15\)\textsuperscript{th} chapter of~\cite{ralston65} written by Lister.

The general form of a quasi-linear system of equations for the case of two independent variables \(x,y\) and two dependent variables \(u,v\) can be written as a system of two equations, \(L_1\) and \(L_2\):
\begin{align}
    L_1:A_1u_x+B_1u_y+C_1v_x+D_1v_y+E_1=0\label{eq:L1}\\
    L_2:A_2u_x+B_2u_y+C_2v_x+D_2v_y+E_2=0\label{eq:L2}
\end{align}

where \(A_1,A_2,\dots,E_2\) are known functions of \(x,y,u,v\).

In the following considerations, it is assumed that all the functions introduced above are continuous and possess as many continuous derivatives as may be required. Consider a linear combination of \(L_1\) and \(L_2\):
\begin{align}
\begin{split}
L=\lambda_1L_1+\lambda_2L_2=&(\lambda_1A_1+\lambda_2A_2)u_x+(\lambda_1B_1+\lambda_2B_2)u_y+(\lambda_1C_1+\lambda_2C_2)v_x\\
&+(\lambda_1D_1+\lambda_2D_2)v_y+(\lambda_1E_1+\lambda_2E_2).
\end{split}
\end{align}

Now, if \(u=u(x,y)\) and \(v=v(x,y)\) are solutions to~\eqref{eq:L1} and~\eqref{eq:L2} then
\begin{equation}
\dd u=\frac{\partial u}{\partial x}\dd x+\frac{\partial u}{\partial y}\dd y,\qquad\dd v=\frac{\partial v}{\partial x}\dd x+\frac{\partial v}{\partial y}\dd y.
\end{equation}

The differential expression \(L\) can be written\footnote{Supposing that the differentials are ``virtual'' entities, following the standards of this field.} in the form
\begin{equation}\label{eq:dxL}
\dd xL=(\lambda_1A_1+\lambda_2A_2)\dd u+(\lambda_1C_1+\lambda_2C_2)\dd v+(\lambda_1E_1+\lambda_2E_2)\dd x
\end{equation}

if the constants \(\lambda_1\) and \(\lambda_2\) are chosen so that
\begin{equation}\label{eq:dx/dy}
\frac{\dd x}{\dd y}=\frac{\lambda_1A_1+\lambda_2A_2}{\lambda_1B_1+\lambda_2B_2}=\frac{\lambda_1C_1+\lambda_2C_2}{\lambda_1D_1+\lambda_2C_2}.
\end{equation}

In this case, in the differential expression \(L\), the derivatives of \(u\) and those of \(v\) are combined so that their derivatives are in the same direction, namely, \(\dd y/\dd x\). This direction is called a \emph{characteristic direction}.

From equation~\eqref{eq:dx/dy}, the ratio \(\lambda_1/\lambda_2\) can be obtained:
\begin{equation}\label{eq:quot_lamb}
-\frac{\lambda_1}{\lambda_2}=\frac{A_2\dd y-B_2\dd x}{A_1\dd y-B_1\dd x}=\frac{C_2\dd y-D_2\dd x}{C_1\dd y-D_1\dd x}
\end{equation}

hence
\begin{equation}\label{eq:hyper_pol}
a(\dd y)^2-2b\ \dd x\dd y+c(\dd x)^2=0.
\end{equation}

Here,
\begin{align}
    a&=A_1C_2-A_2C_1,\\
    2b&=A_1D_2-A_2D_1,\\
    c&=B_1D_2-B_2D_1.
\end{align}

For the case of hyperbolic partial differential equations, two distinct roots of the quadratic equation~\eqref{eq:hyper_pol} exist. Therefor
\begin{equation}
b^2-ac>0.
\end{equation}

This excludes the exceptional case of all three coefficients vanishing. Moreover, it is assumed for convenience that 
\begin{equation}
a\neq0.
\end{equation}

The latter condition can always be satisfied, if necessary, by introducing new coordinates instead of \(x\) and \(y\). Consequently, \(\dd x\neq0\) for a characteristic direction \((\dd x,\dd y)\) as seen from~\eqref{eq:hyper_pol}; thus the slope
\begin{equation}
\zeta=\frac{\dd y}{\dd x}
\end{equation}

can be introduced, and \(\zeta\) satisfies the equation:
\begin{equation}
a\zeta^2-2b\zeta+c=0.
\end{equation}

This equation has two different real solutions \(\zeta_+\) and \(\zeta_-\),
\begin{equation}
\zeta_+\neq\zeta_-.
\end{equation}

Thus, at the point \((x,y)\), the two different characteristic directions are given by:
\begin{equation}\label{eq:char_roots1}
\frac{\dd y}{\dd x}=\zeta_+,\quad\frac{\dd y}{\dd x}=\zeta_-.
\end{equation}

Since \(a,b\) and \(c\) are in general functions of \(u,v,x\) and \(y\), \(\zeta_+\) and \(\zeta_-\) will also be functions of these quantities:
\begin{equation}\label{eq:char_roots}
\frac{\dd y}{\dd x}=\zeta_+(x,y,u,v),\quad\frac{\dd y}{\dd x}=\zeta_-(x,y,u,v).
\end{equation}

Once a solution \(u(x,y),v(x,y)\) of~\eqref{eq:L1} and~\eqref{eq:L2} has been obtained, equations\\\eqref{eq:char_roots} become two separate ordinary differential equations of the first order. These ODEs define two one-parameter families of \emph{characteristic curves} (often abbreviated to characteristics), in the \((x,y)\) plane, belonging to this solution \(u(x,y),v(x,y)\). These two families form a curvilinear coordinate net (as could be figure~\ref{fig:char_curvas}).

If \(\zeta_+\) and \(\zeta_-\) are functions of \(x\) and \(y\) only, then
\begin{equation}
\frac{\dd y}{\dd x}=\zeta_+(x,y),\quad\frac{\dd y}{\dd x}=\zeta_-(x,y),
\end{equation}

and it is not necessary to find a solution to~\eqref{eq:L1} and~\eqref{eq:L2} in order to find the equations of the characteristics; hence the problem is simplified.

Substituting the solutions~\eqref{eq:char_roots1} into the expressions for \(\lambda_1/\lambda_2\) given in~\eqref{eq:quot_lamb} yields
\begin{equation}\label{eq:quot_lamb2}
\frac{\lambda_1}{\lambda_2}=-\frac{A_2\zeta_+-B_2}{A_1\zeta_+-B_1},\quad\frac{\lambda_1}{\lambda_2}=-\frac{A_2\zeta_--B_2}{A_1\zeta_--B_1}.
\end{equation}

Finally, combining~\eqref{eq:quot_lamb2} and~\eqref{eq:dxL} gives
\begin{align}
    F\dd u+(a\zeta_+-G)\dd v+(K\zeta_+-H)\dd x&=0,\\
    F\dd u+(a\zeta_--G)\dd v+(K\zeta_--H)\dd x&=0,
\end{align}

where
\begin{align}
F&=A_1B_2-A_2B_1,&G&=B_1C_2-B_2C_1, \nonumber\\
&&&\\
K&=A_1E_2-A_2E_1,&H&=B_1E_2-B_2E_1.\nonumber
\end{align}

Thus, the following four characteristic equations have been obtained:
\begin{align}
    \dd y-\zeta_+\dd x&=0\label{eq:char1}\\
    F\dd u+(a\zeta_+-G)\dd v+(K\zeta_+-H)\dd x&=0\\
    \dd y-\zeta_-\dd x&=0\\
    F\dd u+(a\zeta_--G)\dd v+(K\zeta_--H)\dd x&=0\label{eq:char2}
\end{align}

Equations~\eqref{eq:char1}--\eqref{eq:char2} are of a particularly simple form, inasmuch as each equation contains only total derivatives of all the variables.

According to the derivation, every solution of the original system~\eqref{eq:L1} and~\eqref{eq:L2} satisfies the system~\eqref{eq:char1}--\eqref{eq:char2}. Courant and Friedrichs showed that the converse is also true in~\cite{courant48}.

At this point we could apply the previous reasoning to our case (previously checking the hyperbolicity) but we will avoid this path. The reasoning followed is more typical in physics' and engineering areas, but we can provide more rigour to this method. The method of characteristics can also be achieved by means of linear algebra and standard calculus: we will dedicate the following subsections to the derivation of the method in this way. This will also provide the notation and the tools for a posterior analysis of the problem.

\subsection{Setting of the Navier-Stokes system}\label{ssec:charac_NS}

First of all, let us manipulate the expression of the momentum equation~\eqref{eq:momentum-equation} for convenience (the left-hand side):
\begin{align}
\begin{split}
\frac{\partial Q}{\partial t}+\frac{\partial\alpha Qu}{\partial x}&=\frac{uA}{\partial t}+\frac{\partial\overbrace{\alpha u^2A}^{(\alpha u A)u}}{\partial x}=A\frac{\partial u}{\partial t}+u\frac{\partial A}{\partial t}+u\frac{\partial\alpha u A}{\partial x}+\alpha u A\frac{\partial u}{\partial x}\\
&=A\frac{\partial u}{\partial t}+u\frac{\partial A}{\partial t}+u\frac{\partial\alpha u A}{\partial x}+\alpha u A\frac{\partial u}{\partial x} + u\frac{uA}{\partial x}-u\frac{u A}{\partial x}\\
&=u\frac{\partial A}{\partial t}+u\frac{\partial uA}{\partial x}+u\frac{\partial\alpha uA}{\partial x}-u\frac{\partial\alpha uA}{\partial x}+\alpha u A\frac{\partial u}{\partial x} \\
&=u\left\{\frac{\partial A}{\partial t}+\frac{\partial uA}{\partial x}\right\}+u\frac{\partial(\alpha-1)uA}{\partial x}+A\left\{\frac{\partial u}{\partial t}+\alpha u\frac{\partial u}{\partial x}\right\}.
\end{split}
\end{align}

The first term is the mass conservation equation~\eqref{eq:mass-conservation} and is therefor zero. Following the simplifications made by~\cite{sherwin03b} we now assume inviscid flow with a \emph{flat profile}, which implies that \(f=0\) and \(\alpha=0\).

This will not be a big deal, since it was shown in~\cite{canic03} that the source term is one order of magnitude smaller than the effects of non-linear advection. Since the inviscid flow does not generate boundary layer it is physically reasonable to assume the aforementioned flat velocity profile.

Even though this might seem a crude assumption\footnote{The analytical solution of pulsatile flow in a straight cylindrical elastic tube is given in reference~\cite{womersley55}. Analytical solutions for an initially stressed, anisotropic elastic tube are presented in reference~\cite{tsangaris89}.}, comparison with experimental data~\cite{hunter72} has shown that blood velocity profiles are rather flat on average. Furthermore, this assumption simplifies the analysis. However, we should stress that the methods to be presented in the following sections may be readily extended to the case \(\alpha\neq1\). Moreover, due to its wide use in the literature we will also assume the model M3) for the pressure of subsection~\ref{subsec:tube_laws}.

Using this and the continuity equation~\eqref{eq:mass-conservation} we can write the governing equations in terms of the \((A,u)\) variables as
\begin{align}
    \frac{\partial A}{\partial t}+\frac{\partial uA}{\partial x}&=0,\label{eq:mc_sher}\\
    \frac{\partial uA}{\partial t}+\frac{\partial u^2/2}{\partial x}&=-\frac{1}{\rho}\frac{\partial p}{\partial x}.\label{eq:mom_sher}
\end{align}

Equivalently, we can write the system in \emph{conservative} form as
\begin{equation}\label{eq:system1D}
    \UU_t+\FF_x=0,
\end{equation}

with
\begin{equation}
\UU=\begin{bmatrix}A\\u\end{bmatrix},\qquad \FF=\begin{bmatrix}uA\\ \frac{u^2}{2}+\frac{p}{\rho}\end{bmatrix},
\end{equation}

denoting the derivatives with subscripts. With this rearrangement of the system, we will obtain now the characteristic curves and variables.

\subsection{Characteristic variables}
The first step is to use the chain rule in the equation~\eqref{eq:pressure}. Thus, we have
\begin{equation}
    \frac{\partial p}{\partial x}=\frac{\partial p}{\partial A}\frac{\partial A}{\partial x}+\frac{\partial p}{\partial \beta}\frac{\partial \beta}{\partial x}+\frac{\partial p}{\partial A_0}\frac{\partial A_0}{\partial x}
    =\frac{\beta}{2\sqrt{A}}\frac{\partial A}{\partial x}+\frac{\partial p}{\partial \beta}\frac{\partial \beta}{\partial x}+\frac{\partial p}{\partial A_0}\frac{\partial A_0}{\partial x}.
\end{equation}

Hence, we can rewrite system~\eqref{eq:system1D} in its \emph{quasi-linear} form as
\begin{equation}\label{eq:eigen}
\textbf{U}_t+\textbf{H}\textbf{U}_x=\begin{bmatrix}A_t\\u_t\end{bmatrix}+\begin{bmatrix}u&A\\c^2/A&u\end{bmatrix}\begin{bmatrix}A_x\\u_x\end{bmatrix}=\begin{bmatrix}0\\g\end{bmatrix}=\textbf{G},
\end{equation}

with
\begin{align}
\begin{split}
c^2&=\frac{\beta\sqrt{A}}{2\rho},\\ 
g&=\frac{1}{\rho}\left(-\frac{\partial p}{\beta}\frac{\partial\beta}{\partial x}+\frac{\partial p}{\partial A_0}\frac{\partial A_0}{\partial x}\right).
\end{split}
\end{align}

To see what kind of problem we are dealing with, we look at the eigenvalues of \(\textbf{H}\). We can diagonalise \(\textbf{H}\) as
\begin{equation}\label{eq:left_eigensystem}
\textbf{H}=\textbf{P}^{-1}\textbf{D}\textbf{P}
\end{equation}

with
\begin{equation}\label{eq:expression_eigenvalues}
\textbf{P}=\begin{bmatrix}\frac{c}{A}&1\\-\frac{c}{A}&1\end{bmatrix},\qquad \textbf{D}=\begin{bmatrix}u+c&0\\0&u-c\end{bmatrix}.
\end{equation}

Now, \(\textbf{D}_{11}>0\) and, since \(\beta\) is usually considerably bigger than the typical blood velocities, we have that \(\textbf{D}_{22}<0\). For the typical values of the parameters found in the literature we refer to appendix~\ref{appendix:bio}. Hence, we have two real and distinct eigenvalues for the quasi-linear problem. This means that our problem is strictly hyperbolic. 

\medskip

We recall that we are looking for the characteristic variables, the Riemann invariants. This is equivalent to find some variables that satisfy some equation in conservative form such as equation~\eqref{eq:system1D}. We do this in the following way. With the decomposition~\eqref{eq:left_eigensystem} (left eigensystem) we can reformulate the system~\eqref{eq:eigen} as
\begin{equation}
\textbf{U}_t+\textbf{H}\textbf{U}_x=\textbf{G}\equiv\textbf{U}_t+\textbf{P}^{-1}\textbf{D}\textbf{P}\textbf{U}_x=\textbf{G}\equiv\textbf{P}\textbf{U}_t+\textbf{D}\textbf{P}\textbf{U}_x=\textbf{P}\textbf{G}.
\end{equation}

If \(\textbf{P}\) were a constant matrix with respect to the solution variables, we could rename \(\textbf{P}\textbf{U}=:\textbf{W}\) so we could get a conservative system. Following this idea, we are looking for \(\textbf{W}\) such that
\begin{align}
    \textbf{W}_t&=\textbf{P}\textbf{U}_t,\\
    \textbf{W}_x&=\textbf{P}\textbf{U}_x,
\end{align}

where we recall that the matrix \(\textbf{P}\) depends on \(A\). On the other hand, by the chain rule we have
\begin{equation}
\frac{\partial\textbf{W}(A,u)}{\partial t}=\textbf{W}_AA_t+\textbf{W}_uu_t=\left(\nabla\textbf{W}\right)\textbf{U}_t,
\end{equation}

being \(\nabla\textbf{W}\) the gradient of \(\textbf{W}\) (the procedure with \(\textbf{W}_x\) is the same). Therefor 
\begin{align}
\textbf{P}\textbf{U}_x&=\textbf{W}_x=(\nabla\textbf{W})\textbf{U}_x,\\
\textbf{P}\textbf{U}_t&=\textbf{W}_t=(\nabla\textbf{W})\textbf{U}_t
\end{align}

which implies that
\begin{equation}
\textbf{P}=\nabla\textbf{W}
\end{equation}

if and only if
\begin{align}
    \frac{\partial W_1}{\partial A}&=\frac{c}{A},&
    \frac{\partial W_1}{\partial u}&=1,\\
    \frac{\partial W_2}{\partial A}&=-\frac{c}{A},&
    \frac{\partial W_2}{\partial u}&=1.
\end{align}

Integrating we arrive to:
\begin{align}\label{eq:W12}
\begin{split}
   W_{1,2}&=\int_{u_{\textrm{ref}}}^u\dd u\pm\int_{A_{\textrm{ref}}}^A\frac{c}{A}\dd A=u-u_{\textrm{ref}}\pm\int_{A_{\textrm{ref}}}^A\frac{c}{A}\dd A\\
   &=u\pm4\sqrt{\frac{\beta}{2\rho}}A^{1/4},
\end{split}
\end{align}

where \((u_{\textrm{ref}},A_{\textrm{ref}})=(0,0)\) is taken as the reference state. This assumption has been made previously in~\cite{sherwin03}. The characteristic variables given by equation~\eqref{eq:W12} are also Riemann invariants of the system~\eqref{eq:mc_sher} and~\eqref{eq:mom_sher} in terms of the \((A,u)\) variables. We have achieved an explicit expression for the characteristic curves. This will allow us to use them and to check some interesting properties for practical uses. Among others, in section~\ref{sec:PDE} we will be able to use these expressions to check the existence of a global smooth solution.

Finally, since in a coherent solution \(\beta(x,t)\) is always positive, we can write the variables \((A,u)\) in terms of \((W_1,W_2)\) as
\begin{gather}
    A=\left(\frac{W_1-W_2}{4}\right)^4\left(\frac{\rho}{2\beta}\right)^2,\label{eq:recA}\\
    u=\frac{W_1+W_2}{2}.\label{eq:recu}
\end{gather}

The above result has been previously obtained in~\cite{sherwin03,sherwin03b} and with a more general system in terms of the variables \((A,Q)\) in~\cite{formaggia99,formaggia01}.
\section{Sufficient conditions for smooth flow}\label{sec:PDE}
In this section we will prove some useful theoretical results of hyperbolic systems. Theorems~\ref{theo:collapse} and~\ref{theo:smooth} first appeared in~\cite{canic03} and theorem~\ref{theo:shock} is an adaptation of the one stated by~\cite{canic03} but using our own problem and assumptions. The proofs are mainly based on the study of the behaviour of the solution and its derivative along the characteristics (see~\cite{lax57,lax64}).

Let
\begin{equation}\label{eq:hyperb}
    \UU_t+\FF\left(\UU\right)_x=0,\qquad x\in\RR,\ t>0
\end{equation}

be a \(2\times2\) system of conservation laws where \(\UU\left(x,t\right)\in\RR^2\) and \(\FF\colon\RR^2\to\RR^2\) is a smooth function of \(\UU\). We shall assume that the system is strictly hyperbolic, that is, there exist two real and distinct eigenvalues \(\lambda_1>\lambda_2\). Suppose we have the above system in its characteristic form
\begin{align}\label{eq:characteristic_problem}
    \frac{\partial W_1}{\partial t}+\lambda_1\left(W_1,W_2\right)\frac{\partial W_1}{\partial x} &=0,\\
    \frac{\partial W_2}{\partial t}+\lambda_2\left(W_1,W_2\right)\frac{\partial W_2}{\partial x} &=0
\end{align}

where \(W_1,W_2\), the \emph{characteristic variables} or \emph{Riemann invariants}, are the unknown functions and \(\lambda_1,\lambda_2\) are smooth functions of \(W_1\) and \(W_2\). We note that due to the hyperbolicity of the system~\eqref{eq:hyperb} we can always do this at least locally. We will also assume that the system~\eqref{eq:hyperb} is non-linear in the considered domain, that is
\[\frac{\partial\lambda_1}{\partial W_1}\neq0\qquad\textrm{and}\qquad\frac{\partial\lambda_2}{\partial W_2}\neq0.\]
Consider the domain
\begin{equation}
    D=\{\left(x,t\right)\colon t\geq0, x_1\left(t\right)\leq x<+\infty\}
\end{equation}

with \(x_1\left(t\right)\in\RR\). Here we have the initial boundary--value problem
\begin{gather}\label{eq:characteristic_conditions}
    W_1\left(x,0\right)=W_1^0\left(x\right),\qquad W_2\left(x,0\right)=W_2^0\left(x\right)\quad\forall x\in[x_1(0),+\infty[\\
    W_2\left(x_1\left(t\right),t\right) = g\left(W_1\left(x,t\right),t\right)
\end{gather}

where we can assume without loss of generality that \(x_1\left(0\right)=0\).

The first important result is if we can assure that the tube does not collapse spontaneously. If this did not happen, the arterial amplitude could shrink over time until the vessel gets blocked. Indeed, with the notation and conclusions from the previous subsection, taking \(\lambda_1=\textbf{D}_{11}\) and \(\lambda_2=\textbf{D}_{22}\) we have:
\begin{theo}\label{theo:collapse}
Suppose that the left boundary \(x_1(t)=0\) is non-characteristic (\textit{i.e.} \(\lambda_1<x_1'<\lambda_2\)). Let \(x_2(t)\) be the forward characteristic emanating from the origin. If \(A(x,0)>0\) and if \(A(x_1(t),t)>0\) on the left boundary, then \(A(x,t)>0\), \(\forall(x,t)\in D_2^T\), and so system~\eqref{eq:mc_sher} and~\eqref{eq:mom_sher} is strictly hyperbolic in \(D_2^T\) where
\begin{equation}
    D_2^T=\left\{(x,t)\colon0\leq t\leq T,\; x_1(t)\leq x\leq x_2(t)\right\},
\end{equation}

for every \(T>0\).  
\end{theo}
\begin{proof}
Let \(x=x(t)\) be a solution curve of the ODE
\begin{equation}
        \frac{\dd x}{\dd t}=u(x,t).
\end{equation}

Mass-conservation equation~\eqref{eq:mc_sher} implies that along \(x=x(t)\) the cross-sectional area satisfies
\begin{equation}
    \frac{\partial A}{\partial t}+u\frac{\partial A}{\partial x}=-A\frac{\partial u}{\partial x}.
\end{equation}

Suppose that \((x^*,t^*)\in D_2^T\) is such that \(A(x^*,t^*)=0\) with \(t^*>0\), \(x^*>x_1\). This implies, due to equation~\eqref{eq:expression_eigenvalues} \(\lambda_1=\lambda_2=u(x^*,t^*)\). We also see that up to \((x^*,t^*)\) the integral curve \(x=x(t)\) passing through \((x^*,t^*)\) lies between the characteristic curves through \((x^*,t^*)\). Therefor, it either intersects the \(t=0\) axis or it intersects the left boundary. Suppose that \(x=x(t)\) intersects the initial line \(t=0\); denote that point by \((x(t^*,x^*),0)=(x_0,0)\). The solution of the ODE satisfied by \(A\) along \(x=x(t)\) is given by 
\begin{equation}
    A(x^*,t^*)=A(x_0,0)e^{-\int_0^{t^*}(\partial u/\partial x)\dd\tau},
\end{equation}

and we see that \(A(x^*,t^*)=0\) if and only if \(A(x_0,0)=0\) which contradicts the assumption that initially \(A(x,0)>0\) for every \(x\). The same reasoning applies to the case when \(x=x(t)\) intersects the left boundary. Hence, by contradiction the conclusion holds. 
\end{proof}

Now that we have assure this, the natural question would be if there is a unique, smooth solution. For this, we have the following theorem\footnote{For compactness we will use the apostrophe for denoting the derivative when there is no possibility of confussion.}:
\begin{theo}\label{theo:smooth}
With the previous notation, let us suppose  that the following hypotheses hold:
\begin{enumerate}
    \item[H1] Initial values \(W_1^0,W_2^0\) and the function \(g\) are \(\Cf^1\) and the boundary \(x_1\in\Cf^2\).
    \item[H2] The boundary \(x_1\) satisfies
    \begin{equation}
        \lambda_2\left(W_1,W_2\right)<x_1'\left(t\right)<\lambda_1\left(W_1,W_2\right)
    \end{equation}

    on \(x=x_1\left(t\right)\) and
    \begin{gather}
        \lambda_1\left(W_1,W_2\right)-x_1'\left(t\right)\geq M\left(\overline{t},\overline{W_1},\overline{W_2}\right),\\
        \forall0\leq t\leq \overline{t},\;\forall|W_1|\leq \overline{W_1},\;\forall|W_2|\leq \overline{W_2}
    \end{gather}

    where \(M\left(\overline{t},\overline{W_1},\overline{W_2}\right)>0\).
    \item[H3]  \(\|\left(W_1^0,W_2^0\right)\|_{\Cf^0}\) is bounded and \(\left(W_2^0\right)'\left(x\right)\leq0,\left(W_1^0\right)'\left(x\right)\geq0\) for \(0\leq x<+\infty\).
    \item[H4] The dependence of \(g\) on \(W_2\) is such that \(\partial g/\partial W_2\geq0\).
    \item[H5] The eigenvalues satisfy \(\partial\lambda_2/\partial W_2<0\), \(\partial\lambda_1/\partial W_1>0\).
    \item[H6] The following compatibility conditions holds:
    \begin{align}
        &W_1^0\left(0\right)=g\left(W_2^0\left(0\right),0\right),\\
        &x_1'\left(0\right)-\lambda_1\left(W_1^0\left(0\right),W_2^0\left(0\right)\right)\left(W_1^0\right)'\left(0\right)\\
        &=\frac{\partial g}{\partial t}\left(W_2^0\left(0\right),0\right)+\frac{\partial g}{\partial W_2}\left(W_2^0\left(0\right),0\right)\left(x_1'\left(0\right)-\lambda_2\left(W_1^0\left(0\right),W_2^0\left(0\right)\right)\right)\left(W_2^0\right)'\left(0\right).
    \end{align}

\end{enumerate}
If \(\partial g/\partial t\leq0\) the initial boundary problem~\eqref{eq:characteristic_problem},~\eqref{eq:characteristic_conditions} admits a unique global \(\Cf^1\) solution \(\left(W_1\left(x,t\right),W_2\left(x,t\right)\right)\) on the domain \(D\).
\end{theo}
\begin{proof}
Let \(x_2\left(t\right)\) be the forward characteristic emanating from the origin \((x,t)=(0,0)\). Using the results presented in the book of Li~\cite{li94} we can assure due to hypotheses H3, H4 and H5 that there exists a unique, global, \(\Cf^1\) solution in the domain 
\begin{equation}
    D_1=\{\left(x,t\right)\colon x\geq x_2\left(t\right),t\geq0\}.
\end{equation}

Furthermore, the behaviour of \(W_2\) along the characteristic \(x_2\left(t\right)\) passing through the origin is such that
\begin{equation}
    \frac{\partial W_2}{\partial t}\left(x_2\left(t\right),t\right)\leq0.
\end{equation}

In the rest of the proof we will try to extend the previous solution to the remaining domain 
\begin{equation}
    D_2=\{\left(x,t\right)\colon x_1\left(t\right)\leq x\leq x_2\left(t\right),t\geq0\}.
\end{equation}

\begin{figure}[t]
    \centering
    \includegraphics[trim={0 18cm 0 0},clip,width=\linewidth]{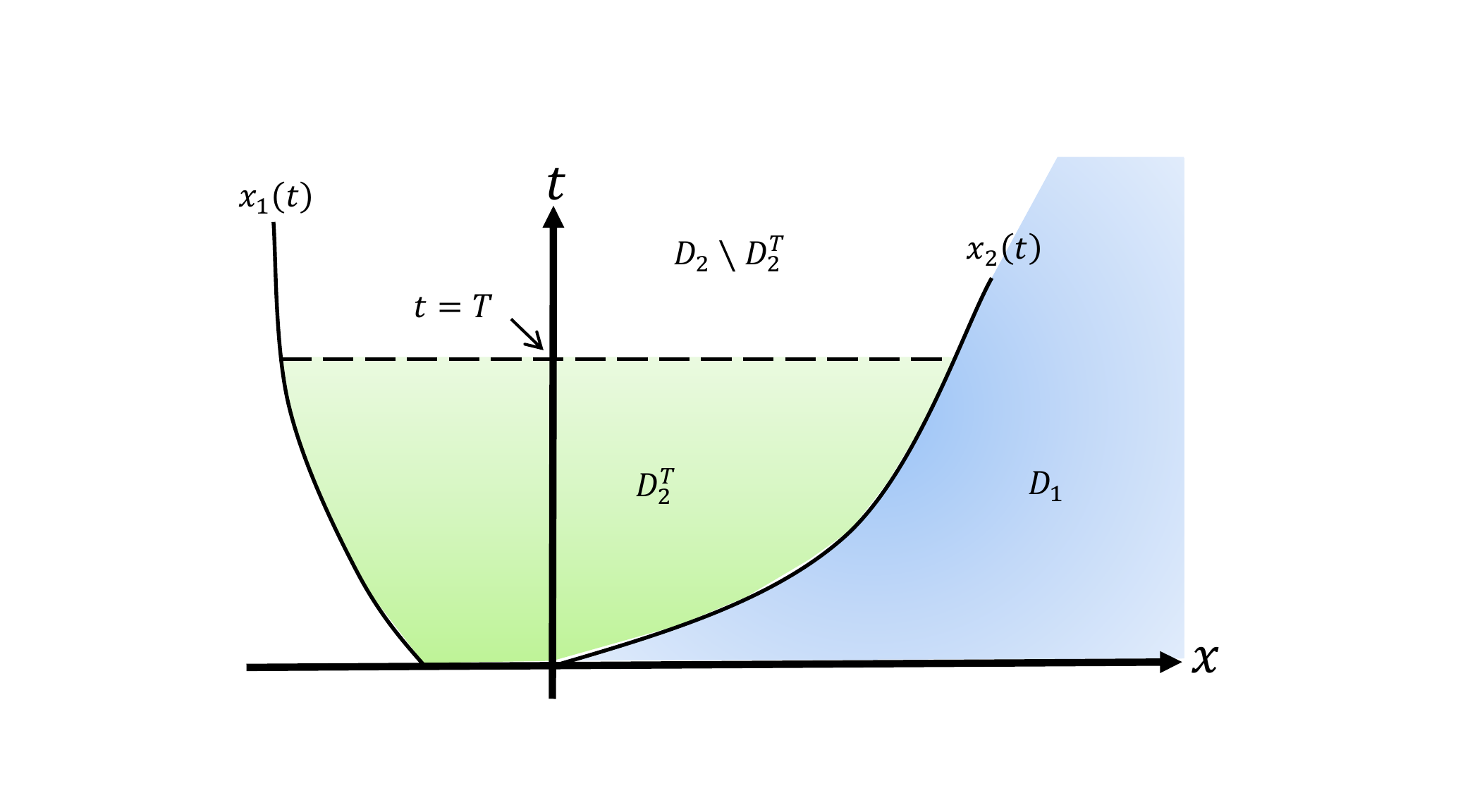}
    \caption{The subdomains \(D_1\), \(D_2\) and \(D_2^T\).}
    \label{fig:domain_char}
\end{figure}

 Let us see that for any fixed \(\overline{t}>0\) and \(0<T\leq\overline{t}\), the \(\Cf^1\) norm of the solution over the domain 
 \begin{equation}
 D_2^T=\{\left(x,t\right)\colon0\leq t\leq T,\ x_1\left(t\right)\leq x\leq x_2\left(t\right)\}    
 \end{equation}

 is bounded independently of \(T\), namely, \(\|\left(W_1,W_2\right)\|_{\Cf^1\left(D_2^T\right)}\leq C\left(\overline{t}\right)\), where \\\(C\left(\overline{t}\right)>0\) is independent of \(0<T\leq\overline{t}\). For a scheme of the situation see figure~\ref{fig:domain_char}. 
This will be done in two steps. First we will check the boundedness in \(\Cf^0\) norm. Afterwards, we will see the boundedness of the derivative (the spatial derivative).

Let \(\left(x,t\right)\in D_2^T\). In order to check the boundedness in \(\Cf^0\) norm, we will study the behaviour of the solution and its derivative along the forward and backward characteristics passing through this point. Using H2, any forward characteristic (it will have slope \(\lambda_1>0\)) passing through \(\left(x,t\right)\) must intersect the boundary \(x_1\) at one (only one) point: let this point be \(\left(\xi\left(x,t\right),\omega\left(x,t\right)\right)\). Analogously, due to the hiperbolicity, any backward characteristic (slope \(\lambda_2<0\)) must intersect the characteristic curve \(x_2\) at one (only one) point: let this point be \(\left(\eta\left(x,t\right),\beta\left(x,t\right)\right)\). Denote by \(W_2^2\left(t\right)\) the value of \(W_2\) along the characteristic boundary \(x_2\). Then
\begin{align}
    W_2\left(x,t\right)&=W_2^2\left(\beta\left(x,t\right)\right),\\
    W_1\left(x,t\right)&=g\left(W_2\left(\xi\left(x,t\right),\omega\left(x,t\right)\right),\omega\left(x,t\right)\right).\label{eq:22}
\end{align}

Since \(\beta\left(x,t\right)\leq t\) we have
\begin{equation}\label{eq:23}
    \|W_2\|_{\Cf^0\left(D_2^T\right)}\leq C\left(\overline{t}\right),\;\forall\left(x,t\right)\in D_2^T\quad\textrm{with }0<T<\overline{t}.
\end{equation}

Now, since \(\omega\left(x,t\right)\leq t\), equations~\eqref{eq:22} and estimation~\eqref{eq:23} imply
\begin{equation}
    \|W_1\|_{\Cf^0\left(D_2^T\right)}\leq C\left(\overline{t}\right),\;\forall\left(x,t\right)\in D_2^T\quad\textrm{with }0<T<\overline{t}.
\end{equation}

So we have achieved a uniform \(\Cf^0\) estimate of the solution.

Let us focus now on its derivatives, starting with the estimation of \(\partial W_2/\partial x\). When this is done, we repeat the process with \(\partial W_1/\partial x\) (although a different strategy will be needed). Let
\begin{equation}
    v=e^{k\left(W_1,W_2\right)}\frac{\partial W_2}{\partial x}
\end{equation}

where \(k\) is defined by means of
\begin{equation}
    \frac{\partial k}{\partial W_1}=-\frac{1}{\lambda_1-\lambda_2}\frac{\partial\lambda_2}{\partial W_1}.
\end{equation}

It is straight-forward to check that the following ordinary differential equation is satisfied by \(v\) along the backward characteristic \(x'\left(t\right)=\lambda\) where \(W_2\) is constant:
\begin{equation}\label{eq:27}
    \frac{\partial v}{\partial t}+\lambda_2\left(W_1,W_2\right)\frac{\partial v}{\partial x}=-e^{k\left(W_1,W_2\right)}\frac{\partial\lambda_2}{\partial W_2}v^2
\end{equation}

whose initial condition will be given on \(x_2\left(t\right)=\lambda_1\left(W_1^0,W_2^2\right)\). In order to see what form has it we note that, on \(x_2\), we have that 
\begin{equation}
    \left(W_2^2\right)'\left(t\right)=\frac{\partial W_2}{\partial t}+\lambda_1\frac{\partial W_2}{\partial x}.
\end{equation}

Since
\begin{equation}
    \frac{\partial W_2}{\partial t}=-\lambda_2\frac{\partial W_2}{\partial x}\implies \frac{\partial W_2}{\partial x}=\frac{\left(W_2^2\right)'\left(t\right)}{\lambda_1-\lambda_2}
\end{equation}

on \(x_2\). Therefor, the initial condition can be written as
\begin{equation}\label{eq:28}
    v|_{x_2}=\frac{e^{k\left(W_1^0,W_2^2\right)}}{\left(\lambda_1\left(W_1^0,W_2^2\right)-\lambda_2\left(W_1^0,W_2^2\right)\right)}\left(W_2^2\right)'\left(t\right).
\end{equation}

The initial-value problem~\eqref{eq:27} and~\eqref{eq:28} has a solution \(v\) which is given by
\begin{equation}
    v\left(x,t\right)=\frac{e^{k\left(W_1^0,W_2^2\left(\beta\right)\right)}\left(W_2^2\right)'\left(\beta\right)}{B\left(\beta\left(x,t\right),t\right)}
\end{equation}

where
\begin{align}
    B\left(\beta,t\right)=&\lambda_1\left(W_1^0,W_2^2\left(\beta\right)\right)-\lambda_2\left(W_1^0,W_2^2\left(\beta\right)\right)\\
    &+\left(W_2^2\right)'\left(\beta\right)e^{k\left(W_1^0,W_2^2\left(\beta\right)\right)}
    \\&\times\int_\beta^t\left[\frac{\partial\lambda_2}{\partial W_2}\left(W_1\left(\widetilde{x}\left(\beta,\tau\right),\tau\right),W_2^2\left(\beta\right)\right) e^{k\left(W_1\left(\widetilde{x}\left(\beta,\tau\right),\tau\right),W_2^2\left(\beta\right)\right)}\right]\dd\tau.
\end{align}

Here, \(x=\widetilde{x}\left(\beta,\tau\right)\) denotes the backward characteristic passing through the point \(\left(\eta,\beta\right)\). Using hypothesis H2 we have that \(\lambda_1-\lambda_2\) is bounded uniformly away from zero, and using hypotheses H3 and H5 we have that \(\left(W_2^2\right)'\partial\lambda_2/\partial W_2\geq0\). With this, we conclude that \(B\) is non-zero and hence we have arrived to a uniform bound of \(v\), that is, a uniform bound of \(\partial W_2/\partial x\):
\begin{equation}
    \left|\frac{\partial W_2}{\partial x}\left(x,t\right)\right|\leq C\left(\overline{t}\right),\quad\forall\left(x,t\right)\in D_2^T,\quad0<T\leq\overline{t}.
\end{equation}

Furthermore, we get that the sign of \(\partial W_2/\partial x\) is the same that \(\left(W_2^2\right)'\left(t\right)\) which is negative.

Finally, we estimate \(\partial W_1/\partial x\), in turn, in three steps:
\begin{enumerate}
    \item[Step 1] Let us see the sign of \(\partial W_1/\partial x\) on \(x=x_1\left(t\right)\). Differentiating \(W_1=g\left(W_2,t\right)\) along \(x=x_1\left(t\right)\) we have
    \begin{equation}\label{eq:W_1'}
        \frac{\partial W_1}{\partial x}=\frac{1}{x_1'\left(t\right)-\lambda_1}\left[\frac{\partial g}{\partial W_2}\cdot\left(x_1'\left(t\right)-\lambda_2\right)\cdot\frac{\partial W_2}{\partial x}\right].
    \end{equation}

Now, since
\begin{itemize}
    \item \(\lambda_2<x_1'\left(t\right)<\lambda_1\) by H2;
    \item \(\partial W_2/\partial x<0\) as we said previously;
    \item \(\partial g/\partial W_2\geq0\) by H4;
    \item \(\partial g/\partial t\leq 0\) by assumption
\end{itemize}
we obtain that
\begin{equation}
        \frac{\partial W_1}{\partial x}=\underbrace{\frac{1}{x_1'\left(t\right)-\lambda_1}}_{<0}\left[\underbrace{\frac{\partial g}{\partial W_2}}_{\geq0}\cdot\left(\underbrace{x_1'\left(t\right)-\lambda_2}_{>0}\right)\cdot\underbrace{\frac{\partial W_2}{\partial x}}_{<0}\right].
\end{equation}

and hence \(\partial W_1/\partial x\geq0\) on \(x_1\left(t\right)\).

\item[Step 2] Now we check the sign of \(\partial W_1/\partial x\) but now in \(D_2^T\). Since \(W_1\) is constant along the forward characteristic (is a Riemann invariant) we have
\begin{equation}
        W_1\left(x,t\right)=W_1\left(\xi\left(x,t\right),\omega\left(x,t\right)\right),\quad\forall\left(x,t\right)\in D_2^T
\end{equation}
where \(\xi\left(x,t\right)=x_1\left(\omega\left(x,t\right)\right)\). Therefor,
\begin{align}
        \frac{\partial W_1}{\partial x}\left(x,t\right)&=\frac{\partial W_1}{\partial t}\left(\xi,\omega\right)\frac{\partial\omega}{\partial x}\left(x,t\right)+\frac{\partial W_1}{\partial x}\left(\xi,\omega\right)\frac{\partial\xi}{\partial x}\left(x,t\right)\\
        &=\left(x_1'\left(t\right)-\lambda_1\right)\frac{\partial W_1}{\partial x}\left(\xi,\omega\right)\frac{\partial\omega}{\partial x}\left(x,t\right).
\end{align}

To determine the sign of \(\partial W_1/\partial x\) in \(D_2^T\) we first notice that \(x_1'-\lambda_1<0\) and \(\partial W_1/\partial x\left(\xi,\omega\right)\geq0\). From the definition of \(\omega\), we see that it decreases as \(x\) increases, so \(\partial\omega/\partial x\leq0\). With this, we conclude that \(\partial W_1/\partial x\geq0\) in \(D_2^T\).
\item[Step 3] Finally, let us see the \(\Cf^0\)-estimate of \(\partial W_1/\partial x\) in \(D_2^T\). Let
\begin{equation}
        u=e^{h\left(W_2,W_1\right)}\frac{\partial W_2}{\partial x}
\end{equation}

where
\begin{equation}
        \frac{\partial h}{\partial W_2}=\frac{1}{\lambda_1-\lambda_2}\frac{\partial\lambda_1}{\partial W_2}.
\end{equation}

This function \(u\) satisfies the ODE
\begin{equation}
        \frac{\partial u}{\partial t}=-e^{-h\left(W_2,W_1\right)}\frac{\partial\lambda_1}{\partial W_1}u^2
\end{equation}

along the characteristic \(x'=\lambda_1\). The initial condition is given on\\ \(x_1\left(t\right)=\left(\xi\left(x,t\right),\omega\left(x,t\right)\right)\) and has the form
\begin{equation}
        u\left(\xi,\omega\right)=e^{h\left(W_1\left(\xi,\omega\right),W_2\left(\xi,\omega\right)\right)}\frac{\partial W_1}{\partial x}\left(\xi,\omega\right).
\end{equation}

By integration we get
\begin{equation}\label{eq:Aexplodes}
        u\left(x,t\right)=\frac{e^{h\left(W_1\left(\xi,\omega\right),W_2\left(\xi,\omega\right)\right)}}{A\left(\omega,t\right)}\frac{\partial W_1}{\partial x}\left(\xi,\omega\right)
\end{equation}

with
\begin{align}
\begin{split}
        A\left(\omega,t\right)&=1+\frac{\partial W_1}{\partial x}\left(\xi,\omega\right)e^{h\left(W_1\left(\xi,\omega\right),W_2\left(\xi,\omega\right)\right)}\\
        &\times\int_\omega^t\frac{\partial\lambda_1}{\partial W_1}\left(W_1\left(\widetilde{x}\left(\omega,\tau\right),\tau\right),W_2\left(\xi,\omega\right)\right)e^{-h\left(W_1\left(\widetilde{x}\left(\omega,\tau\right),\tau\right),W_2\left(\xi,\omega\right)\right)}\dd\tau
\end{split}
\end{align}

where \(x=\widetilde{x}\left(\omega,\tau\right)\) is the forward characteristic passing through the point \(\left(\xi,\omega\right)\). Now, due to hypothesis H5 we have that \(\partial\lambda_1/\partial W_1\geq0\) and due to Step 1 we have that \(\partial W_1/\partial x\geq0\) on \(x_1\left(t\right)\) so, we see that \(A\left(\omega,t\right)\) is never zero. With this, using hypothesis H2 and the uniform estimates of \(W_2,W_1\) and \(\partial W_2/\partial x\) in \(D_2^T\) we get the uniform estimate for \(\partial W_1/\partial x\) on \(D_2^T\).

Concluding, since this estimate is independent of \(T\) and \(D=D_1\cup D_2\) we complete the proof.
\end{enumerate}
\end{proof}

Let us check if these hypotheses are satisfied for our problem. Writing all together we have:
\begin{align}
        W_{1,2}&=u\pm4\sqrt{\frac{\beta}{2\rho}}A^{1/4},\\
        A&=\left(\frac{W_1-W_2}{4}\right)^4\left(\frac{\rho}{2\beta}\right)^2,\label{eq:Aw}\\
        u&=\frac{W_1+W_2}{2},\label{eq:uW}\\
        x_1(t)&=0\;\forall t,\\
        W_2(x_1(t),t)&=W_2(0,t)=\uinf-4\sqrt{\frac{\beta}{2\rho}}\Ainf^{1/4},\label{eq:g}\\
        \lambda_{1,2}&=u\pm4\sqrt{\frac{\beta}{2\rho}}A^{1/4},\\
        W_1^0(x)&=W_2^0(x)=u_0(x)\pm4\sqrt{\frac{\beta}{2\rho}}A_0^{1/4},
\end{align}

where \(\uinf=u(0,t)\) and \(\Ainf=A(0,t)\). Now, regarding the hypotheses:
\begin{enumerate}
    \item[H1] We need the initial velocity and the boundary conditions to be \(\Cf^1\), which is physically correct (without abrupt accelerations).
    \item[H2] Since \(\lambda_1>0\) and \(\lambda_2<0\) as we said before, this hypothesis is satisfied.
    \item[H3] The boundedness is achieved due to biological reasons, and we need to choose \(u_0\) to be constant.
    \item[H4] Replacing equations~\eqref{eq:Aw} and~\eqref{eq:uW} into the expression of \(g\) \eqref{eq:g}, we see that this hypothesis is also satisfied.
\end{enumerate}
But the problem comes with the fifth and sixth hypotheses, that are false in our case. Indeed, 
\begin{equation}
\lambda_2=W_2\implies \frac{\partial\lambda_2}{\partial W_2}=1>0.
\end{equation}

Nevertheless, since it is a sufficient condition, we could get smoothness of the solution in this case too. In fact, this is checked via numerical simulations in chapter~\ref{chap:results}.

With a slightly modification of the equations and other tube law, it is shown in~\cite{canic03} that under some plausible conditions such as the no-singularity of the cross-sectional area and a pulsating boundary conditions (such as heart beats) the thesis of the previous theorem is achieved.

\subsection{Shock wave formation in compliant arteries}\label{subsec:poster}

We have just seen that we can not assure smooth solutions. This makes also clinical sense since if the heart beats are too abrupt, for example, the blood could be faster than the usual blood wave: this is precisely the condition for a \emph{shock wave} to form.

Motivated for this fact, we study when and where the first shock wave could be formed and which factors are relevant to it.

Next we present an original result where we obtain an explicit formulation for when the first shock wave appears. Furthermore, we will see the direct influence of the physical parameters to characterise from clinical data if this pathology happens. We have followed the steps of Čanić and Kim~\cite{canic03}, although they used a slight different formulation. Keener and Sneyd also obtained a similar (but less general) result in~\cite{keener98}. This result was presented in \(2017\) in the \textit{4º Congreso de jóvenes investigadores} (IV Conference for young researchers) (see~\cite{rodero17})

\begin{theo}\label{theo:shock}
Assuming constant initial data
\begin{equation}
    A(x,0)=A_0,\qquad u(x,0)=0,
\end{equation}

the time \(t_s\) of the first shock formation is given by
\begin{equation}
    t_s=\omega+\frac{\lambda_1}{\uinf'(t)}=\omega+\frac{\uinf+4\sqrt{\beta/(2\rho)}\Ainf^{1/4}}{\uinf'(t)}.
\end{equation}

where \(\omega\) is the first time when the forward characteristic intersects the left spatial boundary.
\end{theo}
\begin{proof}

In terms of the Riemann invariants, the initial data read \(W_1^0(x)=W_2^0(x)=u_0(x)\pm4\sqrt{\frac{\beta}{2\rho}}A_0^{1/4}\). For this set of initial data, \(W_1\) is constant everywhere in the region of smooth flow; the characteristics are straight lines in
\begin{equation}
    D_2=\{(x,t)\colon x_2(t)\leq x<+\infty,\ t\geq0\},
\end{equation}

where \(x_2(t)\) is the forward characteristic \(x_2'=\lambda_1\) emanating from \((0,0)\). The solution in region
\begin{equation}
    D_1=\{(x,t)\colon 0\leq x\leq x_2(t),\ t\geq0\},
\end{equation}

bounded by the left boundary \(x_1=0\) and the forward characteristic \(x_2\), is driven by \(u(\cdot,t)\) on \(x_1\) and will develop shock waves due to the fact that \(u'(\cdot,t)\) changes sign.

To estimate the time \(t_s\) we note that at the point \((t_s,x_s)\) the partial derivative \(\partial W_1/\partial x\) blows up, as we can see in figure~\ref{fig:char_shock}. This occurs at the point where the denominator \(A(\omega,t)\) in~\eqref{eq:Aexplodes} becomes equal to zero. We recall that its expression was
\begin{align}\label{eq:Aalpha}
\begin{split}
        A\left(\omega,t\right)&=1+\frac{\partial W_1}{\partial x}\left(\xi,\omega\right)e^{h\left(W_1\left(\xi,\omega\right),W_2\left(\xi,\omega\right)\right)}\\
        &\times\int_\omega^t\frac{\partial\lambda_1}{\partial W_1}\left(W_1\left(\widetilde{x}\left(\omega,\tau\right),\tau\right),W_2\left(\xi,\omega\right)\right)e^{-h\left(W_1\left(\widetilde{x}\left(\omega,\tau\right),\tau\right),W_2\left(\xi,\omega\right)\right)}\dd\tau.
\end{split}
\end{align}

Hence, it can be calculated by recalling that \(\lambda_1=W_1\), what means that \(\partial\lambda_1/\partial W_1=1\) and that \(W_1=W_1^0\) everywhere. This implies that in~\eqref{eq:Aalpha}
\begin{equation}
    e^{h\left(W_1\left(\xi,\omega\right),W_2\left(\xi,\omega\right)\right)-h\left(W_1\left(\widetilde{x}\left(\omega,\tau\right),\tau\right),W_2\left(\xi,\omega\right)\right)}=1
\end{equation}

and so
\begin{equation}
    A(\omega,t)=1+\frac{\partial W_1}{\partial x}(\xi,\omega)(t-\omega). 
\end{equation}

From equation~\eqref{eq:W_1'} we see that
\begin{equation}
    \left.\frac{\partial W_1}{\partial x}\right|_{x_1=0}=-2\frac{u'(t)}{\lambda_1}+\frac{\lambda_1}{\lambda_2}\frac{\partial W_2}{\partial x}.
\end{equation}

Since \(\partial W_2/\partial x=0\), we obtain
\begin{equation}
    A(\omega,t)=1-\frac{u'(t)}{\lambda_1}(t-\omega).
\end{equation}

Therefor, isolating, the first time the shock forms is equal to
\begin{equation}
    t_s=\omega+\frac{\lambda_1}{\uinf'(t)}=\omega+\frac{\uinf+4\sqrt{\beta/(2\rho)}\Ainf^{1/4}}{\uinf'(t)}.
\end{equation}
\end{proof}

Two remarkable conclusions become deduced from this result: the shock will be produced earlier if the inflow accelerates and if the walls of the vessel are less rigid (due to the \(\beta\) factor).

A case that could be interesting from a medical point of view is to see, in the case of the aorta since is the one where we can find more literature, if this model can explain the pistol-shot heard in aortic insufficiency. Taking as an approximation the measures done in~\cite{erbel06,mao08,cozijnsen11} we assume \(A_0=\Ainf\approx4\times10^{-2}m\). Regarding blood flow velocity we have assumed \(1\) m/s, following the measures found in~\cite{mowat83}. Recalling that
\begin{equation}
    \beta=\frac{\sqrt{\pi}Eh}{(1-\nu^2)A_0},
\end{equation}

we will use the parameters of appendix~\ref{appendix:bio}.

Now, for a healthy human being we have taken \(\uinf'=7\)m/s\textsuperscript{2}, following the correlations of~\cite{mowat83} and a Young's modulus of \(E=10^5\ Pa\). 
Hence, the value for the first time and place when shock forms is
\begin{equation}
    t_s\approx \frac{1+4\sqrt{29633.5/(2\times1050)}(4\times10^{-2})^{1/4}}{7}\approx1.1\text{s}
\end{equation}

and
\begin{equation}
    x_s=t_s\lambda_1\approx8.5\textrm{m},
\end{equation}

which, according to~\cite{dotter50} is far from the mean length value of the aorta, \(33.2\) cm.

Now, for a sick human being, say by aortic insufficiency, the heart increases its volume and since the aortic valve does not close properly, the muscle must do a greater contraction, so a greater blood flow acceleration happens in each heart beat (see~\cite{hall15}). Using the same bibliography as we have previously consulted, we could have that \(\uinf'=15\) m/s\textsuperscript{2}. Moreover, the arteries could be not rigid enough (what may cause an aneurysm), say \(E=2\cdot10^{3}Pa\). With this variation we would have:
\begin{equation}
    t_s\approx \frac{1+4\sqrt{592.67/(2\times1050)}(4\times10^{-2})^{1/4}}{15}\approx0.13\text{s}
\end{equation}

and
\begin{equation}
    x_s=t_s\lambda_1\approx0.25\textrm{m}.
\end{equation}

As it should be, this case is worse for the patient and a shock wave is formed inside the typical length of the aorta. For the interested reader, some research has been done in this line, although with different and sometimes less general models. See~\cite{elgarayhi13,painter08,shoucri07}. Henceforth, we move on to the numerical simulations. Although shock waves have not been simulated due to complexity of controlling the inflow data, healthy cases have been simulated. For these simulations the method used has been the Galerkin Discontinuous Finite Element Method, presented in the next chapter.

\chapter{Discontinuous Galerkin Method}\label{chap:galerkin}
In computational fluid dynamics, specially in medical applications, accuracy is preferred rather than velocity or simplicity in numerical methods. We must require a number of key properties such as flexibility in geometry, robustness, efficiency, high-order or variable order accuracy --- since long time integration is needed --- and, if possible, possibility of high performance computing.

During the last decades, a number of numerical techniques for the solution of nonlinear conservation laws, nonlinear convection-diffusion problems and compressible flow have been developed. We first review briefly the three most popular: finite differences, finite volumes and finite elements.
\begin{itemize}
    \item Finite-difference methods (FDM) are discretisation methods for solving differential equations by approximating them with difference equations, in which finite differences approximate the derivatives. This is usually done using Taylor series expansion and truncating at the desired order. Due to this, we can achieve any order desired for the method, which is an advantage.

    The two sources of error in finite difference methods are round-off error --- loss of precision due to computer machine --- and discretisation error, the difference between the exact solution and the exact approximation assuming perfect arithmetic (that is, assuming no round-off). 
    
    This method relies on discretising a function on a grid, so to approximate the solution to a problem, one must first discretise the problem's domain. This is usually done by dividing the domain into a uniform grid. Therefor, the solution is recovered in a pointwise way. It can be shown that the truncation error is proportional to the step sizes (time and space steps). If we reduce the step size or increase the truncation of the Taylor's expansion the accuracy of the approximate solution increases, but also the simulation's duration. Therefor, a reasonable balance between data quality and simulation duration is necessary for practical usage. Large time steps are useful for increasing simulation speed in practice. However, time steps which are too large may create instabilities and affect the data quality~\cite{hoffman01,majumdar05}. There are quite a lot of variations of the FDM, both explicit and implicit cases and quite a lot of theory behind these methods, since FDMs are the dominant approach to numerical solutions of partial differential equations~\cite{grossmann07}. For a first approach to these methods we refer the interested reader to~\cite{iserles09,arandiga08,arandiga00}. The main drawback of these methods is that complex geometries are not allowed in a simple way.
    
    \item Similar to the FDM, in the finite volume method (FVM), values are calculated at discrete places on a meshed geometry~\cite{leveque02,toro13,kroener97}. ``Finite volume'' refers to the small volume surrounding each node point on a mesh (\emph{control volumes}, or \emph{cells}). In the finite volume method, volume integrals in a partial differential equation that contain a divergence term are converted to surface integrals, using the divergence theorem. These terms are then evaluated as fluxes at the surfaces of each finite volume. Thus, FVMs use piecewise constant approximations. Because the flux entering a given volume is identical to that leaving the adjacent volume, these methods are conservative. Another advantage of the finite volume method is that it is easily formulated to allow for unstructured meshes and, since it uses cell averages, it allows discontinuities. The problem comes when we want to achieve high order on general grids. Also, there are requirements of grid smoothness non trivial at all~\cite{godlewski13,barth90,harten91}. For a survey of various techniques and results from the FVM, we refer the reader to the monograph~\cite{eymard00}.
    
    \item The next step is the family of finite element methods (FEM). It is also referred to as finite element analysis (FEA). The main basic steps of the FEM are~\cite{logan11}:
    \begin{enumerate}
        \item Divide the whole domain of the problem into subdomains (can be a regular or unstructured mesh) called finite elements. 
        \item Convert the differential equations we are dealing with into its weak form, multiplying by an arbitrary function and integrating.
        \item Choose appropriate test functions (they are usually polynomials) to arrive to an algebraic system, and solve it.
    \end{enumerate}
    Regarding the history, as it is often the case with original developments, it is rather difficult to quote an exact date of invention, but in \cite{bathe06}, the roots of the FEM are traced back to three separate research groups: applied mathematicians \cite{courant43}, physicists \cite{synge57} and engineers \cite{hinton68}, although the FEM obtained its real impetus from the development of engineers.  
    These methods can achieve high order and can deal without many problems complex geometries. The main issue is that they are implicit in time and when we are dealing problems with direction (such a diffusion) are not really well suited.
\end{itemize}
To sum up the observations, we have:
\begin{table}[H]
\centering
\begin{tabular}{c|ccc}
    & Complex geometries & High-order accuracy & Explicit semi-discrete form \\ \hline
FDM & \(\times\)         & \checkmark          & \checkmark                  \\
FVM & \checkmark         & \(\times\)          & \checkmark                  \\
FEM & \checkmark         & \checkmark          & \(\times\)                 
\end{tabular}
\caption{Advantages and disadvantages of the most used methods for solving differential equations numerically.}
\end{table}
The ideal case would be to have a scheme with local high-order and flexible elements as in the FEM; the nice handling of discontinuities as in the FVM; and the explicit and relatively simple (semi-)discrete form of the FDM. We can find indeed a method with these components, inside the FEMs, called \emph{Discontinuous Galerkin Finite Element Method} (DG-FEM, DGM or simply DG).

As happened with FEM, DG methods can be considered as numerical schemes for the weak formulation of the equations. They were first applied to first-order equations by Reed and Hill in~\cite{reed73}, but their widespread use followed from the application to hyperbolic problems by Cockburn and collaborators in a series of articles~\cite{cockburn89,cockburn90,cockburn98}. In the DG-FEM framework, the solution is recovered in a more continuous way, with polynomials, without the need of using a reconstruction operator (such as interpolation). This feature of DG schemes is in common with the classical FEM. But, unlike classical finite elements, the numerical solution given by a DG scheme is discontinuous at element interfaces and this discontinuity is resolved by the use of a so-called numerical flux function, which is a common feature with FVM.

Hence, in the next sections we will give more details about this method, starting from definitions and some mathematical notation up to the computational implementation. As we said in the previous chapters, we are treating with a one-dimensional problem, so both the theory as the implementation are a little simpler. For a more general theoretical treatment (dimensions \(2\) and \(3\)) we refer the interested reader to the lecture notes~\cite{dolejvsi16}. 
\section{Notation}\label{sec:galerkin_notation}
\begin{figure}[htb]
    \centering
    \includegraphics[trim={0 5cm 7cm 5cm},clip,width=.6\linewidth]{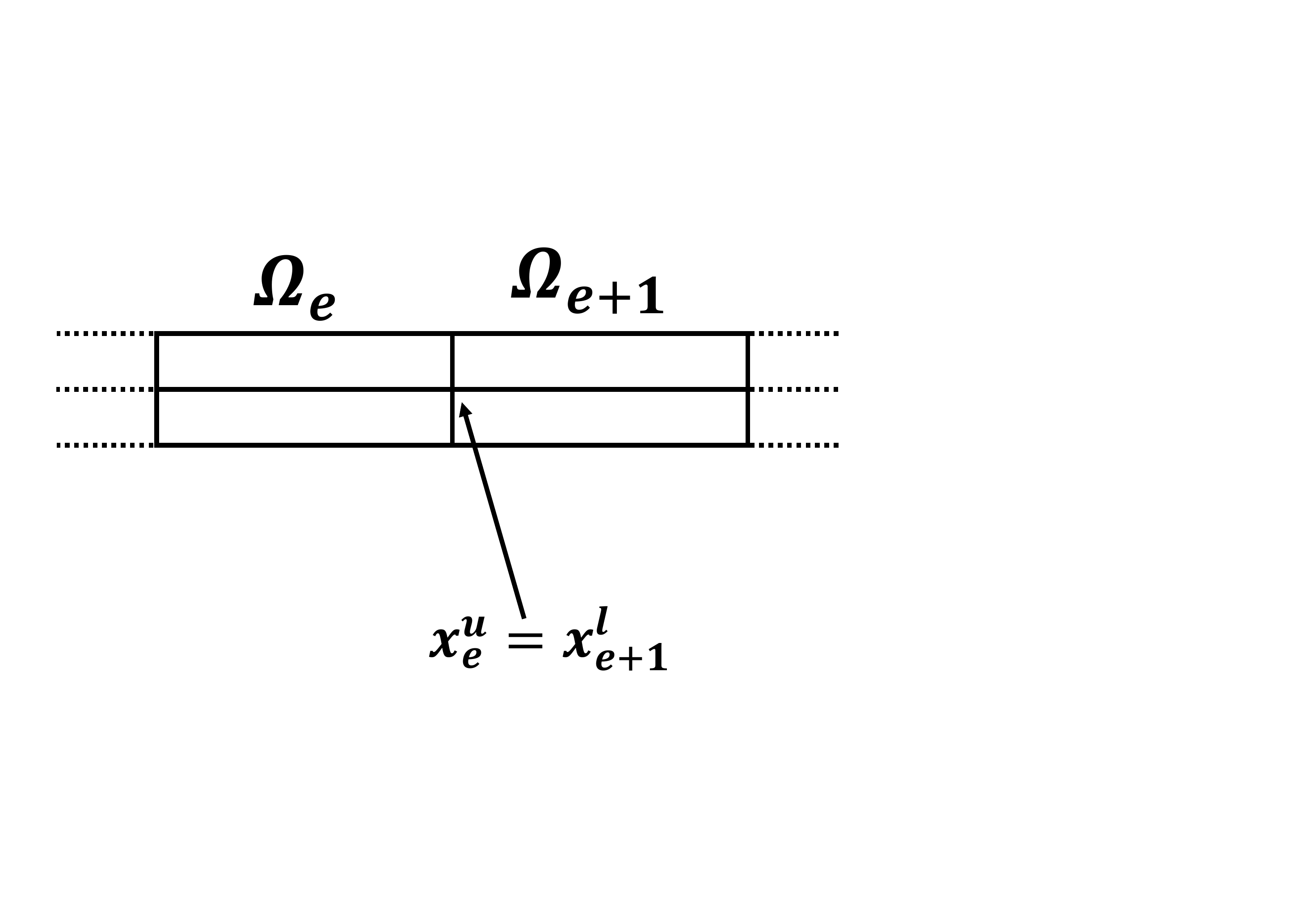}
    \caption{One dimensional finite element discretisation.}
    \label{fig:fem}
\end{figure}

The idea of this family of methods is to split the domain we are dealing with, namely \(\Omega=[a,b]\), into a set of the so-called \emph{elements}, in our case they are subintervals as shown in figure~\ref{fig:fem}. Formally we have
\begin{gather}
    \Omega=\bigcup_{e=1}^{N_{\textrm{el}}}\Omega_e,\\
    \Omega_e=[x^u_e,x^l_e]
\end{gather}

where 
\begin{equation}
    x_e^u=x_{e+1}^l.
\end{equation}

This subdivision of the general domain \(\Omega\) will be denoted by \(\Th\) and we call it \emph{the triangulation} of \(\Omega\) (for analogy to the two and three-dimensional cases).

By \(\Fh\) we denote the set of boundaries of the elements, namely
\begin{equation}
    \Fh=\left\{x_e^l\right\}_{e=1}^{\Nel}\cup\left\{x_{\Nel}^u\right\}=\left\{x_e^u\right\}_{e=1}^{N_el}\cup\left\{x_{1}^l\right\}.
\end{equation}

Moreover, we will sometimes treat in a separated way the inlet and outlet boundaries, so
\begin{gather}
    \Fh^{io}=\left\{x_1^l\right\}\cup\left\{x_{\Nel}^u\right\},\\
    \Fh^{W}=\Fh\setminus\Fh^{io}.
\end{gather}

First we recall some standard notation from the measure and integral Lebesgue theory. 

\subsection{Measure and integral Lebesgue notation}
Let \(M\subset\RR^n\), \(n=1,2,\dots\) be a Lebesgue measurable set. We recall that two measurable functions are \emph{equivalent} if they differ at most on a set of zero Lebesgue measure. The \emph{Lebesgue space} \(L^p(M)\), with \(1\leq p<\infty\) is the linear space of all functions measurable on \(M\) (more precisely, of classes of equivalent measurable functions) such that
\begin{equation}
    \int_M|u|^p\dd x<+\infty.
\end{equation}

With this, let \(k\geq0\) be an artbitrary integer and \(1\leq p<\infty\). We define the \emph{Sobolev space} \(W^{k,p}(M)\) as the space of all functions from the space \(L^p(M)\) whose distributional derivatives \(D^\alpha u\), up to order \(k\), also belong to \(L^p(M)\), \textit{i.e.},
\begin{equation}
   W^{k,p}(M)=\{u\in L^p(M)\colon D^\alpha u\in L^p(M)\,\forall\alpha,\ |\alpha|\leq k\}.
\end{equation}

For \(p=2\), \(W^{k,2}\) is a Hilbert space and we denote it by \(H^k(M)\). Now, the DGM is based on the use of discontinuous approximations. This is the reason why over a triangulation \(\Th\), we define the \emph{broken Sobolev space}
\begin{equation}
    H^k(\Omega,\Th)=\{u\in L^2(\Omega)\colon u|_{\Omega_e}\in H^k(\Omega_e)\,\forall\Omega_e\in\Th\}.
\end{equation}

Since we will be working with two dimensional vector-valued functions we will denote
\begin{equation}
    \textbf{H}^k(\Omega,\Th)=H^k(\Omega,\Th)\times H^k(\Omega,\Th).
\end{equation}

The DGM can be characterised as a finite element technique using piecewise polynomial approximations, in general discontinuous on interfaces between neighbouring elements. Therefor, we introduce a finite-dimensional subspace of \(H^k(\Omega,\Th)\), where the approximate solution will be sought.

Let \(\Th\) be a triangulation of \(\Omega\) and let \(p\geq0\) be an integer. We define the space of discontinuous piecewise polynomial functions
\begin{equation}
    S_{hP}=\{v\in L^2(\Omega)\colon v|_{\Omega_e}\in \mathcal{P}_P(\Omega_e)\,\forall\Omega_e\in\Th\},
\end{equation}

where \(\mathcal{P}_P(\Omega_e)\) denotes the space of all polynomials of degree \(\leq P\) on \(\Omega_e\). Obviously, \(S_{hP}\subset H^k(\Omega,\Th)\) for any \(k\geq1\) and its dimension is \(\dim S_{hP}=P+1\).

Similarly as we have done before, we will denote
\begin{equation}
    \textbf{S}_{hP}=S_{hP}\times S_{hP}.
\end{equation}

Finally, since we are going to deal with vectorial functions, we will understand operations such as integration of derivation of the vector as the operation component-wise.

\section{DG space (semi)discretisation}\label{sec:DG}
In order to give a more general situation, this time we will consider the source term, namely the friction term. Since we are assuming that the blood is a Newtonian fluid, the friction term has the form
\begin{equation}
    f=-K_Ru,
\end{equation}

where \(K_R\) is a strictly positive quantity which represents the viscous resistance of flow per unit length of tube. We still assume that \(\alpha=1\) (a flat profile) for the reasons we discussed in subsection~\ref{ssec:charac_NS}. Hence, we can write the system as
\begin{equation}\label{eq:NS_source}
    \UU_t+\FF_x=\TT 
\end{equation}

with
\begin{equation}
    \UU=\begin{bmatrix}A\\u\end{bmatrix},\quad\FF=\begin{bmatrix}uA\\\frac{u^2}{2}+\frac{p}{\rho}\end{bmatrix}\quad\text{and}\quad\text\TT=\begin{bmatrix}0\\\frac{K_R}{\rho}u\end{bmatrix}.
\end{equation}

In order to derive the discrete problem, we assume that there exists an exact solution \(\UU\in\Cf^1(\textbf{H}^1(\Omega,\Th);[0,T])\), where \(T>0\) represents the final time, of the Navier-Stokes equations~\eqref{eq:NS_source}. Then we multiply~\eqref{eq:NS_source} by a test function \(\pmb{\psi}\in\textbf{H}^1(\Omega,\Th)\) and integrate over the domain. Specifying the dependency on \(\UU\) we obtain:

\begin{equation}\label{eq:DG1}
    \int_\Omega\textbf{U}_t\pmb{\psi}\dd x+\int_\Omega\textbf{F}_x(\UU)\pmb{\psi}\dd x=\int_\Omega\TT(\UU)\pmb{\psi}.
\end{equation}

Let us denote, for a measurable set \(M\), the inner product of \(L^2(M)\) with
\begin{equation}
    \int_{M}uv\dd x=\pesc{u}{v}{M}.
\end{equation}

At this point is where the main difference with respect to the FEM appears. We do not demand continuity between elements, and since the elements of the triangulation \(\Th\) are disjoint, we can decouple~\eqref{eq:DG1} and work separately with each \(\Omega_e\in\Th\) as:
\begin{equation}
    \pesc{\UU_t}{\pmb{\psi}}{\Omega_e}+\pesc{\FF_x(\UU)}{\pmb{\psi}}{\Omega_e}=\pesc{\TT(\UU)}{\pmb{\psi}}{\Omega_e}.
\end{equation}

We integrate by parts the flux term (in higher dimensions this would be equivalent to apply Green's theorem):
\begin{equation}\label{eq:after_parts}
    \pesc{\UU_t}{\pmb{\psi}}{\Omega_e}+\left[\FF(\UU)\cdot\pmb{\psi}\right]_{x_e^l}^{x_e^u}-\pesc{\FF(\UU)}{\pmb{\psi}_x}{\Omega_e}=\pesc{\TT(\UU)}{\pmb{\psi}}{\Omega_e}.
\end{equation}

We remark that the term evaluated in the boundary is interpreted as component-wise. It is not the usual dot product of vectors.

As we said before, we discretise the solution \(\UU\) with an approximation \(\UU_h\in\textbf{S}_{hP}\). Due to the fact that \(\dim S_{hP}=P+1\), we can express this discrete solution as a linear combination
\begin{equation}\label{eq:linear_expansion}
    \left.\UU_h\right|_{\Omega_e}=\sum_{p=0}^P\hat{\UU}_p^e\varphi_p
\end{equation}

where \(\{\varphi_p\}_{p=0}^P\) is a basis of the space \(S_{hP}\). In the previous expression, the product denotes a component-wise product.

\begin{figure}[thb]
    \centering
    \includegraphics[trim={3cm 18cm 2cm 2cm},clip,width=.7\linewidth]{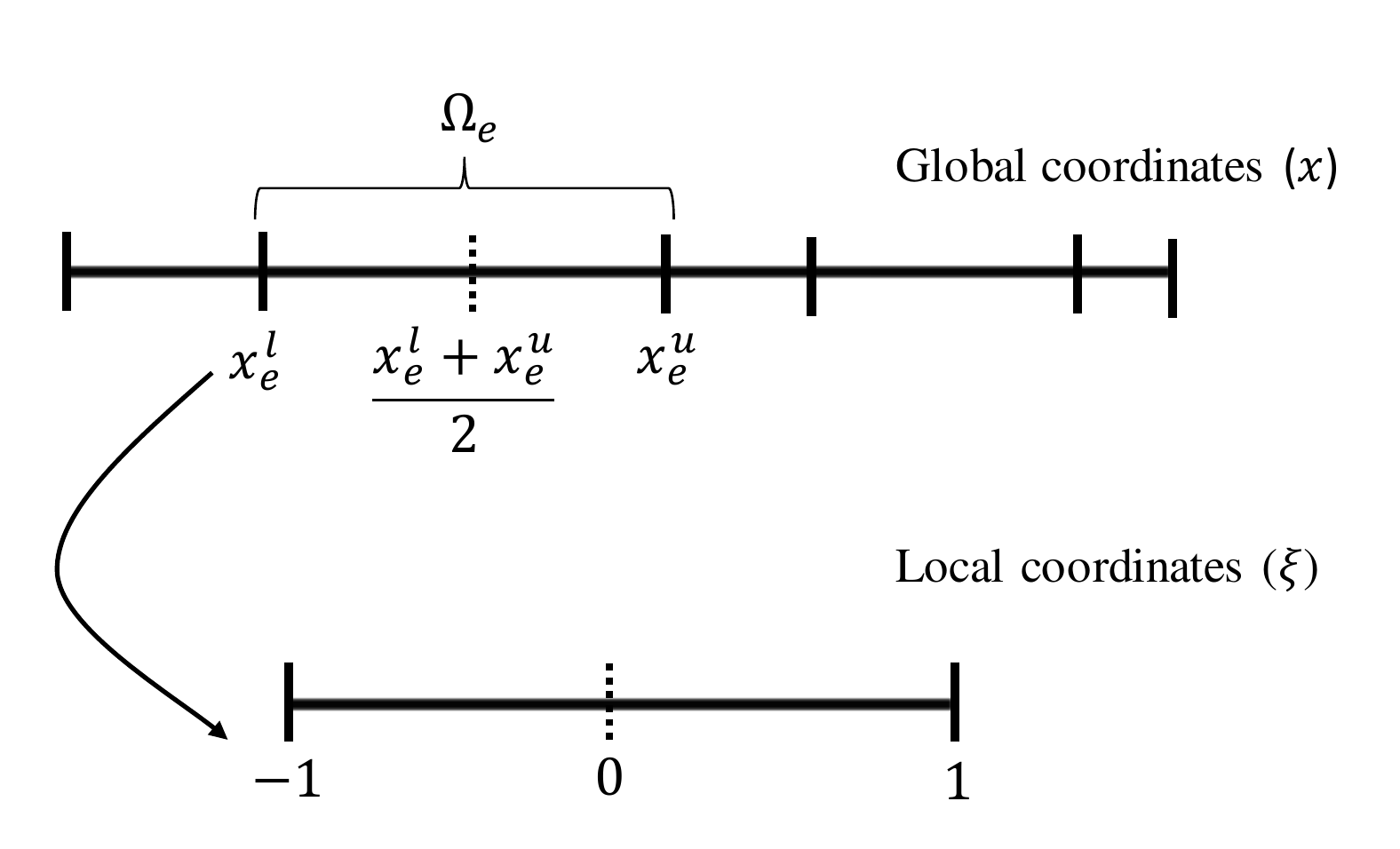}
    \caption{Change of coordinates from global to local at an element.}
    \label{fig:glob2loc}
\end{figure}

Now, this expansion is carried out at each element \(\Omega_e\in\Th\), so we need to do a readjustment of the coordinates. More precisely, the global semidiscrete solution \(\UU_h\) is defined in all the length of the artery, namely \(0\leq x\leq l\). But at each element we need to resize to \(-1\leq\xi\leq1\), see figure~\ref{fig:glob2loc}. To carry out this conversion, affine mappings (common in FEM) are used. Their expression, from global to local and local to global coordinates are, respectively,
\begin{align}
    \xi(x_e)&=\frac{2x-x_e^l-x_e^u}{x_e^u-x_e^l},\\
    x_e(\xi)&=x_e^l\frac{1-\xi}{2}+x_e^u\frac{1+\xi}{2},
\end{align}

(obviously one mapping is the inverse function of the other one). More precisely equation~\eqref{eq:linear_expansion} has the form
\begin{equation}\label{eq:DoF}
    \left.\UU_h(x_e,t)\right|_{\Omega_e}=\sum_{p=0}^P\hat{\UU}_p^e(t)\varphi_p(\xi).
\end{equation}

We remark that the coefficients of the previous expressions only depend on time. This is the characteristic of the so-called \emph{modal form} of a finite element method. The other possibility is the \emph{nodal form}, where the solution is calculated in some points (nodes) of each element. These coefficients are usually called~\emph{degrees of freedom}. Here is where the main distinctive of the Galerkin methods come out: the test function \(\pmb{\psi}\) also belongs to the finite-dimensional space \(\textbf{S}_{hP}\), indeed it will be the same on each component, so we will change the notation to \(\psi\). With this, the first term of~\eqref{eq:after_parts} is approximated by
 \begin{equation}
    \pesc{\UU_t(x_e(\xi),t)}{\psi(\xi)}{I}=\pesc{\frac{\partial\UU(x_e(\xi),t)}{\partial t}}{\psi(\xi)}{I}\approx\sum_{p=0}^P\frac{\partial\hat{\UU}_p^e(x_e(\xi),t)}{\partial t}\pesc{\varphi_p(\xi)}{\psi(\xi)}{I},
\end{equation}

where \(I=[-1,1]\).
Moreover, by substitution of variables, the third term of equation~\eqref{eq:after_parts} becomes
\begin{align}
\begin{split}
    \pesc{\FF(\UU)}{\pmb{\psi}_x}{\Omega_e}=&\int_{\Omega_e}\FF(\UU(x,t))\psi_x(\xi(x_e))\ \dd x\\
    =\frac{2}{x_e^u-x_e^l}&\int_{-1}^1\FF(\UU(\xi,t))\psi_x(\xi)\ \dd\xi=\frac{2}{x_e^u-x_e^l}\pesc{\FF(\UU)}{\psi_x}{I}.
\end{split}
\end{align}

Up to now there are some unclear points in the reasoning previously done. First of all, the election of the test and basis functions. In the next subsection we will choose them motivating the election.

\subsection{Election of test and basis functions}\label{subsec:test}
In a benchmark more common in calculus, test functions are chosen as \(\Cf^\infty\) with compact support. Nevertheless, for practical uses we consider both the test function \(\psi\) and the basis functions \(\{\varphi_p\}_p\) belonging to the piecewise polynomial space \(S_{hP}\). More precisely, in order to avoid the appearance of more coefficients we choose the test functions as one of the basis functions. The idea of this is that in the expression
\begin{equation}
    \sum_{p=0}^P\frac{\partial\hat{\UU}^e_p(x_e(\xi),t)}{\partial t}\pesc{\varphi_p(\xi)}{\varphi_q(\xi)}{I}
\end{equation}

by varying the test function all over the basis functions, we decouple the system, so we can isolate the derivative of the degrees of freedom. The previous equation can be seen as a product matrix-vector
\begin{equation}
    \mathcal{M}\hat{\UU}^e
\end{equation}

where
\begin{equation}\label{eq:defM}
    \mathcal{M}_{ij}=\pesc{\varphi_i}{\varphi_j}{I},\qquad\hat{\UU}^e=\begin{bmatrix}\hat{\UU}_1^e\\\vdots\\\hat{\UU}_P^e\end{bmatrix}.
\end{equation}

The most straightforward election would be to choose the canonical basis
\begin{equation}
    \varphi_p(\xi)=\xi^{p-1}.
\end{equation}

\begin{figure}[thb]
    \centering
    \includegraphics[width=\linewidth]{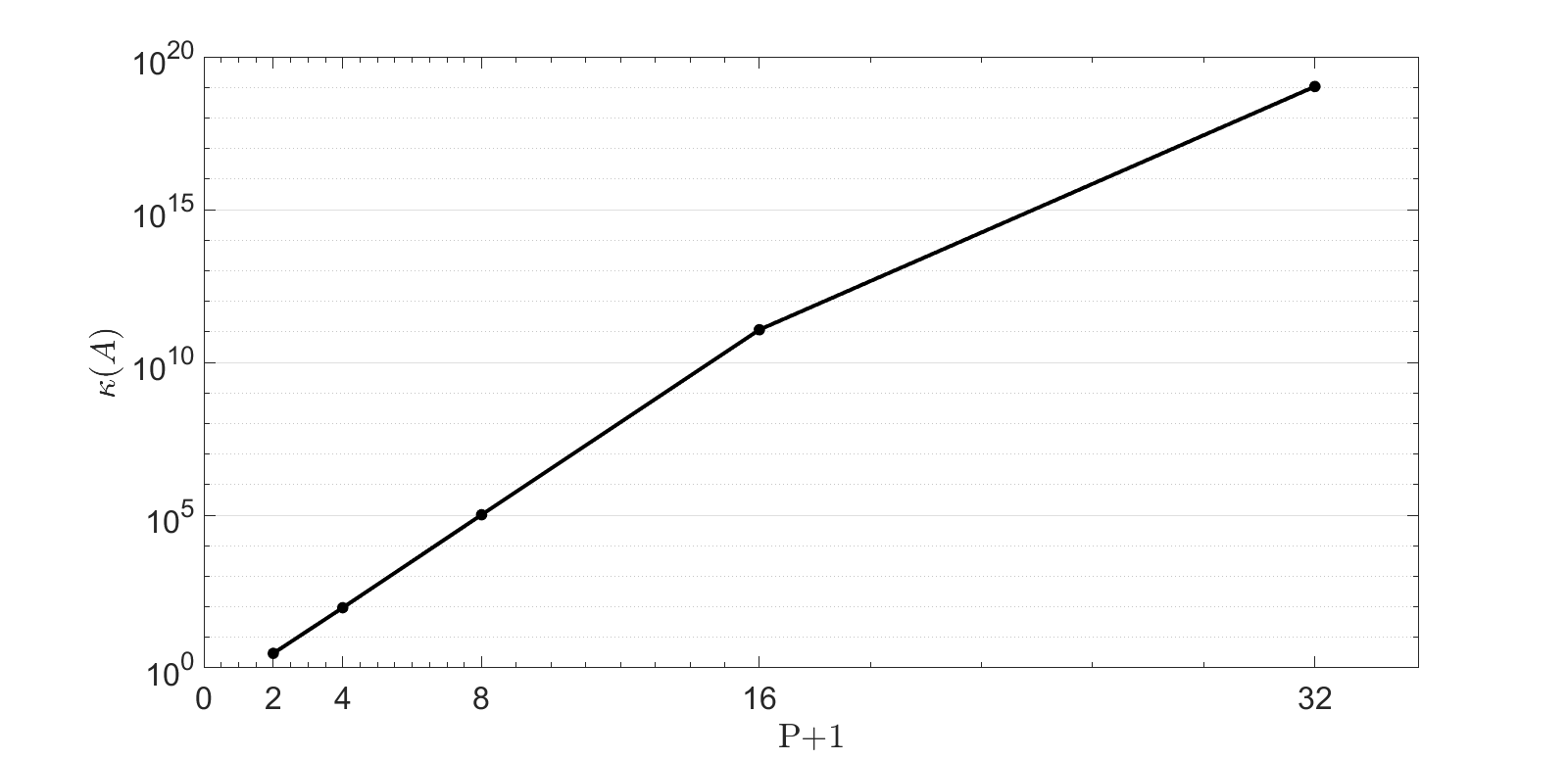}
    \caption{Condition number (in norm \(1\)) of the matrix \(\mathcal{M}\). As we can see, it grows almost exponentially.}
    \label{fig:cond_canonical}
\end{figure}

As a quality measure, we can look to the condition number of \(\mathcal{M}\). In the norm \(1\), for example, we can see that it grows very quickly, as we present in figure~\ref{fig:cond_canonical}. Therefor, this election does not look like a robust choice for high \(P\).

With this understanding, we look for a better basis, specifically one which makes \(\mathcal{M}\) diagonal. \textit{I.e.}, we want an orthogonal basis. The basis usually chosen is the one formed by Legendre's polynomials. Instead of enumerating the properties, in this subsection we will reason in a straightforward manner its appearance. Some useful properties of these polynomials are proved using basic algebra and calculus.

First of all, we apply the \emph{Gram-Schmidt process} to the canonical basis and we get a complete orthogonal basis \(\{Q_n\}_{n=0}^P\), where \(Q_0(x)=1\) and
\begin{equation}
    Q_n(x)=x^n-\sum_{i=0}^{n-1}\frac{\pesc{x^n}{Q_i}{I}}{\pesc{Q_i}{Q_i}{I}}Q_i(x).
\end{equation}

Now we need a result about orthogonal bases extracted from~\cite{kumar13}:
\begin{prop}\label{prop:2.3}
Let \(\{p_n\}_{n=0}^\infty\) be an orthogonal polynomial sequence (OPS) in \(L^2([a,b])\) with \(\deg (p_n)=n\). Then
\begin{enumerate}[label=(\roman*)]
    \item \(\pesc{x^k}{p_n}{[a,b]}=0\) for \(k=0,1,\dots,n-1\).\label{prop:pol1}
    \item The polynomial \(p_n(x)\) of degree \(n\) has exactly \(n\) simple zeros in the open interval \(]a,b[\).\label{prop:pol2}
    \item\label{prop:pol3} There is a recurrence relation of the form
    \begin{equation}
        p_{n+1}(x)=(\alpha_nx+\beta_n)p_n(x)+\gamma_np_{n-1}(x),\;\forall n\geq1
    \end{equation}

    where \(\alpha_n,\beta_n\gamma_n\) are real constants depending on \(n\).
\end{enumerate}
\end{prop}

\begin{proof}
\begin{enumerate}[label=(\roman*)]
\item Let \(q\) be a polynomial of degree \(k\). Since \(\deg(p_n)=n\), it follows that \(q\) lies in the span of \(\{p_0,p_1,\dots,p_k\}\), \textit{i.e.}, \(q=\sum_{0\leq i\leq k}a_ip_i\), \(a_i\in\RR\). If \(k<n\), then by orthogonality
\begin{equation}
    \pesc{q}{p_n}{I}=\sum_{i=0}^ka_i\pesc{p_i}{p_n}{I}=0.
\end{equation}

In particular, \(\pesc{x^k}{p_n}{I}=0\) for \(k<n\). This proves~\ref{prop:pol1}.
   
\item Let \(x_1,x_2,\dots,x_m\) be distinct real zeros of \(p_n\) in \(]a,b[\). Then we can factorise
    \begin{equation}
        p_n(x)=(x-x_1)^{\epsilon_1}(x-x_2)^{\epsilon_2}\dots(x-x_m)^{\epsilon_m}r(x),
    \end{equation}

    where \(\epsilon_i\geq1\) and the polynomial \(r(x)\) has no zero on \(]a,b[\). Thus \(r(x)>0\) (or else \(r(x)<0\)), \(\forall x\in]a,b[\).
    
    Let 
    \begin{equation}
        \phi(x)=(x-x_1)^{\delta_1}(x-x_2)^{\delta_2}\dots(x-x_m)^{\delta_m},
    \end{equation}

    where \(\delta_i=0\) or \(1\) according as \(\epsilon_i\) is even or odd. Then \(\deg(\phi)\leq m\) and \(\phi(x)p_n(x)\geq0\) (or else \(\phi(x)p_n(x)\leq0\)) \(\forall x\in]a,b[\). This shows that \(\pesc{\phi}{p_n}{[a,b]}\neq0\). Thus, in view of the part~\ref{prop:pol1}, \(\deg(\phi)\geq n\). Since \(\deg(\phi)\leq m\leq n\), we must have \(m=n\). This proves~\ref{prop:pol2}.
    
\item Let \(\alpha_n\) be the ratio of leading coefficients of \(p_{n+1}\) and \(p_n\). Then \(p_{n+1}-\alpha_nxp_n(x)\) is a polynomial of degree at most \(n\). Let
    \begin{equation}
        p_{n+1}(x)-\alpha_nxp_n(x)=\sum_{i=0}^nb_ip_i(x).
    \end{equation}

    Then
    \begin{align}
    \begin{split}
    b_k&=\frac{\pesc{p_{n+1}-\alpha_nxp_n}{p_k}{[a,b]}}{\pesc{p_k}{p_k}{[a,b]}}=\frac{\pesc{p_{n+1}}{p_k}{[a,b]}}{\pesc{p_k}{p_k}{[a,b]}}-\alpha_n\frac{\pesc{xp_n}{p_k}{[a,b]}}{\pesc{p_k}{p_k}{[a,b]}}\\
    &=\frac{\pesc{p_{n+1}}{p_k}{[a,b]}}{\pesc{p_k}{p_k}{[a,b]}}-\alpha_n\frac{\pesc{p_n}{xp_k}{[a,b]}}{\pesc{p_k}{p_k}{[a,b]}}=0
    \end{split}
    \end{align}

    if \(k<n-1\) (in view of~\ref{prop:pol1}). Thus
    \begin{equation}
        p_{n+1}(x)-\alpha_nxp_n(x)=b_np_n(x)+b_{n-1}p_{n-1}(x).
    \end{equation}

    This proves~\ref{prop:pol3}.
\end{enumerate}
\end{proof}

With this, since all the \(n\) distinct zeros of the polynomial \(Q_n(x)\) of degree \(n\) lie in \(]-1,1[\), we have \(Q_n(1)\neq0\) for all \(n\geq0\). Thus, we may define a new set of polynomials \(\widetilde{Q}_n(x)=Q_n(1)\) so that \(\widetilde{Q}_n(1)=1\). We will name these polynomials again \(Q_n\). Clearly \(Q_0(1)=1\) and \(Q_1(x)=x\).

The following lemma will be useful to arrive to the useful form of the polynomials \(Q_n\):
\begin{lemma}\label{lemma:3.1}
\(\pesc{x^m}{Q_n}{I}=0\) if \(m\) and \(n\) have different parity, \textit{i.e.}, \(m+n\) is odd.
\end{lemma}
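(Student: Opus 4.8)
The plan is to exploit the symmetry of the interval $I=[-1,1]$: I would first show that each $Q_n$ has a definite parity, namely $Q_n(-x)=(-1)^nQ_n(x)$, and then the lemma drops out almost immediately. Indeed, once the parity is known, for $m+n$ odd the product $x^mQ_n(x)$ satisfies $(-x)^mQ_n(-x)=(-1)^{m+n}x^mQ_n(x)=-x^mQ_n(x)$, so it is an \emph{odd} function, and its integral $\pesc{x^m}{Q_n}{I}=\int_{-1}^1 x^mQ_n(x)\,\dd x$ over the symmetric interval vanishes.

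The heart of the argument is therefore the parity claim, which I would establish by strong induction on $n$ using the Gram--Schmidt formula $Q_n(x)=x^n-\sum_{i=0}^{n-1}c_iQ_i(x)$ with $c_i=\pesc{x^n}{Q_i}{I}/\pesc{Q_i}{Q_i}{I}$. The base cases $Q_0(x)=1$ and $Q_1(x)=x$ are already recorded in the text. For the inductive step I would assume $Q_i(-x)=(-1)^iQ_i(x)$ for every $i<n$ and observe that whenever $i$ has the opposite parity to $n$, the integrand $x^nQ_i(x)$ is odd, so $c_i=0$. Consequently $Q_n$ is a linear combination of $x^n$ and of those $Q_i$ with $i\equiv n\pmod 2$, all of which have parity $(-1)^n$; hence $Q_n$ inherits that same parity, closing the induction.

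The one point I would be most careful about -- and the only genuine subtlety -- is that the induction must not secretly presuppose the lemma it is proving: within the inductive step I only ever evaluate integrals $\pesc{x^n}{Q_i}{I}$ for indices $i<n$, so the parity of $Q_i$ needed to declare the integrand odd is supplied by the inductive hypothesis rather than by the statement under proof. This bootstrap is what makes the argument legitimate. As a sanity check I note that the subcase $m<n$ is already covered, independently of parity, by part~\ref{prop:pol1} of Proposition~\ref{prop:2.3}; the parity argument is what is genuinely needed for the remaining cases $m\geq n$ with $m+n$ odd. An alternative route would invoke the three-term recurrence of part~\ref{prop:pol3} of Proposition~\ref{prop:2.3} and show that its middle coefficient $\beta_n$ vanishes on a symmetric interval, but the Gram--Schmidt induction above is more self-contained and avoids that extra computation.
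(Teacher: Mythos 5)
Your proof is correct and follows essentially the same route as the paper: induction on $n$ through the Gram--Schmidt formula, killing the coefficients of opposite parity by oddness of the integrand on the symmetric interval $I$, and then deducing the vanishing of $\pesc{x^m}{Q_n}{I}$ for $m+n$ odd. Your formulation is in fact marginally tidier, since you make the parity claim $Q_n(-x)=(-1)^nQ_n(x)$ the explicit inductive hypothesis, whereas the paper inducts on the integral statement itself and uses the parity of the lower-index $Q_i$ only implicitly.
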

\begin{proof}
The proof is by induction on \(n\). Since \(Q_0(x)=1\), \(Q_1(x)=x\) and \(\int_If(x)\dd x=0\) for an odd continuous function \(f\), the lemma holds for \(n=0\) and \(n=1\). By induction hypothesis, we assume the lemma holds for \(Q_i(x)\) for \(i<n\) and we shall show that it also holds for \(Q_n(x)\). We have \(\pesc{x^n}{Q_i}{I}=0\) for \(i=n-1,n-3,\dots\), as \(n\) and \(i\) have different parity. Thus
\begin{equation}
    Q_n(x)=x^n-\sum_{k=1}^{[\frac{n}{2}]}\frac{\pesc{x^n}{Q_{n-2k}}{I}}{\pesc{Q_{n-2k}}{Q_{n-2k}}{I}}Q_{n-2k}(x),
\end{equation}

and it is an odd or even function according as \(n\) is odd or even. Thus \(\int_Ix^mQ_n(x)\dd x=0\) if \(m\) and \(n\) have different parity.
\end{proof}
In consequence, we have the following proposition:
\begin{prop}
The OPS \(\{Q_n\}_n\) satisfies the recurrence relation
\begin{equation}
    Q_{n+1}(x)=\left(\frac{2n+1}{n+1}\right)xQ_n(x)-\left(\frac{n}{n+1}\right)Q_{n-1}(x)\;\forall n\geq1.
\end{equation}

\end{prop}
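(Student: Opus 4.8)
The plan is to bootstrap the identity from three ingredients already available: the abstract three-term recurrence of Proposition~\ref{prop:2.3} part~\ref{prop:pol3}, the parity information of Lemma~\ref{lemma:3.1}, and the normalisation $Q_n(1)=1$. First I would apply part~\ref{prop:pol3} of Proposition~\ref{prop:2.3} to the orthogonal sequence $\{Q_n\}_n$ and write
\[
Q_{n+1}(x)=(\alpha_n x+\beta_n)Q_n(x)+\gamma_n Q_{n-1}(x),\qquad n\geq 1,
\]
for suitable real constants $\alpha_n,\beta_n,\gamma_n$, and then show that the middle constant vanishes. The proof of Lemma~\ref{lemma:3.1} established that $Q_m$ is an even or odd function according as $m$ is even or odd, i.e. $Q_m(-x)=(-1)^mQ_m(x)$. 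Hence $xQ_n$, $Q_{n-1}$ and $Q_{n+1}$ all carry the parity $(-1)^{n+1}$, whereas $\beta_n Q_n$ carries the opposite parity $(-1)^n$. Rearranging the recurrence as $Q_{n+1}-\alpha_n xQ_n-\gamma_n Q_{n-1}=\beta_n Q_n$ exhibits a polynomial of pure parity $(-1)^{n+1}$ equal to one of pure parity $(-1)^n$; since the only function that is simultaneously even and odd is the zero function, this forces $\beta_n=0$ and reduces the recurrence to $Q_{n+1}=\alpha_n xQ_n+\gamma_n Q_{n-1}$ with only two unknowns.

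Next I would pin these two coefficients down by two independent scalar conditions. Evaluating at $x=1$ and using $Q_m(1)=1$ for every $m$ gives immediately $\alpha_n+\gamma_n=1$. The second condition comes from matching leading coefficients: writing $k_m$ for the leading coefficient of $Q_m$, the term $\gamma_n Q_{n-1}$ has degree $n-1$ and so contributes nothing to the top-degree monomial, whence comparing the coefficients of $x^{n+1}$ yields $k_{n+1}=\alpha_n k_n$, that is $\alpha_n=k_{n+1}/k_n$. Solving the resulting system gives $\gamma_n=1-\alpha_n$, and it then remains only to verify $k_{n+1}/k_n=\tfrac{2n+1}{n+1}$, which rearranges into the claimed values $\alpha_n=\tfrac{2n+1}{n+1}$ and $\gamma_n=-\tfrac{n}{n+1}$.

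The hard part will be the last step, namely the explicit evaluation of the ratio $k_{n+1}/k_n$ of leading coefficients; everything preceding it is forced purely by symmetry and the endpoint normalisation. To compute it I would track the leading coefficient through the Gram--Schmidt construction: the unnormalised polynomials are monic, so $k_m=1/\widehat{Q}_m(1)$ where $\widehat{Q}_m$ denotes the monic orthogonal polynomial, and the value $\widehat{Q}_m(1)$ (equivalently, recognising $Q_m$ as the Legendre polynomial with $k_m=(2m)!/(2^m(m!)^2)$) is the one genuinely computational ingredient. One might hope to avoid leading coefficients by taking the $L^2(I)$ inner product of the reduced recurrence with $Q_{n+1}$ and invoking orthogonality, but using $\pesc{xQ_n}{Q_{n+1}}{I}=(k_n/k_{n+1})\pesc{Q_{n+1}}{Q_{n+1}}{I}$ this merely returns $\alpha_n=k_{n+1}/k_n$ once more. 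This confirms that the leading-coefficient ratio is the essential quantity and cannot be sidestepped by orthogonality alone, so the proof necessarily rests on the explicit determination of $k_n$ for the Legendre weight on $I=[-1,1]$.
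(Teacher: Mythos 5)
Your proposal is correct and follows essentially the same route as the paper: invoke the abstract three-term recurrence from Proposition~\ref{prop:2.3}\ref{prop:pol3}, use the parity information of Lemma~\ref{lemma:3.1} to kill $\beta_n$, use $Q_m(1)=1$ to get $\alpha_n+\gamma_n=1$, and identify $\alpha_n$ as the ratio of leading coefficients $(2n)!/(2^n(n!)^2)$. The only cosmetic difference is that the paper eliminates $\beta_n$ by evaluating the recurrence at $x=1$ and $x=-1$ and subtracting, whereas you argue directly that a polynomial of parity $(-1)^{n+1}$ cannot equal a nonzero multiple of one of parity $(-1)^n$; both rest on the same lemma, and both treatments (like the paper's) ultimately take the explicit leading coefficient as the one externally supplied computational fact.
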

\begin{proof}
Since \(\{Q_n\}_n\) is a complete OPS, by proposition~\ref{prop:2.3}, there is a recurrence relation of the form
\begin{equation}
    Q_{n+1}(x)=(\alpha_nx+\beta_n)Q_n(x)+\gamma_nQ_{n-1}(x),\ \forall n\geq1
\end{equation}

where \(\alpha_n,\beta_n,\gamma_n\) are real constants depending on \(n\). Now, by lemma~\ref{lemma:3.1}, \(Q_n(x)\) is an odd or even function of \(x\) according as \(n\) is odd or even. Thus \(Q_n(1)=1\) implies that \(Q_n(-1)=(-1)^n\). On substituting \(x=1\) and \(x=-1\) in the recurrence relation, we get
\begin{equation}
    \alpha_n+\beta_n+\gamma_n=1\quad\text{and}\quad\alpha_n-\beta_n+\gamma_n=1.
\end{equation}

Thus, \(\beta_n=0\) and \(\alpha_n+\gamma_n=1\). Also, \(\alpha_n\) is the ratio of the leading coefficients of \(Q_{n+1}(x)\) and \(Q_n(x)\). The leading coefficients can be obtained from lemma~\ref{lemma:3.1} substituting in the points we know and solving a linear system by Cramer's rule. The reason why this method is used is because some recurrences on columns appear and we are able to obtain the determinants via known productories (further details in~\cite{kumar13}). This leading coefficient has the form
\begin{equation}
    \frac{(2n)!}{2^n(n!)^2}
\end{equation}

and therefor,
\begin{equation}
    \alpha_n=\frac{\frac{(2(n+1))!}{2^{n+1}((n+1)!)^2}}{\frac{(2n)!}{2^n(n!)^2}}=\frac{2n+1}{n+1}.
\end{equation}

Thus \(\gamma_n=1-\alpha_n=-n/(n+1)\). This gives the recurrence formula.
\end{proof}
Actually, these polynomials are widely known as \emph{Legendre's polynomials} and we will denote them by \(L_n(x)\). The previous recurrence formula is also known as Bonnet's formula.
\begin{figure}[htb]
    \centering
    \includegraphics[width=\linewidth]{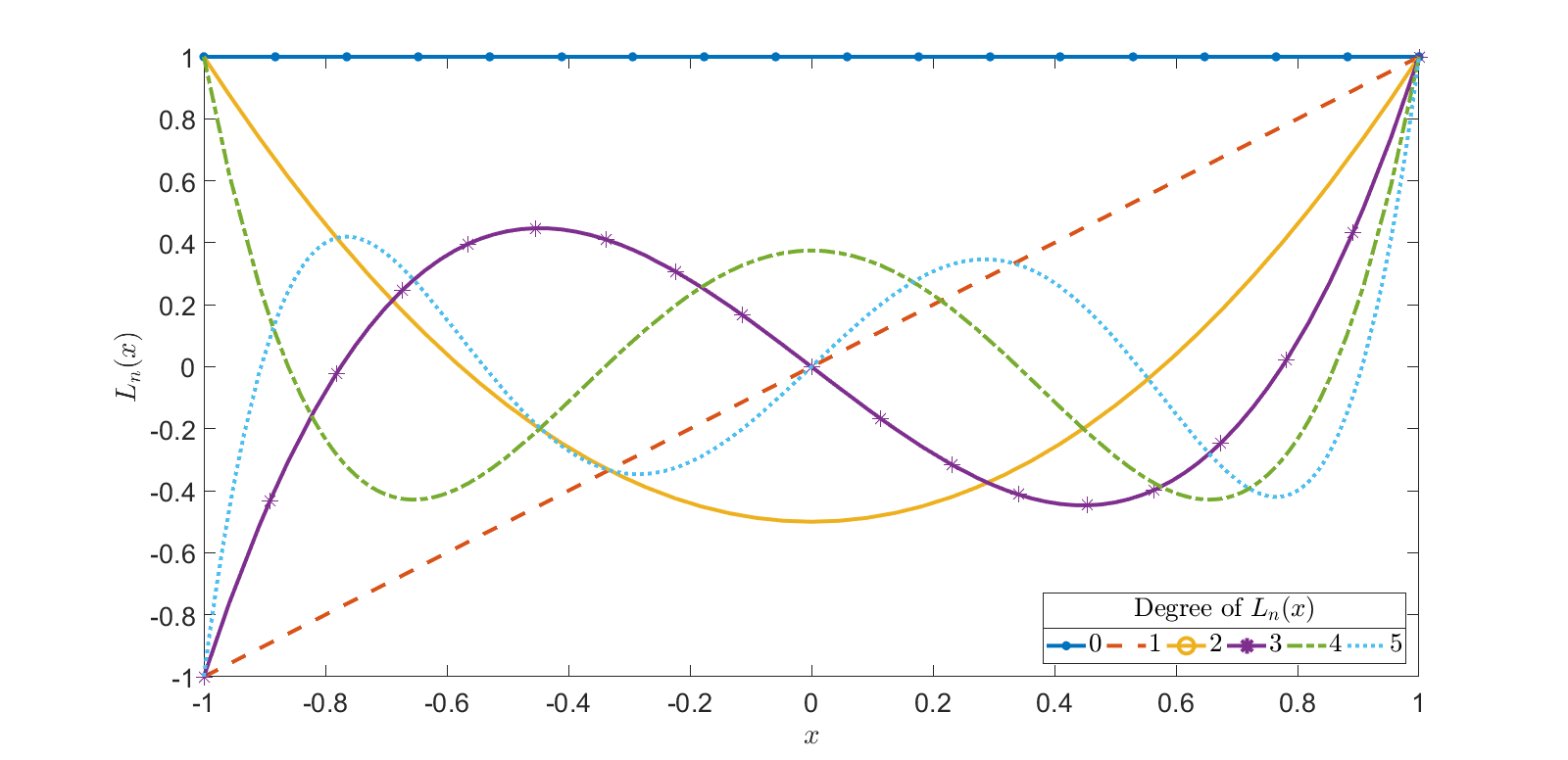}
    \caption{Legendre's polynomials up to degree \(5\).}
    \label{fig:plotleg}
\end{figure}

In figure~\ref{fig:plotleg} we can see the first \(6\) Legendre's polynomials. The norm of Legendre's polynomials is given by:
\begin{equation}
    \|L_n(x)\|^2=\frac{2}{2n+1},
\end{equation}

and an explicit formula for the derivative is
\begin{equation}\label{eq:derleg}
    \frac{\dd}{\dd x}L_{n+1}(x)=(2n+1)L_n(x)+(2(n-2)+1)L_{n-2}(x)+(2(n-4)+1)L_{n-4}(x)+\dots
\end{equation}

The proofs of these facts need more results from differential equations that would get us far from our purpose. We refer the interested reader to~\cite{kumar13} and~\cite{simmons16}.

Returning to the original discussion, the new definition of the matrix in equation~\eqref{eq:defM} is
\begin{equation}
    \mathcal{M}_{i,j}=\pesc{L_i}{L_j}{I}.
\end{equation}

\begin{figure}[htb]
    \centering
    \includegraphics[width=\linewidth]{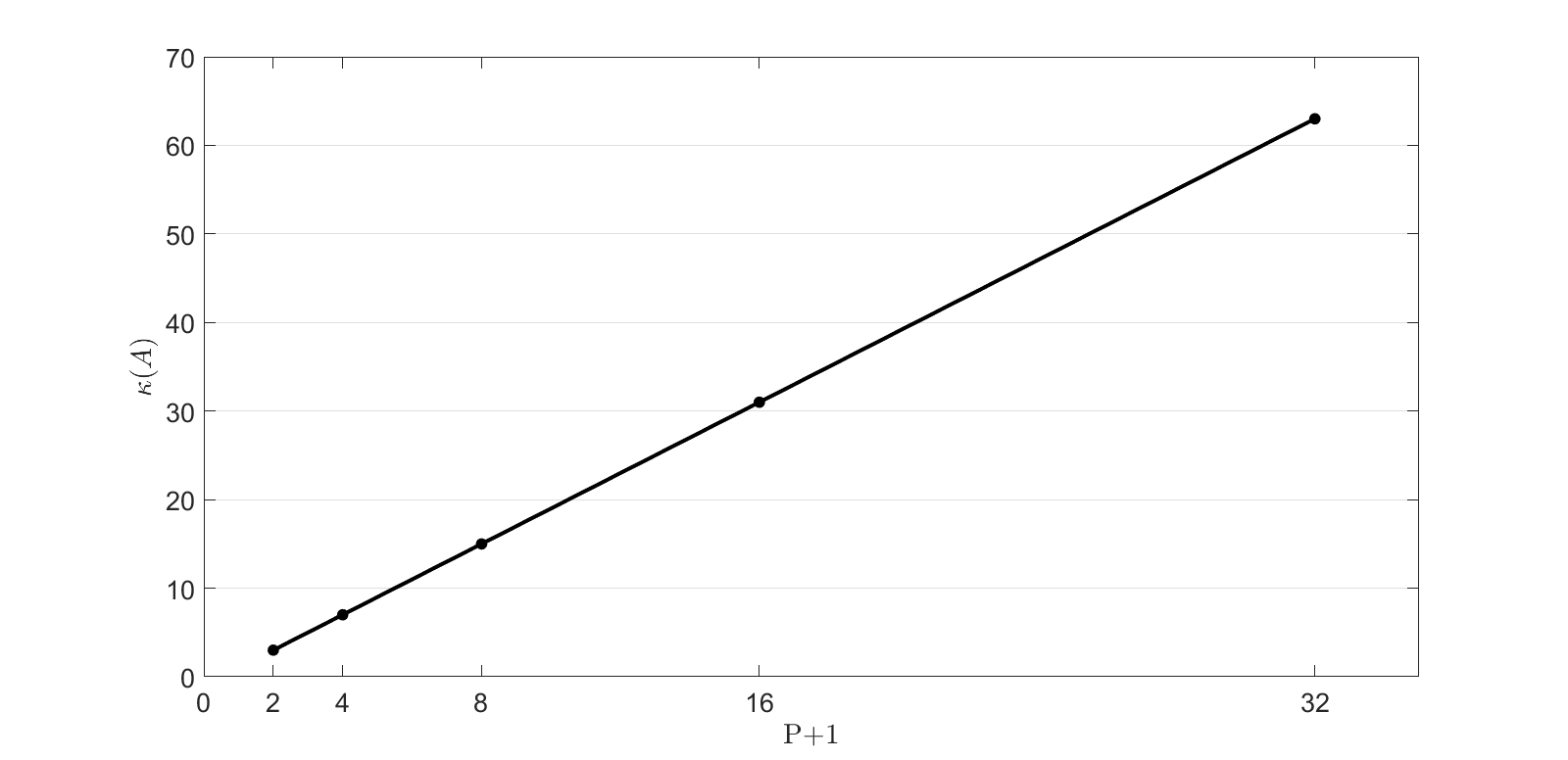}
    \caption{Condition number (in norm \(1\)) of the matrix \(\mathcal{M}\). As we can see, it grows linearly.}
    \label{fig:condleg}
\end{figure}

In this case, as we would expect, the condition in norm \(1\) of \(\mathcal{M}\) is much smaller now, as we can appreciate in figure~\ref{fig:condleg}. We observe now a linear behaviour with respect to the matrix size, in opposition with the exponential when the canonical basis was used (figure~\ref{fig:cond_canonical}). With this, we have solved what to do with the first term of equation~\eqref{eq:after_parts}. In the next subsection we will be dealing with the second term, the evaluation of the functions in the boundary of elements.

\subsection{Numerical flux}\label{subsec:flux}
We recall that the second term of equation~\eqref{eq:after_parts} had the form
\begin{equation}
    \left[\FF(\UU)\cdot\pmb{\psi}\right]_{x_e^l}^{x_e^u}.
\end{equation}

Now we know that the test function will be a Legendre's polynomial so, after resizing to local coordinates, in the boundary (that now will be either \(-1\) or \(1\)) will have the values
\begin{equation}
    L_p(\xi(x_e^l))=L_p(-1)=(-1)^p,\qquad L_p(\xi(x_e^u))=L_p(1)=1.
\end{equation}

The problem comes with \(\FF(\UU)\). This term is evaluated at the boundary of elements and hence, it carries the information between elements. But these values at the extremes of the elements may be not well defined, since although the solution is continuous at each element, it may not be continuous in \(\Fh^W\). So, we have the problem (actually is a \emph{Riemann problem}) of obtaining the value of the physical flux \(\FF\) at one point for which has two different values
\begin{equation}
\FF(\UU_h(x_e^u,t))\neq \FF(\UU_h(x_{e+1}^l,t)).
\end{equation}

For convenience in the notation we will denote by
\begin{equation}
\UU_L=\UU_h(x_e^u,t),\qquad \UU_R=\UU_h(x_{e+1}^l,t)
\end{equation}

the left and right solution at one interelement point, where \(x_e\in\Fh^W\).

The way to solve this is approximating it with the aid of the so-called \emph{numerical flux} \(\HH(\UU_L,\UU_R)\colon\RR^2\times\RR^2\to\RR^2\):
\begin{equation}
    \FF(\UU_h(x_e^u,t))\approx\HH(\UU_L,\UU_R)
\end{equation}

In words, what we are doing is approximating a not well-defined function in the problematic point by a proper function\footnote{We could use two different numerical fluxes, one per component, but for simplicity we will use just one.} that depends on the ``left'' and ``right'' values of the solution. Some desirable properties of the numerical flux are the following:
\begin{enumerate}
    \item \emph{Continuity}: \(\HH(\ell,r)\) is \emph{Lipschitz-continuous} with respect to \(\ell,r\), \textit{i.e.} there exists a constant \(L_H>0\) such that
    \begin{equation}
        \|\HH(\ell,r)-\HH(\ell^*,r^*)\|\leq L_H(\|\ell-\ell^*\|+\|r-r^*\|),
    \end{equation}

    where \(\ell,r,\ell^*,r^*\in\RR^2\), for some norm \(\|\cdot\|\).
    
    \item \emph{Consistency}: it works properly if we have usual continuity:
    \begin{equation}
        \HH(c,c)=\FF(c)
    \end{equation}

    for \(c\in\RR^2\).
    
    \item \emph{Conservativity}: in higher dimensions, this condition means some kind of mass conservation. In one dimension it reduces to commutativity of the arguments:
    \begin{equation}
        \HH(\ell,r)=\HH(r,\ell).
    \end{equation}

\end{enumerate}

Apart from the aforementioned properties, there is no scientific agreement of which particular numerical flux to choose. In the next part we present some reasonable options.

\subsubsection{Some choices of the numerical flux}
One of the most intuitive options is the \emph{central numerical flux} given by
\begin{equation}\label{eq:central_flux}
    \HH(\UU_L,\UU_R)=\frac{\FF(\UU_L)+\FF(\UU_R)}{2},
\end{equation}

that is, an arithmetic mean of the left and right physical fluxes. But this numerical flux is \emph{unconditionaly unstable} and, therefor, worthless for practical purposes (see~\cite{feistauer01}). In the most of applications, in which Navier-Stokes is included, it is suitable to use \emph{upwinding} numerical fluxes. The concept of upwinding is based on the idea that the information on properties of a quantity is propagated in the flow direction. There is an abundance of schemes for this upwinding term --- Vijagasundaram scheme, Steger-Warming scheme, VanLeer scheme or Osher-Solomon scheme are some examples that can be found in~\cite{feistauer01}. Further Riemann solvers can be found also in~\cite{toro13,feistauer03,foerste94,kroener97,wesseling09}. We have chosen and implemented two upwinding schemes mainly due to their simplicity and physical meaning. In the rest of the section we will give more details about these two schemes. 

\begin{itemize}

\item The first one is a slight modification of the central numerical flux given in equation~\eqref{eq:central_flux}.
\begin{equation}
     \HH(\UU_L,\UU_R)=\frac{1}{2}\left[\FF(\UU_L)+\FF(\UU_R)-a(\UU_R-\UU_L)\right],
\end{equation}

where \(a\) is the \emph{local propagation speed} and the direction is reflected in the difference \((\UU_R-\UU_L)\) (since the wave propagates from left to right). The main idea is to use a centred flux to which just enough dissipation is added to ensure stability in all cases. In the scalar case the needed viscosity is given by the largest local wave speed (see~\cite{sonnendruecker13}). With more components, there might be a superposition of the waves, each one with its corresponding eigenvalue. So taking the viscosity coefficient in the flux as the maximum over all eigenvalues should work. With this we have that
\begin{equation}
    a=\max_{\UU\in[\UU_L,\UU_R]}\left|\lambda\left(\frac{\partial\FF(\UU)}{\partial\UU}\right)\right|,
\end{equation}

that is, the maximum modulus of the eigenvalues of the jacobian matrix \(\FF'(\UU)\).

We will refer to the above flux as the \emph{Nessyahu and Tadmor} (NT) central scheme~\cite{nessyahu90}. It is most frequently called Lax-Friedrichs flux (although it is worth mentioning that such flux expression does not appear in Lax~\cite{lax54} but rather on Rusanov~\cite{rusanov62}).

\item The second numeric flux is based on using the information provided by the method of characteristics. Since the characteristics are Riemann invariants, the solution should remain constant along these curves. We recall that the characteristics had the form

\begin{equation}
   W_{1,2}=u\pm4\sqrt{\frac{\beta}{2\rho}}A^{1/4}
\end{equation}

and therefor
\begin{gather}
    A=\left(\frac{W_1+W_2}{4}\right)^4\left(\frac{\rho}{2\beta}\right)^2,\\
    u=\frac{W_1+W_2}{2}.
\end{gather}

Since \(W_1\) is the forward characteristic, it will need to get the information from the left, so if we denote
\begin{equation}
    \UU_L=\begin{bmatrix}A_L\\u_L\end{bmatrix},\qquad\UU_R=\begin{bmatrix}A_R\\u_R\end{bmatrix}
\end{equation}

we have that the upwinding forward characteristic is
\begin{equation}
    W_{1}^{\text{u}}=u_L+4\sqrt{\frac{\beta}{2\rho}}A_L^{1/4}.
\end{equation}

Analogously we have
\begin{equation}
    W_{2}^{\text{u}}=u_R-4\sqrt{\frac{\beta}{2\rho}}A_R^{1/4}.
\end{equation}

From here we build the upwinding variables
\begin{gather}
    A^{\text{u}}=\left(\frac{W_1^{\text{u}}+W_2^{\text{u}}}{4}\right)^4\left(\frac{\rho}{2\beta}\right)^2,\\
    u^{\text{u}}=\frac{W_1^{\text{u}}+W_2^{\text{u}}}{2}.
\end{gather}

So, the numerical flux will be the physical flux applied to the upwinding variables, \textit{i.e.}
\begin{equation}
    \HH(\UU_L,\UU_R)=\FF\left(\begin{bmatrix}A^{\text{u}}\\u^{\text{u}}\end{bmatrix}\right)=\begin{bmatrix}u^{\text{u}}A^{\text{u}}\\ \frac{(u^{\text{u}})^2}{2}+\frac{p}{\rho}\end{bmatrix}.
\end{equation}

We can find this \emph{characteristic flux} in the paper of Sherwin \textit{et al.}~\cite{sherwin03}.
\end{itemize}

To the best of our knowledge there is no theoretical analysis of this last flux (notice that it is done specifically for this problem). Regarding the NT flux there is some work on the stability of the method. We refer the interested reader to the aforecited references. 

\medskip

In this subsection we have dealt with points \(x_e\in\Fh^W\), but the treatment of the inlet and outlet points \(x_e\in\Fh^{io}\) is the same one. The only difference in this case would be, as one could expect
\begin{equation}
    \UU_L=\UU_{\text{inflow}}
\end{equation}

for the first element and
\begin{equation}
    \UU_R=\UU_{\text{outflow}}
\end{equation}

for the last element. Since we are in the case of nonlinear hyperbolic systems, as we can read in~\cite{dolejvsi16}, the theory for the boundary conditions is missing. 

\medskip

In this chapter we have achieved the semi-discrete scheme:
\begin{gather}
\begin{split}
\overbrace{\frac{2}{2p+1}\frac{\partial\hat{\UU}_p(t)}{\partial t}}^{\textrm{Subsection }\ref{subsec:test}}=\frac{2}{x_e^u-x_e^l}\int_I\left[\FF(\UU(\xi,t))\cdot\overbrace{L_p'(\xi)}^{\textrm{Equation }\eqref{eq:derleg}}+\TT(\UU(\xi)L_p(\xi)\right]\dd\xi\\
\underbrace{+(-1)^p\HH(\UU_h(x_e^l,t),\UU_h(x_{e-1}^u,t))-\HH(\UU_h(x_e^u,t),\UU_h(x_{e+1}^l,t))}_{\textrm{Subsection }\ref{subsec:flux}},
\end{split}
\end{gather}

for the Legendre degree \(p=0,\dots,P\) and for every element \(\Omega_e\), \(e=1,\dots,\Nel\). In order to obtain a discrete algorithm we need a quadrature rule for the integral and a scheme for evolve in time. In the next chapter, the chosen rules will be explained, along with some implementation details. Some numerical results are also presented in the following chapter.

\chapter{Implementation and results}\label{chap:results}
In this chapter we fully discretise the semi-discrete method previously presented. In order to achieve this, we present the quadrature rules used for the spatial and temporal integration. In the second part of this chapter we present some results of applying this method. We have studied its stability, convergence and sensitivity to parameters carrying out a total of nearly \(2000\) simulations.

\section{Spatial integration}\label{subsec:int}
With all we have done in this section, we have a semi-discrete method given by equation~\eqref{eq:after_parts}. But for the implementation of the method, we need to evaluate the integrals
\begin{align}
\pesc{\FF(\UU)}{\pmb{\psi}_x}{\Omega_e}&=\frac{2}{x_e^u-x_e^l}\pesc{\FF(\UU)}{\psi_x}{I}=\frac{2}{x_e^u-x_e^l}\int_{-1}^1\FF(\UU(\xi,t))\cdot L_p'(\xi)\dd\xi,\\
\pesc{\TT(\UU)}{\pmb{\psi}}{\Omega_e}&=\frac{2}{x_e^u-x_e^l}\int_{-1}^1\TT(\UU(\xi,t))\cdot L_p(\xi)\dd\xi.
\end{align}

The quadrature rule chosen, following the steps of~\cite{sherwin03} has been the Legendre-Gauss-Lobatto (LGL) quadrature. First we need to define the Gaussian integration:
\begin{defi}[Gaussian integration]
Let \(x_0,\dots,x_N\) be the roots of the \(N+1\)-th orthogonal polynomial \(p_{N+1}\), and let \(w_0,\dots,w_N\) be the solution of the linear system
\begin{equation}
    \sum_{j=0}^N(x_j)^kw_j=\int_{-1}^1x^kw(x)\dd x,\qquad0\leq k\leq N,
\end{equation}

where \(w(x)\colon A\to\RR^+\) is some weight function, where \(A\subset[-1,1]\) is a discrete set. The positive numbers \(w_j\) are called \emph{weights}.
\end{defi}

Then,
\begin{prop}
The following properties are satisfied:
    \begin{enumerate}
        \item 
        \begin{equation}\label{eq:2.2.10}
            \sum_{j=0}^Np(x_j)w_j=\int_{-1}^1p(x)w(x)\dd x
        \end{equation}

        for all \(p\in\mathcal{P}_{2N+1}(I)\) where \(I=[-1,1]\).
        
        \item It is not possible to find \(x_j\), \(w_j\), \(j=0,\dots,N\) such that~\eqref{eq:2.2.10} holds for all polynomials \(p\in\mathcal{P}_{2N+2}(I)\).
    \end{enumerate}
\end{prop}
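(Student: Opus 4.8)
The plan is to reduce everything to polynomial division by the orthogonal polynomial $p_{N+1}$, whose roots are by construction the quadrature nodes $x_0,\dots,x_N$. For part (i), I would take an arbitrary $p\in\mathcal{P}_{2N+1}(I)$ and perform Euclidean division by $p_{N+1}$, writing $p=q\,p_{N+1}+r$ with $\deg q\le N$ and $\deg r\le N$; this is legitimate since $\deg p_{N+1}=N+1$ with nonzero leading coefficient. The idea is that both the exact integral and the quadrature sum ``do not see'' the $q\,p_{N+1}$ part, for two independent reasons, so that $p$ and its remainder $r$ are integrated identically by both.

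On the integral side I would write $\int_{-1}^1 p\,w\dd x=\int_{-1}^1 q\,p_{N+1}\,w\dd x+\int_{-1}^1 r\,w\dd x$, and note that the first term vanishes because $p_{N+1}$ is orthogonal (with respect to the inner product weighted by $w$) to every polynomial of degree $\le N$, in particular to $q$; this is precisely Proposition~\ref{prop:2.3}\ref{prop:pol1}. On the quadrature side, $\sum_{j}p(x_j)w_j=\sum_j q(x_j)p_{N+1}(x_j)w_j+\sum_j r(x_j)w_j$, and here the first term drops because each $x_j$ is a root of $p_{N+1}$ (Proposition~\ref{prop:2.3}\ref{prop:pol2} guarantees there are exactly $N+1$ such simple roots in $\,]-1,1[\,$, so the nodes are well defined). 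It then remains to match $\sum_j r(x_j)w_j$ with $\int_{-1}^1 r\,w\dd x$: the defining linear system $\sum_j (x_j)^k w_j=\int_{-1}^1 x^k w(x)\dd x$ holds for $0\le k\le N$, and by linearity it extends from the monomials $x^k$ to every polynomial of degree $\le N$, in particular to $r$. Chaining the three equalities yields $\sum_j p(x_j)w_j=\int_{-1}^1 p\,w\dd x$, which is the claim.

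For part (ii) I would produce a single explicit counterexample that works for \emph{any} admissible choice of nodes. Given arbitrary $x_0,\dots,x_N$, set $\Pi(x)=\prod_{j=0}^N (x-x_j)^2\in\mathcal{P}_{2N+2}(I)$. Since $\Pi(x_j)=0$ for every $j$, the quadrature sum $\sum_j \Pi(x_j)w_j$ equals $0$ regardless of the weights; on the other hand $\Pi\ge0$ and $\Pi$ is not identically zero, so for a genuine positive weight the exact integral $\int_{-1}^1 \Pi\,w\dd x>0$. Hence no $(N+1)$-point rule can be exact on $\Pi$, and a fortiori not on all of $\mathcal{P}_{2N+2}(I)$.

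The routine calculations (the division, and the extension from monomials to degree-$\le N$ polynomials by linearity) are harmless. The genuinely delicate points are two compatibility issues one must not lose sight of: in part (i) the vanishing of $\int q\,p_{N+1}\,w$ requires that the family $\{p_n\}$ be orthogonal with respect to the \emph{weighted} inner product $\langle f,g\rangle=\int f g\,w\dd x$ and not merely the bare $L^2$ product, which matters as soon as $w$ is non-constant; and in part (ii) the strict inequality $\int_{-1}^1\Pi\,w\dd x>0$ relies on $w$ being a positive weight on a set of positive measure. Making these hypotheses on $w$ explicit is the only real obstacle; once granted, both parts follow cleanly.
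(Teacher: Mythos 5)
Your argument is correct, and it is the standard textbook proof of Gaussian quadrature exactness; note, however, that the paper itself does not prove this proposition at all — it simply defers to Davis and Rabinowitz — so there is no in-paper argument to compare against. Your part (i) (Euclidean division $p=q\,p_{N+1}+r$, killing the $q\,p_{N+1}$ term once by weighted orthogonality and once because the nodes are roots of $p_{N+1}$, then matching $r$ via linearity of the defining system $\sum_j x_j^k w_j=\int_{-1}^1 x^k w\,\dd x$ for $k\le N$) and part (ii) (the nonnegative polynomial $\prod_j(x-x_j)^2$ of degree $2N+2$, annihilated by any $(N+1)$-point rule but with strictly positive weighted integral) are both complete and correctly invoke Proposition~\ref{prop:2.3}. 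The two caveats you flag are real and worth stating: the orthogonality must be taken with respect to the $w$-weighted inner product, whereas the paper's Proposition~\ref{prop:2.3} is phrased for the unweighted $L^2$ product (the paper's definition of the weight function, on ``a discrete set $A\subset[-1,1]$'', is itself garbled); and positivity of $\int_{-1}^1\prod_j(x-x_j)^2\,w\,\dd x$ needs $w>0$ on a set of positive measure. For the Legendre case actually used in the text ($w\equiv1$) both issues evaporate, so your proof fills the gap the citation leaves.
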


The proofs can be found in~\cite{davis07}. This version of Gauss integration is quite well known. However, the roots, which correspond to the collocation points, are all in the open interval \(]-1,1[\). The requirement of imposing boundary conditions at one or both end points creates the need for the generalised Gauss integration formulas which include these points. This lead us to the Gauss-Lobatto integration:
\begin{defi}[Gauss-Lobatto integration]
Let \(-1=x_0,x_1,\dots,x_n=1\) be \(N+1\) roots of the polynomial
\begin{equation}
    q(x)=p_{N+1}(x)+ap_N(x)+bp_{N-1}(x),
\end{equation}

where \(a\) and \(b\) are chosen so that \(q(-1)=q(1)=0\). Let \(w_0,\dots,w_N\) be the solution of the linear system
\begin{equation}
    \sum_{j=0}^N(x_j)^kw_j=\int_{-1}^1x^kw(x)\dd x\qquad0\leq k\leq N.
\end{equation}

Then
\begin{equation}
    \sum_{j=0}^Np(x_j)w_j=\int_{-1}^1p(x)w(x)\dd x,
\end{equation}

for all \(p\in\mathcal{P}_{2N-1}(I)\).
\end{defi}

In the special case of Jacobi weight, \textit{i.e.} \(w(x)=(1-x)^\alpha(1+x)^\beta\) with \(\alpha,\beta\in[-1/2,1/2]\) there is an alternative characterisation of the Gauss-Lobatto points, namely they are the points \(-1\), \(+1\) and the roots of the polynomial
\begin{equation}
    q(x)=p'_N(x).
\end{equation}

For the proofs of the aforementioned results see~\cite{canuto12}. With this, if we choose \(p_N=L_N\) the \(n\)-th Legendre we obtain the \emph{Legendre-Gauss-Lobato} quadrature. Since explicit formulas for the quadrature nodes are not known to the best of our knowledge, such points have to be computed numerically as zeroes of appropriate polynomials. The quadrature weights can be expressed in closed form in term of the nodes, as indicated in the following formulas (see, \textit{e.g.}, Davis and Rabinowitz~\cite{davis07}):
\begin{gather}
    x_0=-1,x_N=1,\{x_j\}_{j=1}^{N-1}\text{ zeroes of }L_N';\\
    w_j=\frac{2}{N(N+1)}\frac{1}{[L_N(x_j)]^2}\qquad j=0,\dots,N.
\end{gather}

Some points and weights are presented in table~\ref{table:LGL}.

\begin{table}[thb]
\centering
\begin{tabular}{|c|c|c|}
\hline
\textbf{Number of nodes}        & \textbf{Nodes}                                                          & \textbf{Weights}                                    \\ \hline
\(1\)                  & \(0\)                                                          & \(2\)                                      \\ \hline
\(2\)                  & \(\pm\sqrt{\frac{1}{3}}\approx\pm0.58\)                        & \(1\)                                      \\ \hline
\multirow{2}{*}{\(3\)} & \(0\)                                                          & \(\frac{8}{9}\approx0.89\)                 \\ \cline{2-3} 
                       & \(\pm\sqrt{\frac{3}{5}}\approx\pm0.77\)                        & \(\frac{5}{9}\approx0.56\)                 \\ \hline
\multirow{2}{*}{\(4\)} & \(\pm\sqrt{\frac{3}{7}-\frac{2}{7}\sqrt{65}}\approx\pm0.34\) & \(\frac{18+\sqrt{30}}{36}\approx0.65\)     \\ \cline{2-3} 
                       & \(\pm\sqrt{\frac{3}{7}+\frac{2}{7}\sqrt{65}}\approx\pm0.86\) & \(\frac{18-\sqrt{30}}{36}\approx0.35\)     \\ \hline
\multirow{3}{*}{\(5\)} & \(0\)                                                          & \(\frac{128}{225}\approx0.57\)             \\ \cline{2-3} 
                       & \(\pm\frac{1}{3}\sqrt{5-2\sqrt{\frac{10}{7}}}\approx\pm0.54\)  & \(\frac{322+13\sqrt{70}}{900}\approx0.48\) \\ \cline{2-3} 
                       & \(\pm\frac{1}{3}\sqrt{5+2\sqrt{\frac{10}{7}}}\approx\pm0.90\)  & \(\frac{322+13\sqrt{70}}{900}\approx0.24\) \\ \hline
\end{tabular}%
\caption{Points and weights for the Legendre-Gauss-Lobatto quadrature in \([-1,1]\).}
\label{table:LGL}
\end{table}

\bigskip

With this, we have the final discrete algorithm

\begin{gather}
\begin{split}
\overbrace{\frac{2}{2p+1}\frac{\partial\hat{\UU}_p(t)}{\partial t}}^{\textrm{Subsection }therefor{subsec:test}}=\overbrace{\frac{2}{x_e^u-x_e^l}\int_I\left[\FF(\UU(\xi,t))\cdot\underbrace{L_p'(\xi)}_{\textrm{Equation }\eqref{eq:derleg}}+\TT(\UU(\xi)L_p(\xi)\right]\dd\xi}^{\textrm{Subsection }\ref{subsec:int}}\\
\underbrace{+(-1)^p\HH(\UU_h(x_e^l,t),\UU_h(x_{e-1}^u,t))-\HH(\UU_h(x_e^u,t),\UU_h(x_{e+1}^l,t))}_{\textrm{Subsection }\ref{subsec:flux}},
\end{split}
\end{gather}

for the Legendre degree \(p=0,\dots,P\) and for every element \(\Omega_e\), \(e=1,\dots,\Nel\). We have not mentioned it, but regarding the initial condition we need to transform it to its degrees of freedom. Following the reasoning of the previous chapter, if \(\UU_0(x)=\UU(x,0)\) we have
\begin{equation}
    \frac{2}{2p+1}\left(\hat{\UU}_{0}\right)_p=\int_I\UU_0(\xi)L_p(\xi)\dd\xi
\end{equation}

for \(p=0,\dots,P\). This result was also obtained by~\cite{tassi03}.

\section{Temporal integration}
We have obtained an algorithm for getting the derivatives of the degrees of freedom. Hence, we need a scheme to evolve in the temporal dimension. We will only recover the physical solution when the simulation is done (in order to plot the solution) using equation~\eqref{eq:DoF}. In most of the applications a Runge-Kutta of order \(2\) or \(3\) is enough, but following~\cite{sherwin03b} we have chosen the \emph{Adams-Bashforth} scheme. These methods were designed by Adams to solve a differential equation modelling capillary action due to Bashforth  in~\cite{bashforth83}. The main reasons of this election are, on one hand, the implementation-friendly expression; and on the other hand, the fact that the Adams-Bashforth method with \(s\) steps has order \(s\). This will allow us to rise the order if it is desired.

This scheme is encompassed in the so-called linear multistep methods. Conceptually, multistep methods attempt to gain efficiency by keeping and using the information from previous steps rather than discarding it (as Euler explicit, for example). Moreover, in the case of linear multistep methods, a linear combination of the previous points and derivative values is used. Using our notation, denoting the iteration by superscripts, a linear multistep method has the form
\begin{gather}
    \hat{\UU}^{n+s}+a_{s-1}\cdot\hat{\UU}^{n+s-1}+a_{s-2}\cdot\hat{\UU}^{n+s-2}+\dots+a_{0}\cdot\hat{\UU}^{n}\\
    =\Delta t\cdot\left(b_s\cdot\hat{\UU}_t^{n+s}+b_{s-1}\cdot\hat{\UU}_t^{n+s-1}+\dots+b_0\cdot\hat{\UU}_t^{n}\right)
\end{gather}

where the coefficients \(\{a_i\}_{i=0}^{s-1}\) and \(\{b_i\}_{i=0}^s\) determine the method, \(\hat{\UU}^0=\hat{\UU}_0\) and \(\Delta t\) is the time step. In the case of the Adams-Bashforth methods, they are explicit methods and hence \(a_{s-1}=-1\) and \(a_{s-2}=\dots=a_0=0\). Regarding the other coefficients, the main idea is to interpolate the derivatives using the Lagrange formula for polynomial interpolation. This yields the expression for the coefficients
\begin{equation}
    b_{s-j-1}=\frac{(-1)^j}{j!(s-j-1)!}\int_0^1\prod^{s-1}_{\substack{i=0\\i\neq j}}(u+i)\dd u,
\end{equation}

for \(j=0,\dots,s-1\). It can be shown that with this construction the \(s\)-step Adams-Bashforth method has order \(s\) (see~\cite{iserles09}). 

\medskip

In the following sections some numerical results have been performed using this discrete method presented along this chapter and the previous one. All the simulations have been done in Matlab R2017a.
\section{Test case}\label{sec:results}
As a test case we have replicated one of the numerical experiments done by~\cite{sherwin03}. We consider a normalised vessel of unit area, \(A_0=1\) and normalise the mean velocity so that it has a unit value too (\(u_0=1\)). Physiologically we expect the wave speed to be an order of magnitude higher than the mean velocity and so we prescribe a mean wave speed of \(c_0=\sqrt{\beta/(2\rho)}A_0^{1/4}=10\). This can be achieved by selecting \(\beta=100\) and \(\rho=0.5\). As inflow velocity we use an analytic function simulating the heart beat of the form
\begin{equation}
    \uinf(t)=1-0.4\sin(wt)-0.4\sin(2wt)-0.2\cos(2wt)
\end{equation}

where \(w=2\pi/T\) and \(T\) is the time period (see figure~\ref{fig:inflow}).
\begin{figure}[htb]
    \centering
    \includegraphics[width=\linewidth]{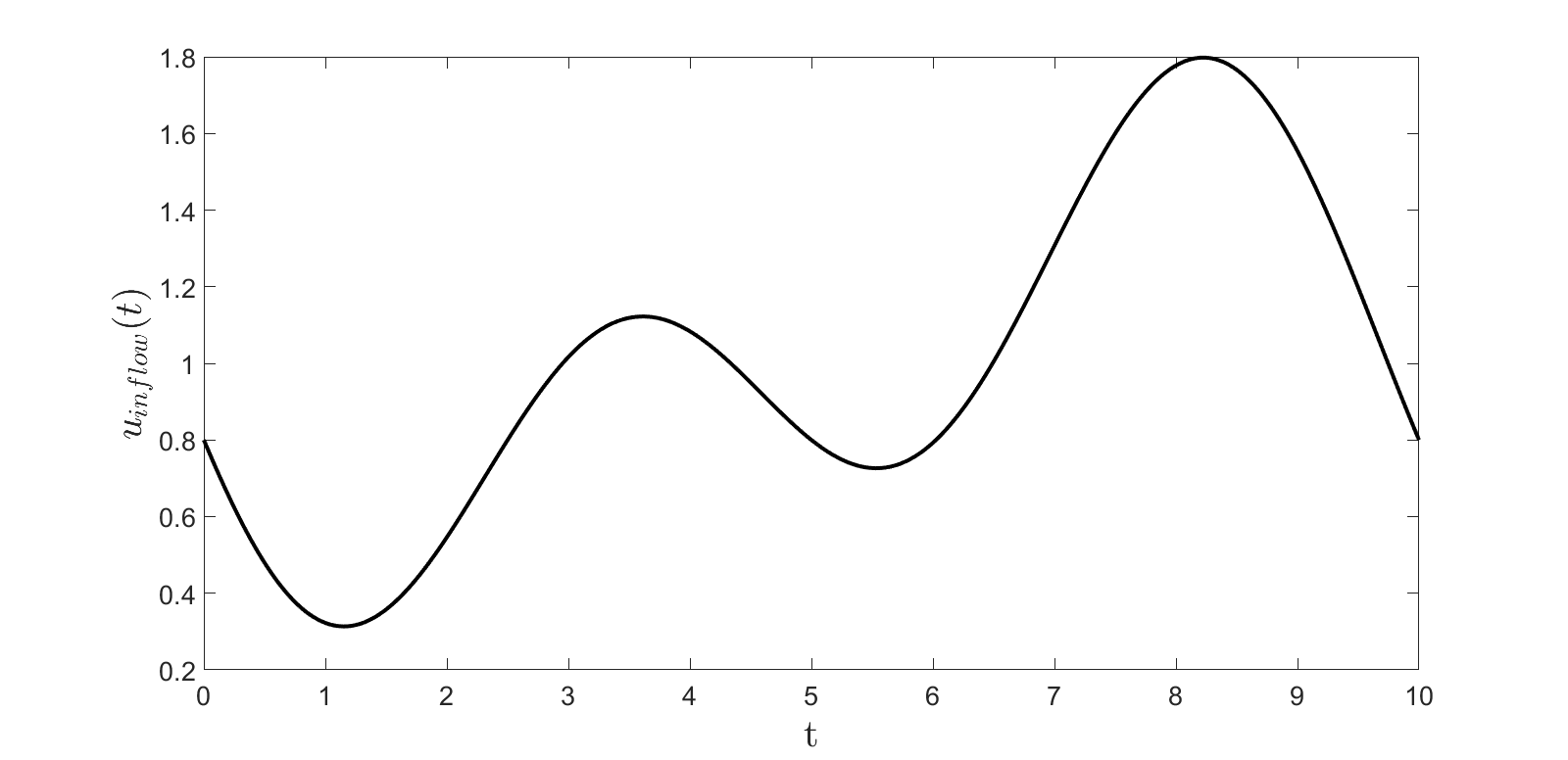}
    \caption{Inflow velocity in the same units as the initial velocity.}
    \label{fig:inflow}
\end{figure}

Making the assumption that the spatial wavelength \(\lambda\) is approximately \(100\) times larger than the vessel diameter, we choose a time span of \(10\) since for the linear case \(\lambda\approx100\).

Since we are considering a problem with a wavelength of \(\lambda=100\), in order to observe the wave as a function of the artery centreline we will consider a computational interval \([-100,100]\). The domain is subdivided into \(\Nel=10\) elements of equal length and a polynomial order of \(P=7\) is applied within each element. We impose the boundary conditions of
\begin{align}
u_l(-100,t)&=\uinf(t),&u_r(100,t)&=1,\\
A_l(-100,t)&=1,&A_r(100,t)&=1.
\end{align}

A second-order time stepping scheme was applied with a time step of \(\Delta t=5\cdot10^{-3}\). 
\begin{figure}[htb]
    \centering
    \includegraphics[width=\linewidth]{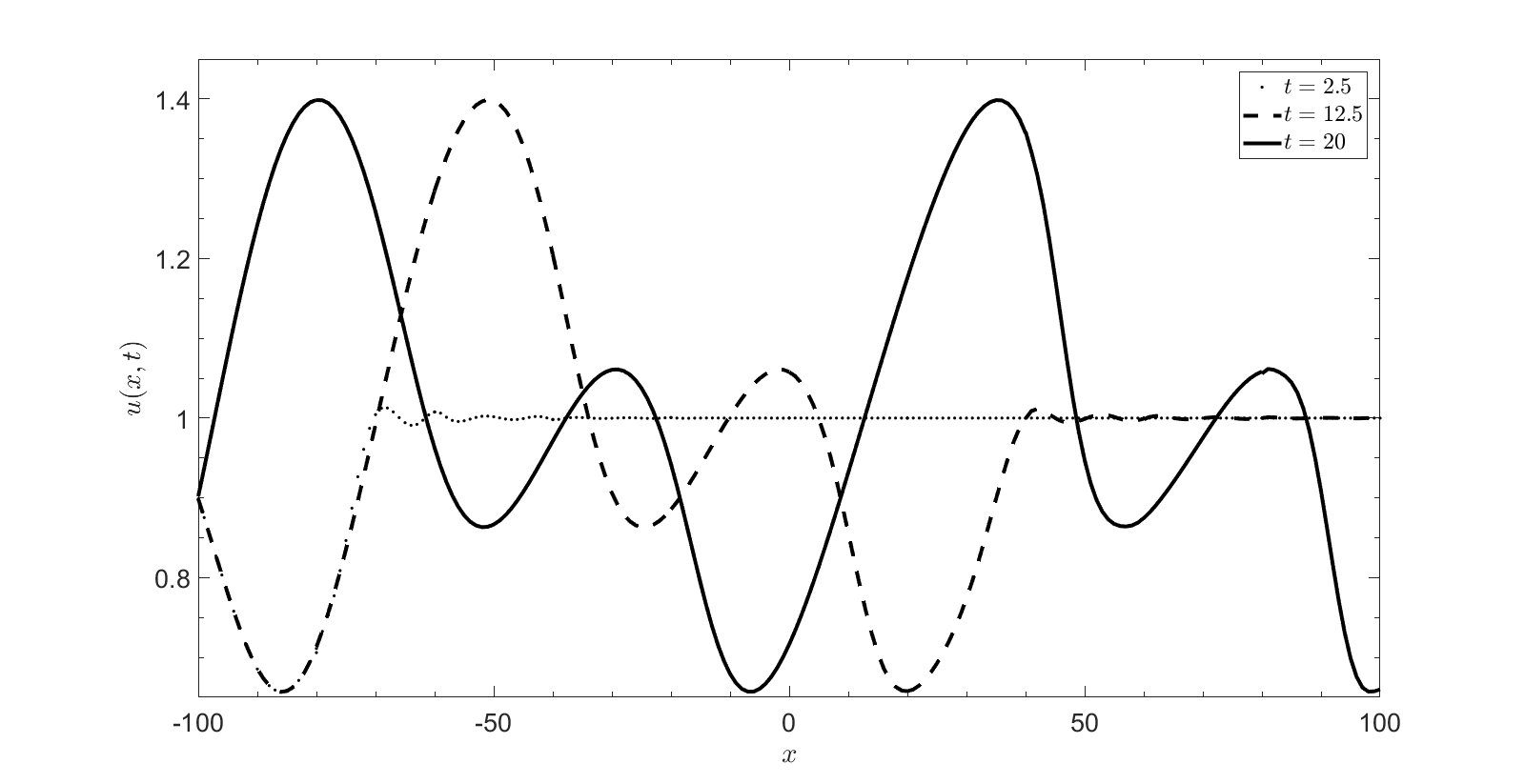}\\
    \includegraphics[width=\linewidth]{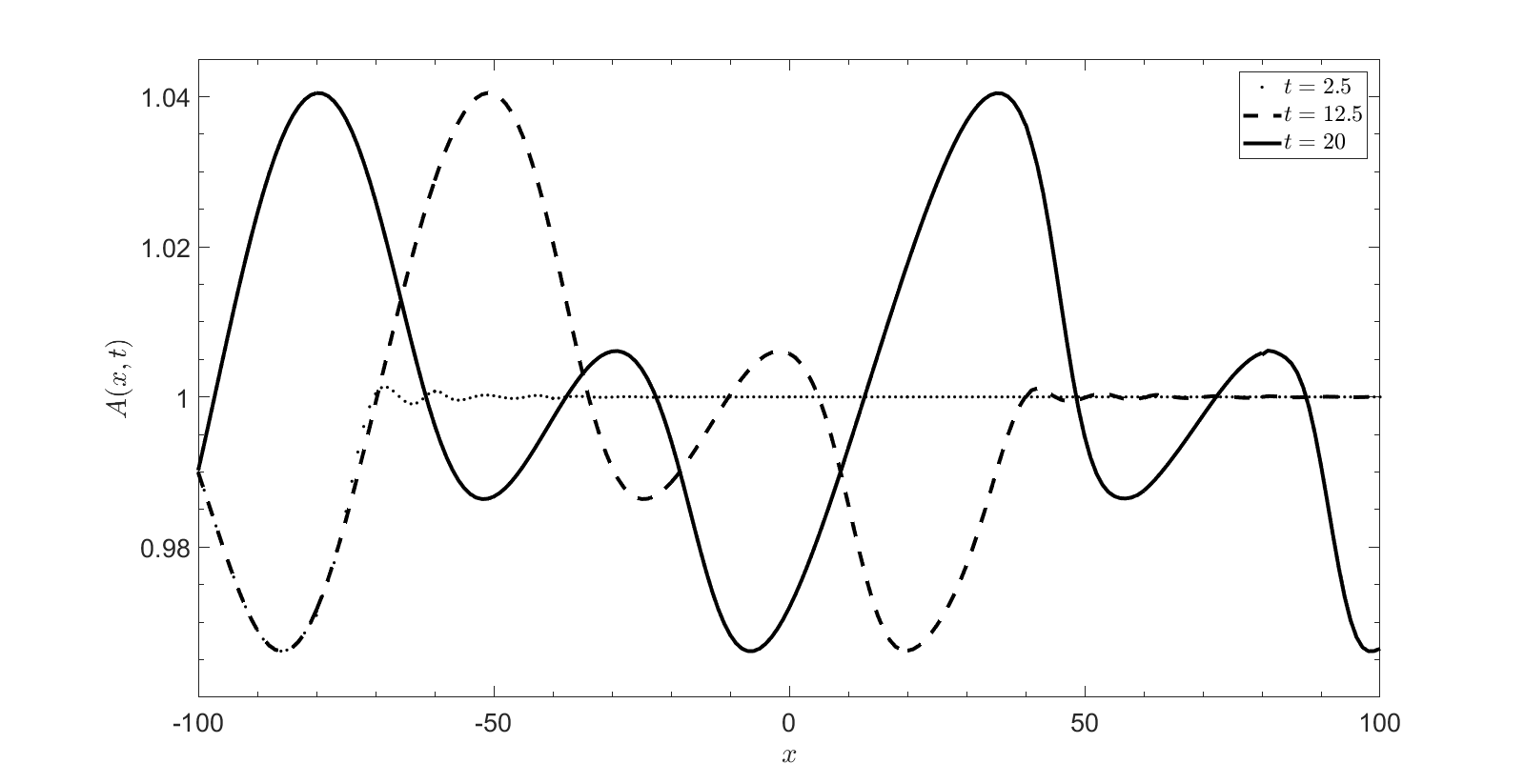}
    \caption{Advection of travelling waves at times \(t=2.5\), \(t=12.5\), \(t=20\).}
    \label{fig:sherwin}
\end{figure}

In figure~\ref{fig:sherwin} we can see how, indeed, we obtain the same results and a travelling wave appears.

Both in these simulations and henceforth, the numerical flux used was the characteristic flux presented in~\ref{subsec:flux}. The plots have been repeated for the NT flux and no significant differences have been observed. 

\section{Convergence and stability}
Regarding the stability and convergence of the method, there is not so much in the literature. We can find some partial results about stability (specially for \emph{nodal} DG) in~\cite{krivodonova13,radice11,gottlieb91}. In these papers some CFL conditions are given, but for specific problems. Hence, in general, their conclusions can not be extrapolated. In order to obtain a rough idea of the stability of this method for our problem, we have performed some simulations. 
\begin{table}[htb]
\centering
\begin{tabular}{|l|c|l|c|}
\hline
\(\Nel\) & \(1,\dots,8\)                     & \(\beta\)  & \(2.2\cdot10^4\) \\
\(P\)    & \(0,\dots,9\)                     & \(BPM\)    & \(80\)           \\
\(\Delta t\)   & \(10^{-4}:5\cdot10^{-4}:10^{-2}\) & Final time & \(5\) seconds    \\ \hline
\end{tabular}
\caption{Parameters' values for the stability experiment.}
\label{table:stability}
\end{table}

In table~\ref{table:stability} we can see the parameters used for the simulation. For the time step we have used Matlab notation (\texttt{ini:step:end}). Following the measures used in subsection~\ref{subsec:poster} and appendix~\ref{appendix:bio} we have estimated \(A_0=A_{\text{outflow}}=A_{\text{inflow}}=4\cdot10^{-2}\) and \(u_0=u_{\text{outflow}}=1\). The other parameters have the same values as in the previous simulations. Therefor, the conclusions obtained are based on \(1600\) simulations. What we have observed is that, in every case, no matter the polynomial degree, number of elements (and hence space grid refinement), or time step the solution obtained is completely stable, \textit{i.e.}, we have not observed neither oscillations nor divergence. Hence we cannot state any kind of CFL condition relating \(\Delta t\) with \(\Nel\). We remark that these are great results, since from a clinical point of view (due to the reasonable choice of the parameters) the method is stable. 

\bigskip

With the convergence of the method we have a similar situation. Some works have been made in this line, see~\cite{lasaint74,johnson86}, but for specific (and scalar) equations. Nevertheless, it is well known that FEMs are chosen for their accuracy (see~\cite{logan11}), and this feature is inherited by DG-FEM. Moreover, in our case we have an additional problem: the boundary conditions. For reproducibility we have considered the same inlet data as~\cite{sherwin03}, a periodic function where we can control the beats per minute. But, because of the form how it is constructed, it is not immediate its manipulation. Due to this, we can not compare it with real data to measure the real error.

What we can do is to measure how discontinuous is the \emph{discontinuous} Galerkin method. For fixed parameters (the same ones as in the previous simulations, unless otherwise indicated) we have refined the mesh, \textit{i.e.}, we have augmented the number of elements \(\Nel\) from two to eight. As a measure of discontinuity we have considered the norm infinity of the differences between the left and right values of blood flow velocity at each element boundary. We have considered a \(5\) seconds simulation, measuring approximately at first, at the middle and at the end of the simulation. Since depending on the time step, each simulation has a variable number of iterations we have specified this value in the variable \textit{It}.
\begin{figure}[htb]
    \centering
    \begin{subfigure}{\textwidth}
    \includegraphics[width=.5\textwidth]{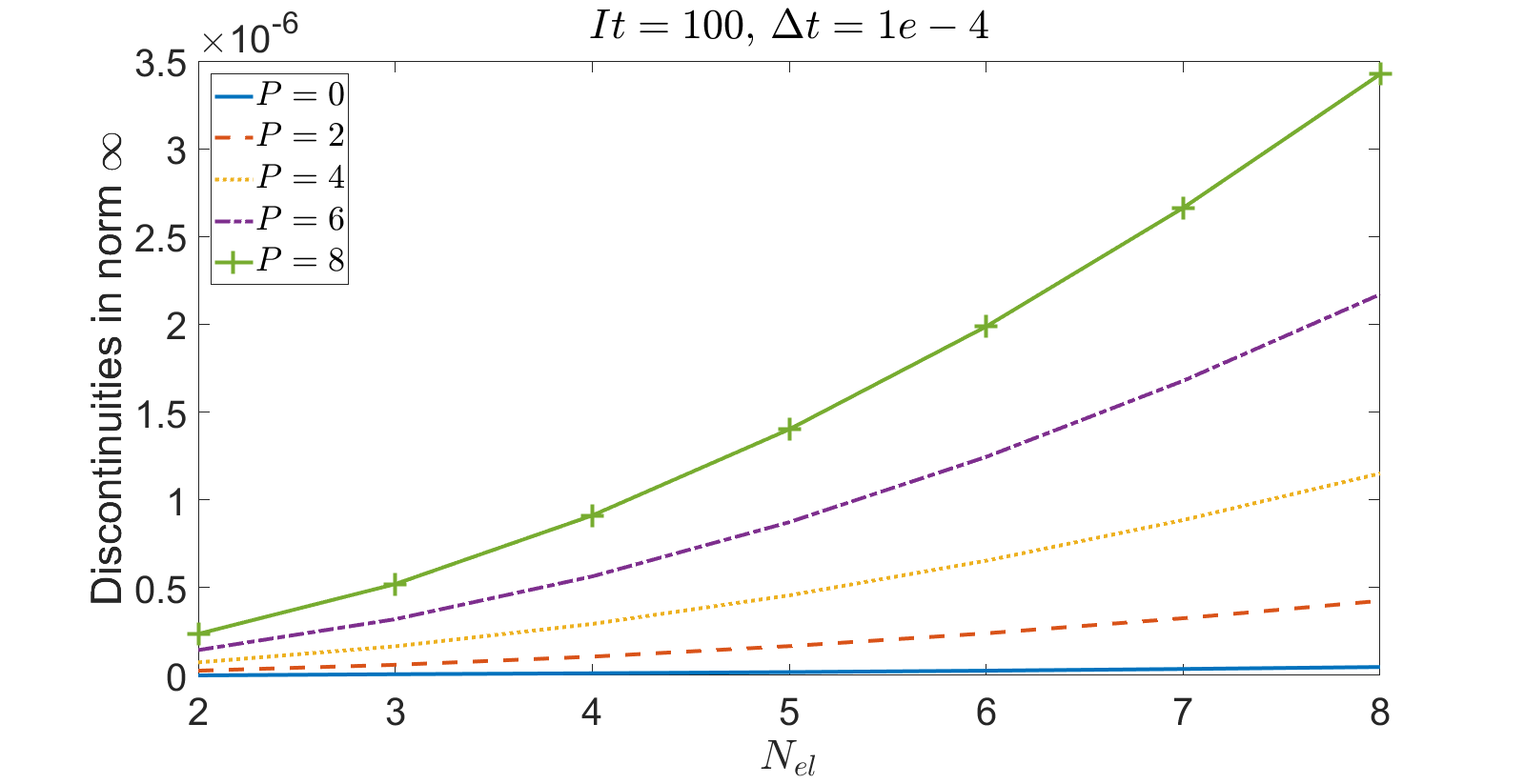}
    \includegraphics[width=.5\textwidth]{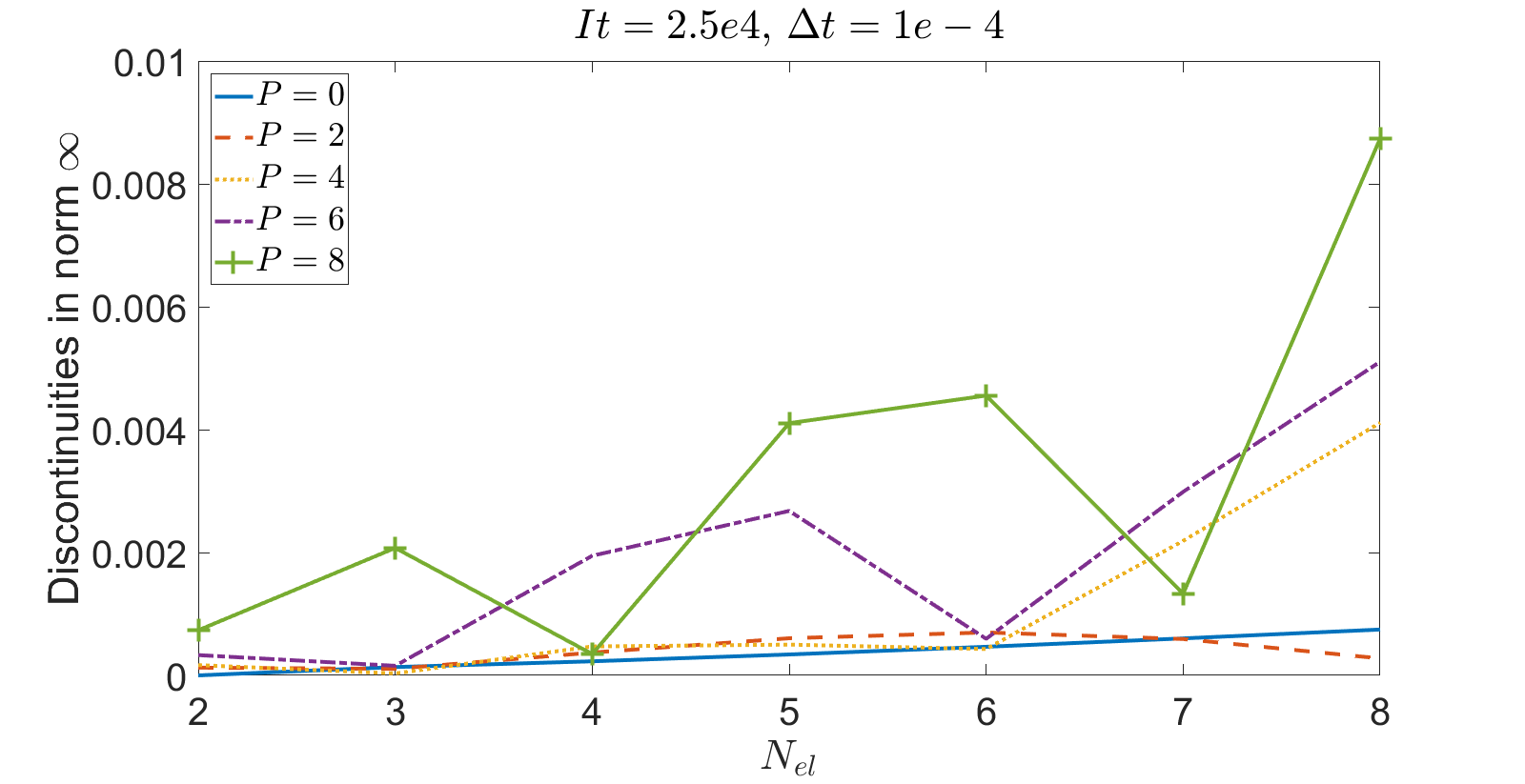}
    \end{subfigure}\\
    \begin{subfigure}{\textwidth}
    \includegraphics[width=.5\textwidth]{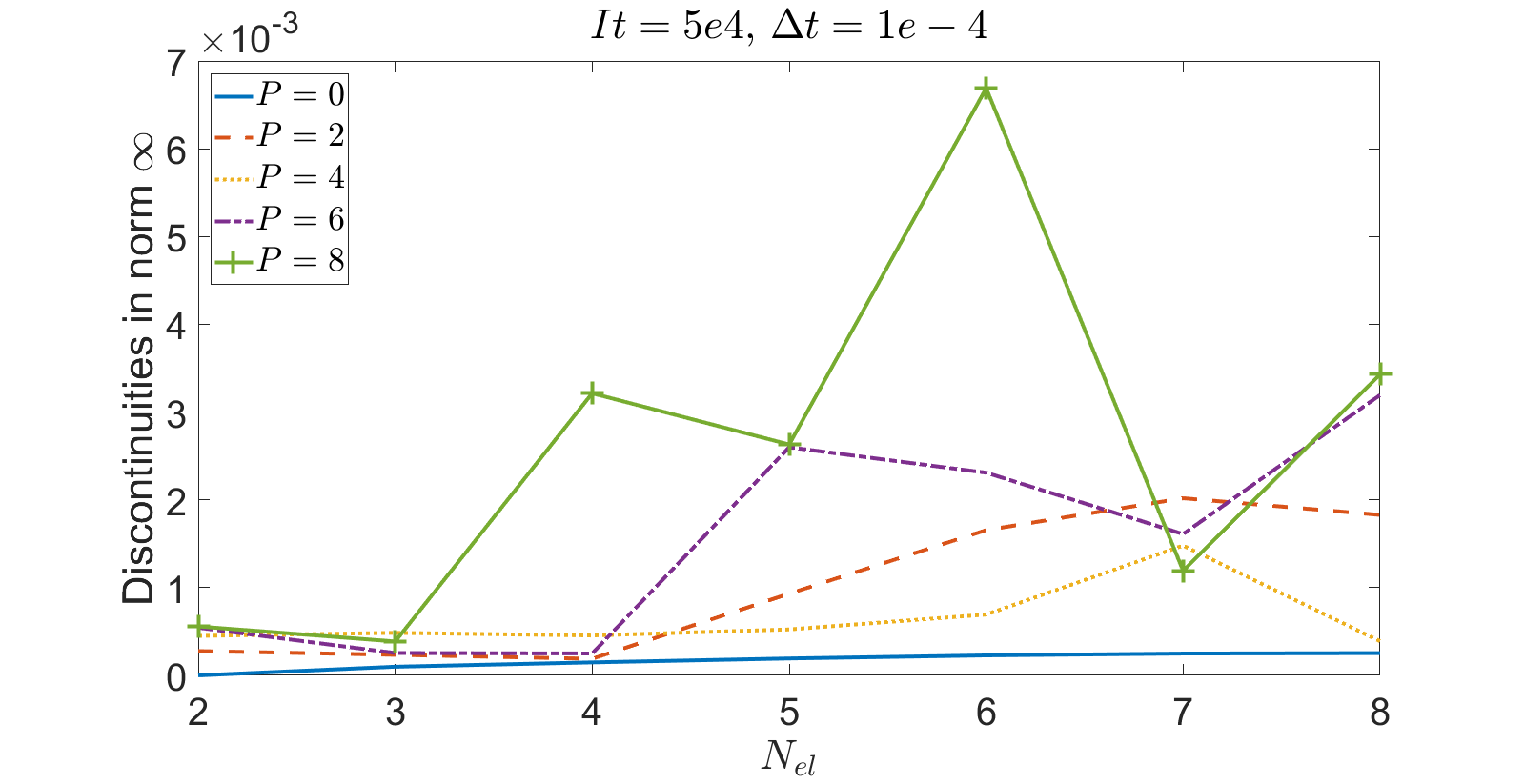}
    \includegraphics[width=.5\textwidth]{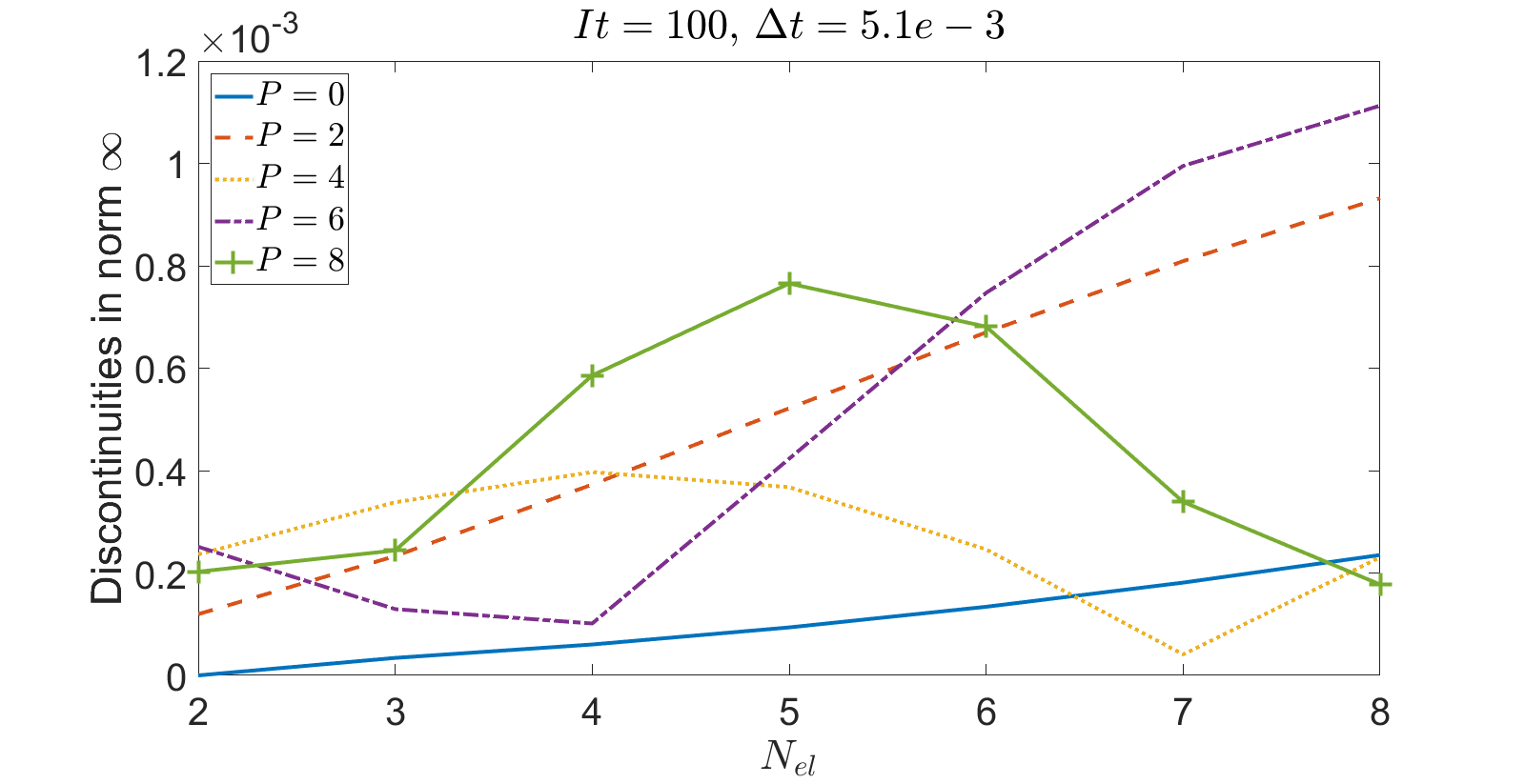}
    \end{subfigure}\\
    \begin{subfigure}{\textwidth}
    \includegraphics[width=.5\textwidth]{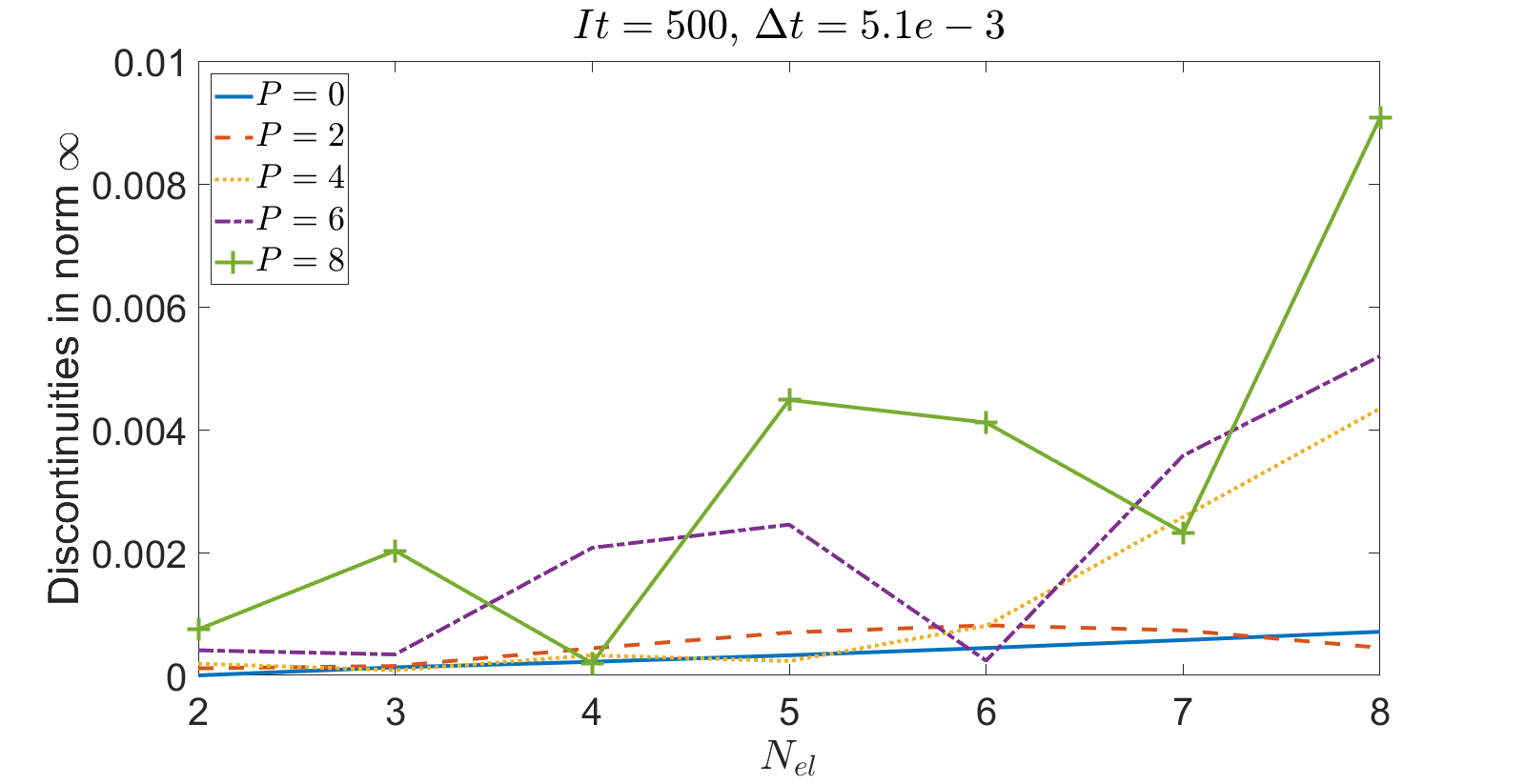}
    \includegraphics[width=.5\textwidth]{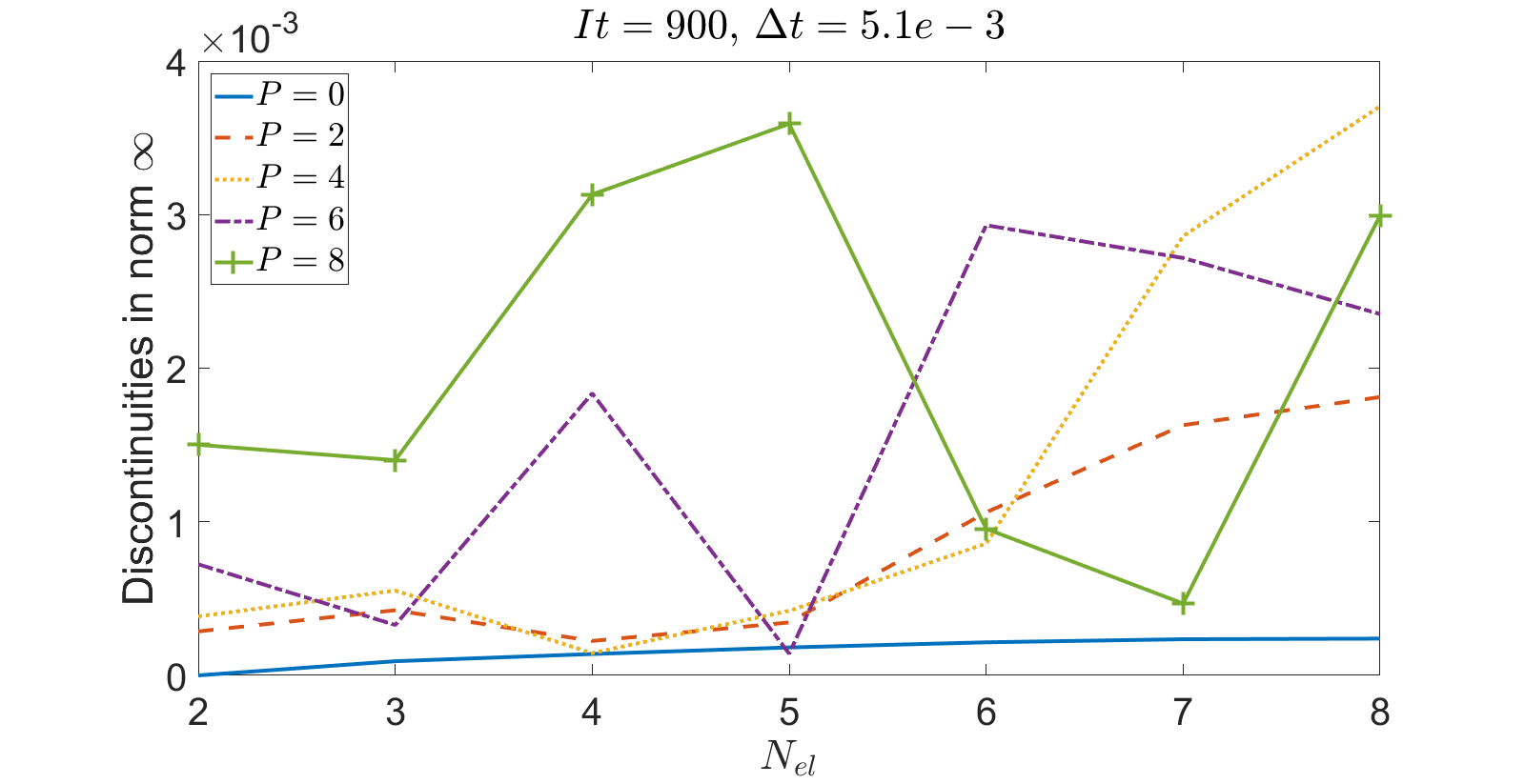}
    \end{subfigure}\\
    \begin{subfigure}{\textwidth}
    \includegraphics[width=.5\textwidth]{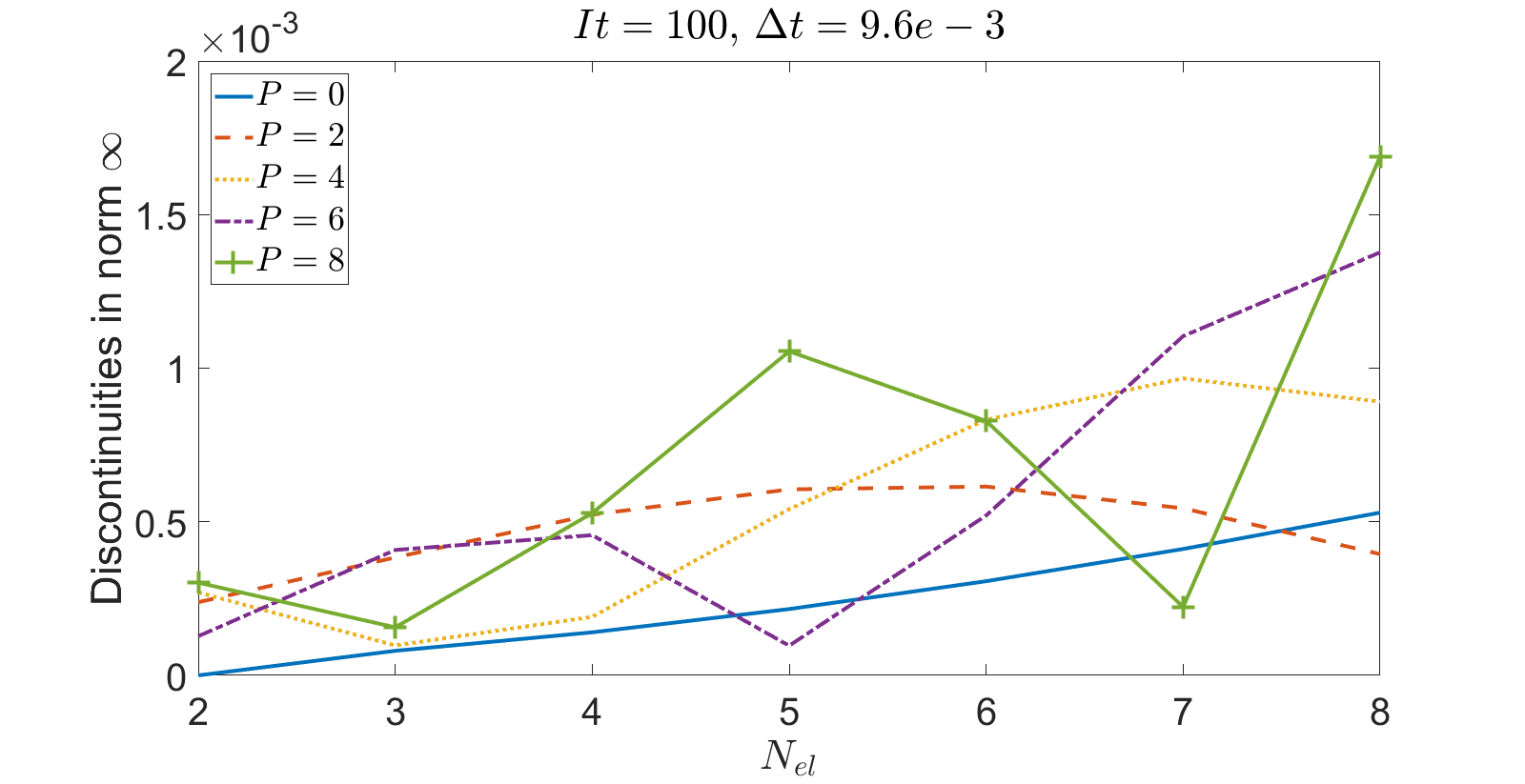}
    \includegraphics[width=.5\textwidth]{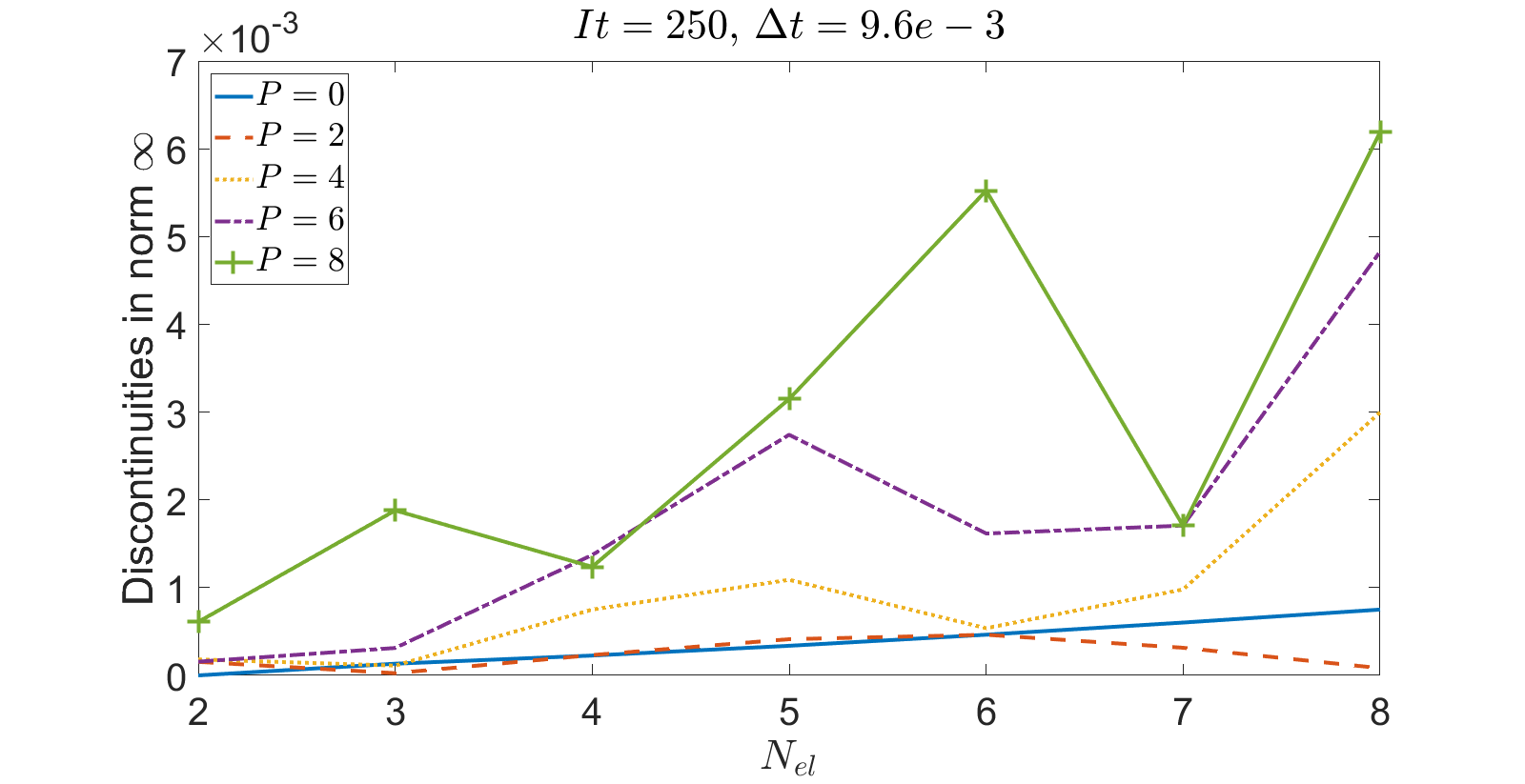}\\
    \includegraphics[width=.5\textwidth]{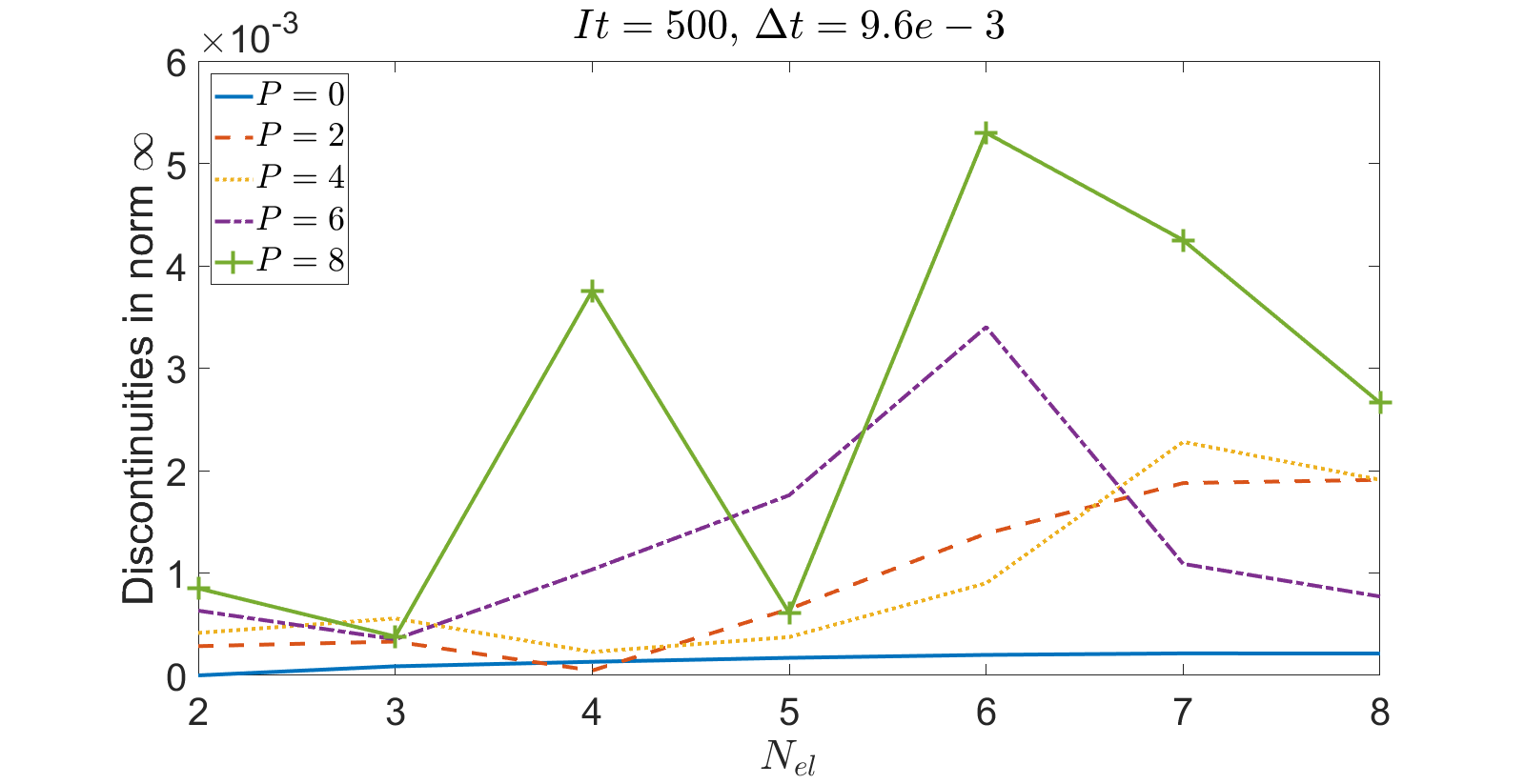}
    \end{subfigure}
    \caption{Discontinuity measure varying \(\Nel\), \(\Delta t\), \(P\) and the time when it is measured.}
    \label{fig:convergence}
\end{figure}

We present the results in figure~\ref{fig:convergence}. As it can be seen, there is not a clear pattern from which some assertion can be done. It seems that the higher is the number of elements, the higher are the discontinuities. This may be caused due to the freedom of the polynomial to oscillate when it has a higher degree. Nevertheless, we notice that the magnitudes are insignificant. We recall that the rest blood velocity was \(1\) m/s.

\section{Biomedical simulations dependent on the parameters}
The next result we show is how the variables change depending on the physical parameters. As independent variables we have chosen two parameters of our model. The first parameter is \(\beta\), which embraces the physical properties of the vessel's wall. Following the measures used in subsection~\ref{subsec:poster} and appendix~\ref{appendix:bio} we have estimated a reasonable range of \(\beta\in[15\cdot10^3,30\cdot10^3]\).  The second variable we have controlled is the inflow data. We have used the aforementioned inflow function for velocity but now controlling the period \(T\). Since it is the periodicity of the wave, biologically it means the heart beats per second. We have done simulations for \(T\in[0.8,2]\) which correspond to values of beats per minute (in what follows, BPM) between \(48\) and \(120\). 

As dependent variables we have considered the maximum value of the flow velocity and vessel amplitude. These simulations are done in a time span of \(5\) seconds with a time step of \(0.005\). The spatial discretisation has been of \(10\) elements with polynomials of degree \(5\).
\begin{figure}[htb]
    \centering
    \includegraphics[width=\linewidth]{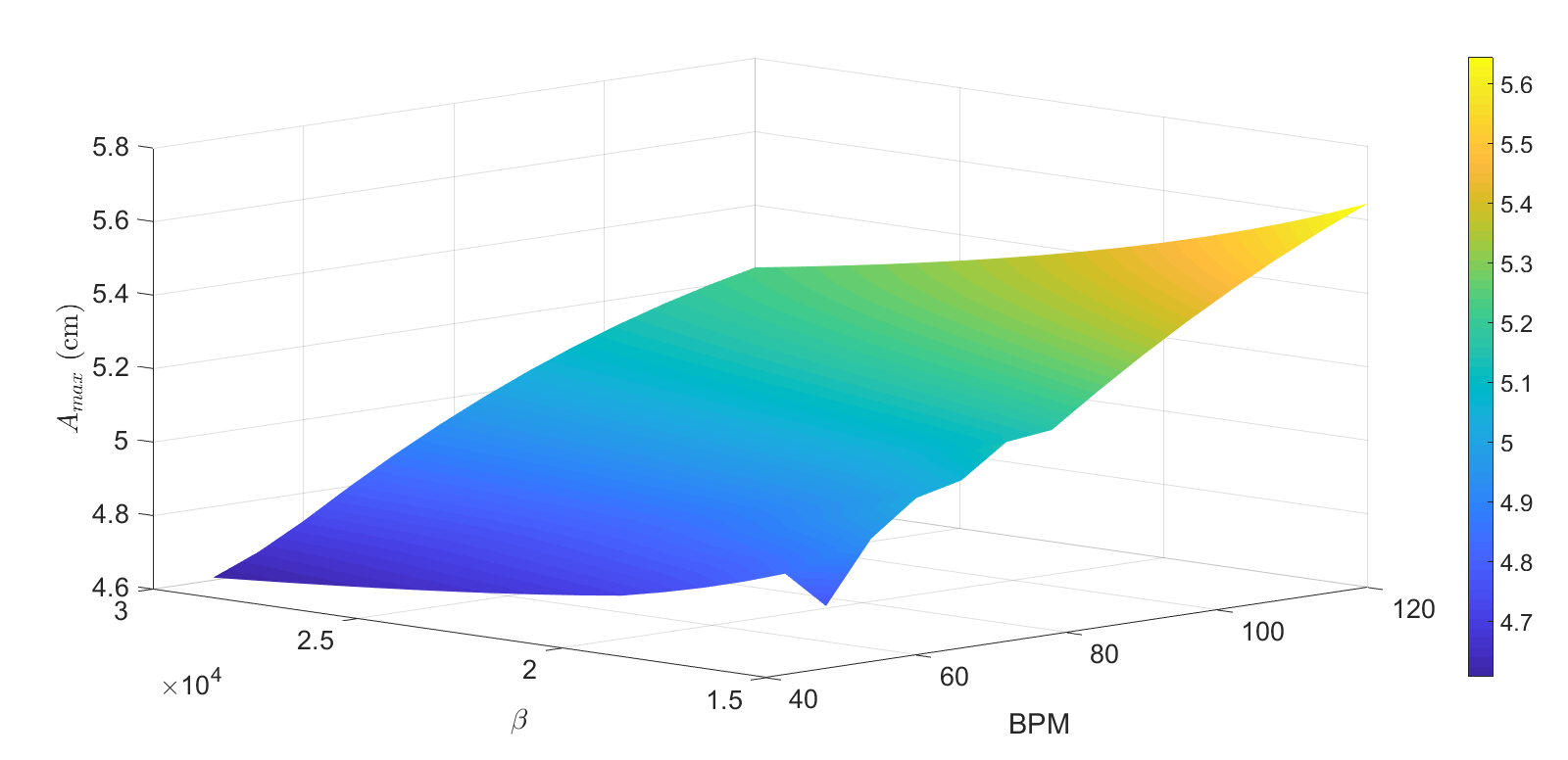}
    \caption{Maximum arterial amplitude (in centimetres) as a function of the BPM and \(\beta\).}
    \label{fig:Amax}
\end{figure}

\begin{figure}[htb]
    \centering
    \includegraphics[width=\linewidth]{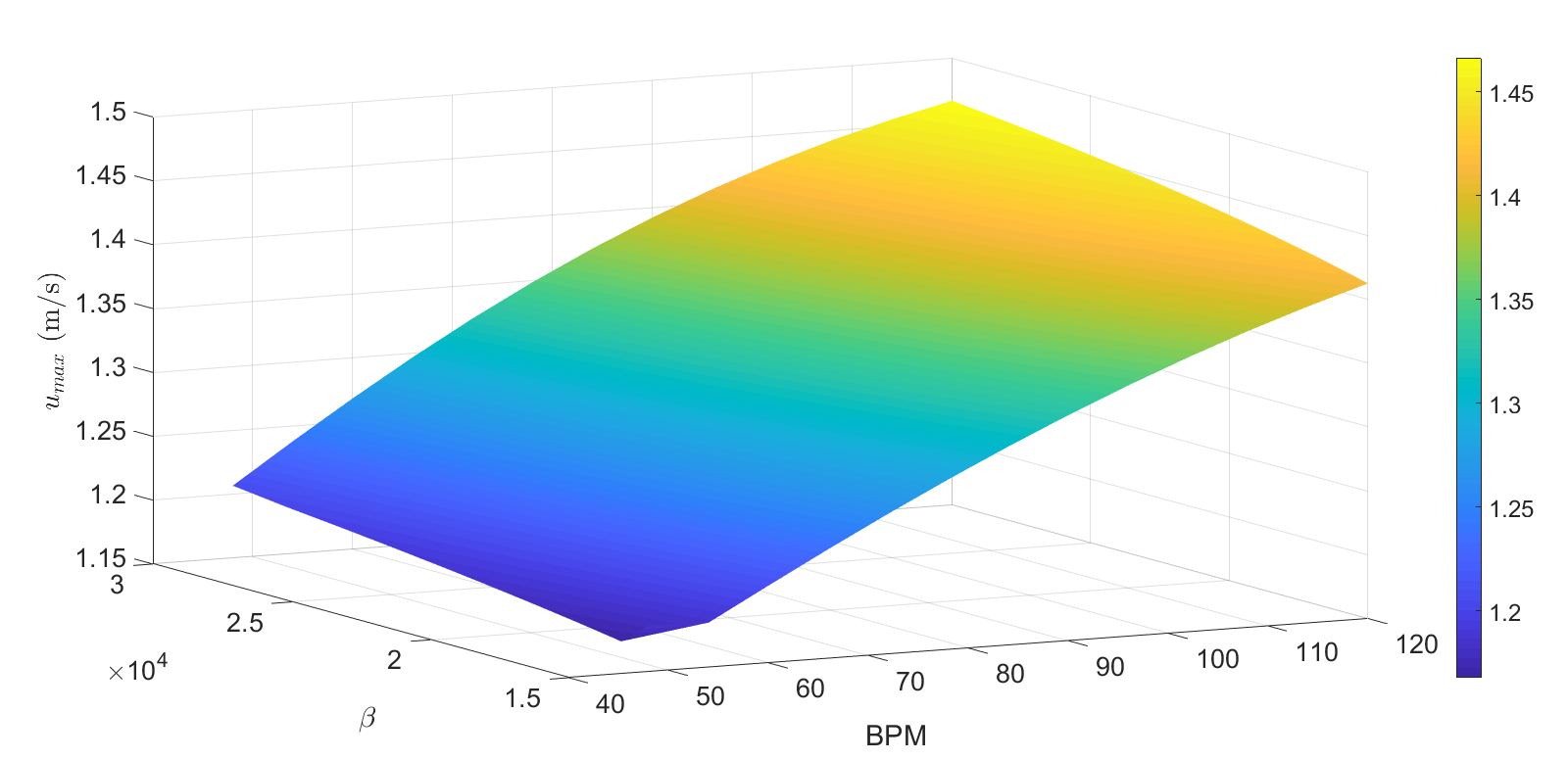}
    \caption{Maximum blood flow velocity (in m/s) as a function of the BPM and \(\beta\).}
    \label{fig:umax}
\end{figure}

Several conclusions are obtained from these plots. The most notorious is maybe the clear growth of both quantities as the heart rate rises. This is what we could expect since the greater is the heart rate, the faster is the blood velocity. Also, if we augment the BPM, the inflow blood acceleration \(\uinf'\) also rises. This is the reason why the walls need to expand. 

Regarding the dependence on \(\beta\), the inverse behaviour is observed between the two quantities. Nevertheless, this also makes sense because the greater is \(\beta\), the more rigid is the wall. This can be achieved by rising  Young's modulus or making more compressible the material, shrinking Poisson's ratio. In both cases, more effort will have to be done to displace the wall, as we observe in figure~\ref{fig:Amax}. In order to conserve the flux, if the amplitude gets smaller, the blood velocity must be greater. Indeed, this is what can be observed in figure~\ref{fig:umax}.

The magnitudes are also coherent with the physical meaning. In the case of the amplitude we see how the greatest difference is about one centimetre (we recall that the rest amplitude was of \(4\) cm). With the velocity, the difference is similar, since the initial velocity was of \(1\) m/s and the values are between \(1.15\) and \(1.45\) m/s.

\chapter{Conclusions and future work}
In this work we have carried out an analysis of a blood flow model in elastic arteries, based on the Navier-Stokes equations. Starting from a short historical survey, we have stated the equations for our model. We have treated them from a more abstract mathematical point of view to extract some results about the arterial behaviour. In the second part of this work we have focused on the numerical method for the simulations, and its deduction. We have ended with a short analysis of this method and with some numerical tests. Next we review the main aspects that have been treated during the text.

 As the reader has been able to see, four paradigms have merge along the whole document:
\begin{itemize}
    \item A small survey point of view. Chapter~\ref{chap:intro} has been presented as a historical background on arterial mechanics, up to \(20\)\textsuperscript{th} century. In subsection~\ref{subsec:tube_laws} the main models for one dimensional artery simulation have been enumerated, situating them chronologically. At the rest of the document, a distinguished amount of references has been cited in order to either justify assumptions or avoid digressions.
    \item Common reasonings in physics. Starting from physical principles such as Newton's second law, we have derived more complex mathematical relations in terms of partial derivative equations, as have been done in chapter~\ref{chap:model}. Once the equations have been established, conclusions have been extracted directly from these equations, such us the conservation of physical quantities.
    \item Mathematical, academic analysis. In chapter~\ref{chap:theo} we have presented more abstract results, suitable in calculus areas. Although, as we specified in the introduction, the objective of the work is the blood simulation, we have considered necessary this part. This analysis has made possible to ensure some necessary conditions for the feasibility of the model, such as the non-collapse of the artery. We recall that in subsection~\ref{subsec:poster} a explicit time and place for the appearance of a shock wave has been obtained. This theorem is an original result of this work based on a slightly different model presented by~\cite{canic03}. This result has been published in \(2017\) in the \textit{4º Congreso de jóvenes investigadores} (IV Conference for young researchers) (see~\cite{rodero17})
    \item Biomedical based numerical performance. Using the data presented in appendix~\ref{appendix:bio} and the numerical scheme of Discontinuous Galerkin (chapter~\ref{chap:galerkin}) some simulations have been performed. Although some wave simulations had been done in~\cite{sherwin03} among others, this kind of simulations have not been performed, to the best of our knowledge.
\end{itemize}

Nevertheless, some lines of future work have been opened and this field of knowledge continues proliferating. The step to three dimensional models has already been made, as in~\cite{formaggia01}. Even though we have not explain it for avoiding complexity, bifurcations are also possible as is explained in~\cite{sherwin03}. Maybe one of the most promising branches is coupling the arterial simulation with fractional calculus, as was illustrated in subsection~\ref{subsec:tube_laws}. This is useful where more viscoelastic behaviours appear, such as in capillaries, aneurysms or simulation of arterial valves. See for example~\cite{doehring05,craiem08,yu16,perdikaris14}. More recent are the works of Perdikaris \textit{et al.} simulating large arterial network using blood flow models together with fractal-tree closures~\cite{perdikaris15}.

Some possible lines of future work are the following:
\begin{enumerate}
    \item Using realistic inflow data. We have used analytic, explicit functions for the inflow data, but maybe real measures are more appropriate for validating the model. Moreover, instead of using averaged data, personalised measures would be interesting for the usefulness of the model. Thus we could predict some diseases or check the consequences of some medical procedures.
    \item Stochastic analysis. In this work we have performed some basic sensitivity analysis of the main parameters of the model. But, as a quick observation in the literature offers, there is a great variability in the values of biophysical parameters. Because of this uncertainty, some stochastic analysis would be interesting, either with the point of view of statistics, or from the random differential equations' point of view.
    \item Characterisation of different diseases. From aneurysm to blood thickening or aorta insufficiency, there are some conditions that could be modelled with blood flow simulations. This could provide some characterisation of such diseases and improving their understanding.
    \item In a more mathematical field, necessary and sufficient conditions for smooth flow. Moreover, if these conditions had biophysical meaning, they would be of a great usefulness.
\end{enumerate}

\appendix
\chapter{Biological parameters}\label{appendix:bio}
In this appendix, following the suggestions of~\cite{sochi13b} we enumerate some biologically-realistic values for the 1D flow model parameters in the context of simulationg blood flow in large vessels.
\begin{enumerate}
    \item Blood mass density (\(\rho\)): \(1050\text{kg}\cdot\text{m}^{-3}\). \cite{formaggia01,smith02,lee08,badia09,avolio80,canic03,smith04,koshiba07,ashikaga08}.
    \item Blood dynamic viscosity (\(\mu\)): \(0.0035\text{N}/(\text{s}\cdot\text{m}^2)\).\cite{formaggia01,smith02,lee08,badia09,avolio80,koshiba07,janela10,alastruey08,formaggia03,antiga02,westerhof06,moura07}
    \item Young's elastic modulus (\(E\)): \(10^5\text{N}/\text{m}^2\). \cite{formaggia01,lee08,badia09,avolio80,janela10,alastruey08,formaggia03,moura07,formaggia06,zhang02,zhang06,levental07,calo08,hunter10,deng94}.
    \item Vessel wall thickness (\(h_0\)): this, preferably, is vessel dependent, \textit{i.e.} a fraction of the vessel radius according to some experimentally-established mathematical relation. For arteries, the typical ratio of wall thickness to inner radius is about \(0.1-0.15\), and this ratio seems to go down in the capillaries and arterioles. Therefor a typical value of \(0.1\) seems reasonable.\cite{formaggia01,lee08,formaggia03,zhang02,podesser98,blondel03,waite07,badia09,broek11}.
    \item Momentum correction factor (\(\alpha\)): assuming Newtonian flow, about \(4/3\) would give a parabolic profile, while with \(1\) we would get a flat profile. An intermediate value, \textit{e.g.} \(1.2\), may be used to account for non-Newtonian shear-thinning effects \cite{smith02,lee08,canic03,formaggia03,sherwin03b,formaggia06,passerini09,formaggia99}.
    \item Poisson's ratio (\(\nu\)): \(0.45\) \cite{formaggia01,badia09,avolio80,janela10,moura07,formaggia06,zhang02,calo08,deng94,sherwin03b,quarteroni01}.
\end{enumerate}
\backmatter

\bibliographystyle{abbrv}
\bibliography{./apendices/database.bib}
\end{document}